\documentclass[journal,twoside,web]{ieeecolor}
\usepackage{generic}
\usepackage{cite}
\usepackage{amsmath,amssymb,amsfonts}
\usepackage{algorithmic}
\usepackage{graphicx}
\usepackage{algorithm,algorithmic}
\usepackage[hidelinks=true]{hyperref}
\usepackage{textcomp}
\usepackage{mathtools}
\usepackage{enumerate}

\usepackage{amsthm}
\newtheorem{lemma}{Lemma}[]
\newtheorem{theorem}{Theorem}[]
\newtheorem{condition}{Condition}[]

\newtheorem{proposition}{Proposition}[section]
\theoremstyle{definition}

\newtheorem{oldremark}{Remark}[section]
\newtheorem{oldexample}{Example}[section]

\newtheorem{assumption}{Assumption}[]
\newtheorem{oldcorollary}{Corollary}[]

\newenvironment{corollary}
{\begin{oldcorollary}\pushQED{\qed}}
	{\popQED\end{oldcorollary}}
\newenvironment{remark}
{\begin{oldremark}\pushQED{\qed}}
	{\popQED\end{oldremark}}
\newenvironment{example}
{\begin{oldexample}\pushQED{\qed}}
	{\popQED\end{oldexample}}
\newenvironment{exampleEndEq}
{%
	\begin{oldexample}\pushQED{\qed}%
	}
{\popQED\end{oldexample}}

\usepackage{tikz}
\usetikzlibrary{shapes,arrows}
\usetikzlibrary{arrows,calc,positioning}
\tikzset{
	block/.style = {draw, rectangle,
		minimum height=1.2cm,
		minimum width=1.2cm},
	input/.style = {coordinate,node distance=1cm},
	output/.style = {coordinate,node distance=2cm},
	arrow/.style={draw, -latex,node distance=2cm},
	pinstyle/.style = {pin edge={latex-, black,node distance=1cm}},
	sum/.style = {draw, circle, node distance=1cm},
}
\definecolor{backgreen}{HTML}{E9F3DF}
\definecolor{backblue}{HTML}{DAE5EC}
\definecolor{backpurple}{HTML}{E5E0E8}
\definecolor{backorange}{HTML}{F2E9DF}
\definecolor{lightergray}{gray}{0.9}

\usepackage[framemethod=TikZ]{mdframed}
\mdfdefinestyle{callout}{%
	linecolor=black,
	linewidth=1pt,%
	roundcorner=0pt,
	innertopmargin=7pt,
	innerbottommargin=7pt,
	innerrightmargin=7pt,
	innerleftmargin=7pt,
	leftmargin = 0pt,
	rightmargin = 0pt,
	backgroundcolor=lightergray
}

\def\BibTeX{{\rm B\kern-.05em{\sc i\kern-.025em b}\kern-.08em
    T\kern-.1667em\lower.7ex\hbox{E}\kern-.125emX}}
\usepackage{dsfont}

\usepackage{dirtytalk}

\usepackage{endnotes}

\newcommand{\revm}[1]{\textcolor{magenta}{#1}}

\definecolor{cadetblue}{rgb}{0.33, 0.41, 0.58}

\DeclareMathOperator{\argmin}{\mathrm{argmin}}

\DeclareMathOperator{\EV}{\mathds{E}}

\DeclareMathOperator{\col}{\mathrm{col}}
\DeclareMathOperator{\diag}{\mathrm{diag}}
\DeclareMathOperator{\rank}{\mathrm{rank}}
\DeclareMathOperator{\row}{\mathrm{row}}

\DeclareMathAlphabet{\doublestruck}{U}{BOONDOX-ds}{m}{n}
\newcommand{\ones}[0]{\mathds{1}}
\newcommand{\zeros}[0]{\doublestruck{0}}

\newcommand{\R}[0]{\mathbb{R}}

\newcommand{\Rnn}[0]{\mathbb{R}_{\geq0}}

\newcommand{\Ecal}[0]{\mathcal{E}}

\newcommand{\Pcal}[0]{\mathcal{P}}

\usepackage{xstring}

\renewcommand{\zeros}{\mathbf{0}}
\renewcommand{\ones}{\mathbf{1}}

\begin{document}
\title{Overlapping Covariance Intersection:\\Fusion with Partial Structural Knowledge of Correlation from Multiple Sources}
\author{Leonardo Pedroso, Pedro Batista, W.P.M.H. (Maurice) Heemels \vspace{-0.5cm}
\thanks{This work was supported in part by LARSyS FCT funding (DOI: {\small \href{https://doi.org/10.54499/LA/P/0083/2020}{\texttt{10.54499/LA/P/0083/2020}}, \href{https://doi.org/10.54499/UIDP/50009/2020}{\texttt{10.54499/UIDP/ 50009/2020}}}, and {\small \href{https://doi.org/10.54499/UIDB/50009/2020}{\texttt{10.54499/UIDB/50009/2020}}}).}
\thanks{L.~Pedroso and W.P.M.H.~Heemels are with the Control Systems Technology section, Department of Mechanical Engineering, Eindhoven University of Technology, The Netherlands (e-mail: \{l.pedroso,w.p.m.h.heemels\}@tue.nl).}
\thanks{L.~Pedroso and P.~Batista are with the Institute for Systems and Robotics, Instituto Superior T\'ecnico, Universidade de Lisboa, Portugal (e-mail: pbatista@isr.tecnico.ulisboa.pt).}}

\maketitle

\begin{abstract}
	
Emerging large-scale engineering systems rely on distributed fusion for situational awareness, where agents combine noisy local sensor measurements with exchanged information to obtain fused estimates. However, at the sheer scale of these systems, tracking cross-correlations becomes infeasible, preventing the use of optimal filters. Covariance intersection (CI) methods address fusion problems with unknown correlations by minimizing worst-case uncertainty based on available information. Existing CI extensions exploit limited correlation knowledge but cannot incorporate structural knowledge of correlation from multiple sources, which naturally arises in distributed fusion problems. This paper introduces Overlapping Covariance Intersection (OCI), a generalized CI framework that accommodates this novel information structure. We formalize the OCI problem and establish necessary and sufficient conditions for feasibility. We show that a family-optimal solution can be computed efficiently via semidefinite programming, enabling real-time implementation. The proposed tools enable improved fusion performance for large-scale systems while retaining robustness to unknown correlations.

\end{abstract}

\begin{IEEEkeywords}
Covariance Intersection, distributed estimation, multisensor data fusion, partial knowledge of correlation.
\end{IEEEkeywords}


\section{Introduction}

Emerging large-scale engineering systems are composed of a large number of agents that interact in an environment to cooperatively achieve a goal. In these settings, each agent has access to noisy data from multiple sensors and from communication with other agents that must be fused to provide a good estimate of the required quantities, for instance, for situational awareness. Examples are mega-constellations of satellites, which rely on absolute position and relative measurements from GNSS receivers and communication to estimate their position  \cite{FergusonHow2003,PedrosoBatista2023DistributedEKF}; vehicle-to-everything networks, where autonomous vehicles obtain data from local sensors and from communication with infrastructure to estimate their position and the position of other vehicles and pedestrians \cite{QiuQiuEtAl2019}. 

Due to the sheer dimension of these systems, it is infeasible to keep track of the correlation between all measurements, which prevents the use of well-known optimal (centralized) filtering solutions \cite{AndersonMoore1979}. As a result, these systems fall into the class of \emph{ultra large-scale systems}, which, by definition, are control systems whose design cannot be feasibly carried out in a centralized manner \cite{PedrosoBatistaEtAl2025ULSS}. Therefore, a distributed fusion approach is required. However, if unknown correlations are ignored, the estimation performance is degraded significantly and, worse, each agent computes deceivingly tighter estimation error bounds than the ground-truth bounds, which can lead to dire consequences. This effect is commonly known as \emph{double-counting} and is showcased in \cite{PanzieriPascucciSetola2006,ChangChongEtAl2010}.

\begin{mdframed}[style=callout]
	\emph{Covariance intersection} (CI)  tools have been devised for the past quarter century to avoid double-counting \cite{ForslingNoackEtAl2024}. These techniques aim at fusing estimates from multiple sources whose correlation is (partially or totally) unknown. The fusion procedure is designed to minimize the worst-case uncertainty of the fused estimate under the available information.
\end{mdframed}

The basic CI setting assumes an information structure whereby the correlation between estimates is totally unknown. It was first developed in \cite{Uhlmann1996,Julier1997}. The review paper \cite{ForslingNoackEtAl2024} provides a recent survey on progress and applications of this method. In the basic setting, when only two estimates are fused, there is a tractable optimal CI method \cite{ReinhardtNoackEtAl2015}, but when dealing with multiple estimates \cite{Uhlmann2003} tractable CI tools do not typically produce an optimal fusion scheme \cite{AjglStraka2018}.


Naturally, partial information about correlations between measurements can significantly improve fusion performance \cite{HanebeckBriechleEtAl2001}. Such partial information can either be inferred from fundamental physical properties of the system or tracked by the agents. Several works in the literature deal with generalizing CI tools to different information structures with partial knowledge about correlation. These are briefly surveyed in Section~\ref{sec:sota}. 

In this paper, we introduce an information structure that has not been analyzed in the CI literature and that stems from the distributed fusion problem over emerging ultra-large scale systems \cite{PedrosoBatistaEtAl2025ULSS}. In Section~\ref{sec:motivating_example}, we introduce the information structure addressed in this paper resorting to a motivating example of a toy cooperative localization problem. Then, in Section~\ref{sec:sota}, the proposed information structure is compared with others previously studied in the literature.

\subsection{Motivating Example}\label{sec:motivating_example}

As a motivating example, consider a cooperative localization toy problem of a team of vehicles. For simplicity, each vehicle $j$ is characterized by a scalar position, denoted by $x_j^k \in \R$ at discrete-time instant $k$. The position of vehicle $j$ evolves according to a discrete-time zero-mean random-walk model with variance $Q_j>0$, which is uncorrelated between different vehicles. Formally, $x_j^{k+1} = x_j^k+d_j^k$, where $d_j^k$ is the drift of the random-walk of vehicle $j$ at time $k$.
The goal is to devise a dynamic distributed filtering solution whereby each vehicle $j$ computes an unbiased estimate of its absolute position at each time instant $k$, which is denoted by $\hat{x}_j^{k} \in \R$. For that, it relies on the unbiased estimate at the previous discrete-time instant, i.e., $\hat{x}_j^{k-1}$, and on relative position measurements w.r.t.\ other neighboring vehicles at time $k$, which are defined in what follows. We denote the estimation error of the position of each generic vehicle $j$ at time $k-1$ by $\tilde{x}^{k-1}_j := \hat{x}^{k-1}_j-x^{k-1}_j$, which is zero-mean because $\hat{x}_j^{k-1}$ is assumed to be unbiased. In this example, we analyze for simplicity a single fusion instance at time $k$ of the vehicle $i$ that is depicted in Fig.~\ref{fig:motivating_example}. Vehicles $p$ and $q$, with respect to which vehicle $i$ gets relative position measurements, are also depicted in Fig.~\ref{fig:motivating_example}.


First, with the information about the dynamical model, vehicle $i$ can make a prediction $z^{k}_{i} := \hat{x}^{k-1}_i$ of its position $x^k_i$ at time $k$. The prediction error $z^{k}_{i}-x^k_i$ is zero-mean because
\begin{equation}\label{eq:eg_pred}
	\begin{split}
		z^{k}_{i} = \hat{x}^{k-1}_i =  x^{k-1}_i +\tilde{x}^{k-1}_i = x_i^{k} + (\tilde{x}^{k-1}_i-d^{k-1}_i)
	\end{split}
\end{equation}
and $\tilde{x}^{k-1}_i$ and $d^{k-1}_i$ are zero-mean.
Second, vehicle $i$, in particular, communicates and has access to relative measurements w.r.t.\ to two other neighboring vehicles $p$ and $q$, as depicted in Fig.~\ref{fig:motivating_example}. A relative measurement w.r.t.\ vehicle $p$ at time $k$ is given by $y^{k}_{i,p} := x_p^{k}-x_i^{k} +e^{k}_{i,p}$, where $e^{k}_{i,p}$ is zero-mean measurement noise. Such relative measurement provides an estimate ${z^{k}_{i,p} := \hat{x}^{k-1}_p - y^{k}_{i,p}}$ of $x_i^{k}$, where $\hat{x}_p^{k-1}$ is transmitted from vehicle $p$ to $i$. The error $z^{k}_{i,p} -x_i^{k}$ is zero-mean because
\begin{equation}\label{eq:eg_update}
	\begin{split}
		z^{k}_{i,p} &= \hat{x}^{k-1}_p - y^{k}_{i,p} = x^{k-1}_p +\tilde{x}^{k-1}_p -(x_p^{k} - x_i^{k} + e^{k}_{i,p})  \\
		& = x_p^{k} -d^{k-1}_p+\tilde{x}^{k-1}_p - x_p^{k} + x_i^{k} - e^{k}_{i,p} \\
		& =x_i^{k} + (\tilde{x}_p^{k-1} -d_p^{k-1}-e^{k}_{i,p}),
	\end{split}
\end{equation}
and $\tilde{x}_p^{k-1}$,  $d_p^{k-1}$, and $e^{k}_{i,p}$ are zero-mean. Similarly, a relative measurement $y^{k}_{i,q}$ w.r.t.\ vehicle $q$ provides an estimate $z^{k}_{i,q} := \hat{x}^{k-1}_q - y^{k}_{i,q} = x_i^{k} +(\tilde{x}^{k-1}_q -d^{k-1}_q-e^{k}_{i,q})$.


To obtain a fused estimate of $x_i^{k}$, vehicle $i$ uses a linear filter relying on the information from the predicted estimate and the two relative sensor measurements, i.e., $\hat{x}^{k}_i = K^{k}_{i,i} z^{k}_{i}+ K^{k}_{i,p} z^{k}_{i,p} + K^{k}_{i,q} z^{k}_{i,q}$. This filter is unbiased, i.e., $\EV[\hat{x}^{k}_i] = x_i^{k}$, if and only if $K^{k}_{i,i}+ K^{k}_{i,p} + K^{k}_{i,q}  = 1$. Define $\mathbf{z}^{k}_i := [z^{k}_{i}\;z^{k}_{i,p}\;z^{k}_{i,q}]^\top$, $\mathbf{e}^k_i := [z^{k}_{i}\!-\!x_i^{k}\;\;z^{k}_{i,p}\!-\!x_i^{k}\;\;z^k_{i,q}\!-\!x_i^{k}]^\top$, $\tilde{\boldsymbol{\chi}}_i^{k-1} := [\tilde{x}^{k-1}_p \; \tilde{x}_i^{k-1} \; \tilde{x}_q^{k-1}]^\top$, and $\mathbf{K}^k_i = [K^k_{i,i} \; K^k_{i,p} \; K^k_{i,q} ]$.\footnote{Notice that a non-standard concatenation order is used in the definition of $\tilde{\boldsymbol{\chi}}_i^{k-1}$ for the clarity of the graphical representation of the information structure in Fig.~\ref{fig:structural_info}.} The variance of the estimation error of the fused estimate is given by ${\EV[(\hat{x}_i^{k}-x_i^k)^2]} = {\mathbf{K}_i^k\EV[\mathbf{e}^k_i\mathbf{e}_i^{k\top}]\mathbf{K}_i^{k\top}}$. From \eqref{eq:eg_pred} and \eqref{eq:eg_update}, one can write
\begin{equation}\label{eq:E_ee_motivating_example}
	\begin{split}
	\EV[\mathbf{e}^k_i\mathbf{e}_i^{k\top}] & = \begin{bmatrix}Q_i & \mathbf{0} \\ \mathbf{0}&\mathbf{R}_i^k + \diag(Q_p,Q_q) \end{bmatrix}\\  &+ \begin{bmatrix} 0 & 1 & 0 \\ 1 & 0 & 0\\ 0 & 0 & 1 \end{bmatrix} \EV[\tilde{\boldsymbol{\chi}}_i^{k-1} \tilde{\boldsymbol{\chi}}_i ^{k-1\top}]\begin{bmatrix} 0 & 1 & 0 \\ 1 & 0 & 0\\ 0 & 0 & 1 \end{bmatrix}^\top,
	\end{split}
\end{equation}
where $\mathbf{R}^k_i = \EV[[e^k_{i,p} \;e^k_{i,q}][e^k_{i,p} \;e^k_{i,q}]^\top]$ is the covariance of the relative measurement noise. The goal for vehicle $i$ is to design the gain $\mathbf{K}_i^k$ such that $\EV[(\hat{x}_i^{k}-x_i^k)^2]$ is minimized.

\begin{mdframed}[style=callout]
	Crucially,  $\EV[\tilde{\boldsymbol{\chi}}_i^{k-1} \tilde{\boldsymbol{\chi}}_i ^{k-1\top}]$ is \emph{not exactly known} to vehicle $i$. Instead, partial information about  $\EV[\tilde{\boldsymbol{\chi}}_i^{k-1} \tilde{\boldsymbol{\chi}}_i ^{k-1\top}]$ is known, and one desires to minimize the worst-case ${\EV[(\hat{x}_i^{k}-x_i^k)^2]}$ under the available information. CI tools have been devised for such filter design problems under different flavors of partial knowledge of correlation between the available estimates.
\end{mdframed}

In the context of this example, we assume that each vehicle $j$ keeps track of the estimate of their own position only and an upper bound on the joint estimation error covariance of the position estimates of vehicle $j$ and its neighbors. In this running example: (i)~vehicle $i$ keeps and updates $\hat{x}_i^{k-1}$ and an upper bound  $\EV[\tilde{\boldsymbol{\chi}}_i^{k-1} \tilde{\boldsymbol{\chi}}_i ^{k-1\top}] \preceq \mathbf{X}^{k-1}_i$; (ii)~vehicle $p$ keeps and updates $\hat{x}_p^{k-1}$ and an upper bound   $\EV[\tilde{\boldsymbol{\chi}}_p^{k-1} \tilde{\boldsymbol{\chi}}_p ^{k-1\top}]\preceq \mathbf{X}_p^{k-1}$, where $\tilde{\boldsymbol{\chi}}^{k-1}_p := [\tilde{x}_i^{k-1} \; \tilde{x}_p^{k-1}]^\top$; and so forth. If the bounds $\mathbf{X}_p^{k-1}$ and $\mathbf{X}_q^{k-1}$ are transmitted from $p$ and $q$, respectively, to $i$, then vehicle $i$ has access to three bounds on principal submatrices of $\EV[\tilde{\boldsymbol{\chi}}_i^{k-1} \tilde{\boldsymbol{\chi}}_i ^{k-1\top}]$. The structure of the bounds $\mathbf{X}_p^{k-1}$ and $\mathbf{X}_q^{k-1}$ on $\EV[\tilde{\boldsymbol{\chi}}_i^{k-1} \tilde{\boldsymbol{\chi}}_i ^{k-1\top}]$ is depicted in Fig.~\ref{fig:structural_info}(v), the bound $\mathbf{X}_i^{k-1}$ would be represented by a bound on the whole $\EV[\tilde{\boldsymbol{\chi}}_i^{k-1} \tilde{\boldsymbol{\chi}}_i ^{k-1\top}]$ and is omitted for the sake of clarity of the figure. The goal for vehicle $i$ is to design: (i)~$\mathbf{K}^k_i$ that minimizes the worst-case second moment of the estimation error of the fused estimate, under knowledge of the structural bounds $\mathbf{X}_i^{k-1}$, $\mathbf{X}_p^{k-1}$, and $\mathbf{X}_q^{k-1}$; and (ii)~compute a consistent bound $\EV[\tilde{\boldsymbol{\chi}}_i^{k} \tilde{\boldsymbol{\chi}}_i ^{k\top}] \preceq \mathbf{X}^{k}_i$ at time $k$.

\begin{figure}[t]
	\centering
	\includegraphics[width = 0.8\linewidth]{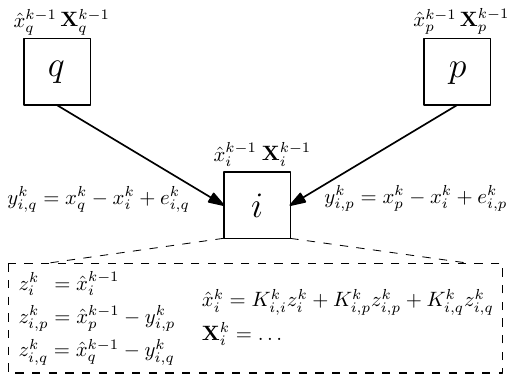}	
	\caption{Scheme of the cooperative localization toy problem.}\label{fig:motivating_example}
\end{figure}

The aforementioned information structure for the fusion problem at vehicle $i$ includes partial structural information about cross-correlations of $\EV[\tilde{\boldsymbol{\chi}}_i^{k-1} \tilde{\boldsymbol{\chi}}_i ^{k-1\top}]$, as depicted in Fig.~\ref{fig:structural_info}(v). In contrast, state-of-art approaches to distributed fusion in this toy problem involve receiving only bounds on $\EV[\tilde{x}_p^{k-1} \tilde{x}_p^{k-1\top}]$ and $\EV[\tilde{x}_q^{k-1} \tilde{x}_q^{k-1\top}]$ from vehicles $p$ and $q$, respectively \cite{LiNashashibiEtAl2013}. These bounds only provide information about the autocorrelations of $\EV[\tilde{\boldsymbol{\chi}}_i^{k-1} \tilde{\boldsymbol{\chi}}_i ^{k-1\top}]$, as depicted in Fig.~\ref{fig:structural_info}(ii). Therefore, the novel information structure proposed in this paper has the potential for better distributed fusion performance by making use of partial information about cross-correlations.

\subsection{State-of-the-art}\label{sec:sota}

CI intersection tools have been developed in the literature to address the following information structures, which are schematically represented in Fig.~\ref{fig:structural_info}:
\begin{enumerate}[(i)]
	\item Basic covariance intersection (CI): The correlation between estimates is fully unknown and only \emph{bounds} on the autocorrelation of each estimate are known. After the seminal works \cite{Uhlmann1996} and \cite{Julier1997}, several different approaches have been proposed to address this basic setting (e.g. \cite{ChenArambelEtAl2002}).
	\item Split CI (SCI): The estimates contain an independent component and a (possibly) correlated component. \emph{Bounds} on the independent component are known (matrix on the left in~Fig.~\ref{fig:structural_info}(ii)). Correlation between the correlated components of different estimates is fully unknown, and \emph{bounds} on the autocorrelation of the correlated components of the estimates are known (bounds on matrix on the right in~Fig.~\ref{fig:structural_info}(ii)) \cite{LiNashashibiEtAl2013,WanasingheEtAl2014,JulierUhlmann2017}. SCI was also recently extended to capture known cross-correlations \cite{CrosAmblardEtAl2025}. 
	\item Correlation Coefficient CI (CCCI): A \emph{bound} on the autocorrelation of each estimate is known and a \emph{bound} on a scalar correlation coefficient is also known (e.g., Pearson’s correlation
	coefficient) \cite{HanebeckBriechleEtAl2001,ReeceRoberts2005,WuCaiEtAl2018}.
	\item Partitioned CI (PCI): The autocorrelation of each estimate and the correlation between some pairs of estimates are \emph{exactly} known. The correlation between other pairs is completely unknown \cite{PetersenBeyer2011,AjglStraka2019,AjglStraka2022}.
\end{enumerate}
Motivated by the toy cooperative localization problem in Section~\ref{sec:motivating_example}, we introduce a novel information structure:
\begin{enumerate}[(i)]
	\setcounter{enumi}{4}
	\item Overlapping CI (OCI): Multiple \emph{bounds} on components of the joint estimation error covariance matrix are known (two bounds are represented in Fig.~\ref{fig:structural_info}(v)). The components that are affected by the multiple available bounds may overlap, which justifies the term coined for this structure. In Fig.~\ref{fig:structural_info}(v), the two illustrative bounds are on principal  submatrices, but that need not be the case, as it will be discussed further in Section~\ref{sec:prob_form}. Notice that this is the information structure that we obtain in the toy cooperative localization problem in Section~\ref{sec:motivating_example}. Specifically, bounds $\mathbf{X}_i^{k-1}$, $\mathbf{X}_p^{k-1}$, and $\mathbf{X}_q^{k-1}$ are available on principal submatrices of $\EV[\tilde{\boldsymbol{\chi}}_i^{k-1} \tilde{\boldsymbol{\chi}}_i ^{k-1\top}]$.
\end{enumerate}

\begin{mdframed}[style=callout]
	To the best of the authors' knowledge, the distributed fusion problem under the OCI information structure has not been previously addressed in the literature. It is encountered in the motivating example and it is exemplary of many cooperative localization problems for the next generation of ultra large-scale engineering systems \cite[Section~5]{PedrosoBatistaEtAl2025ULSS}.
\end{mdframed}

\begin{figure}[t]
	\centering
	\includegraphics[width = 0.9 \linewidth]{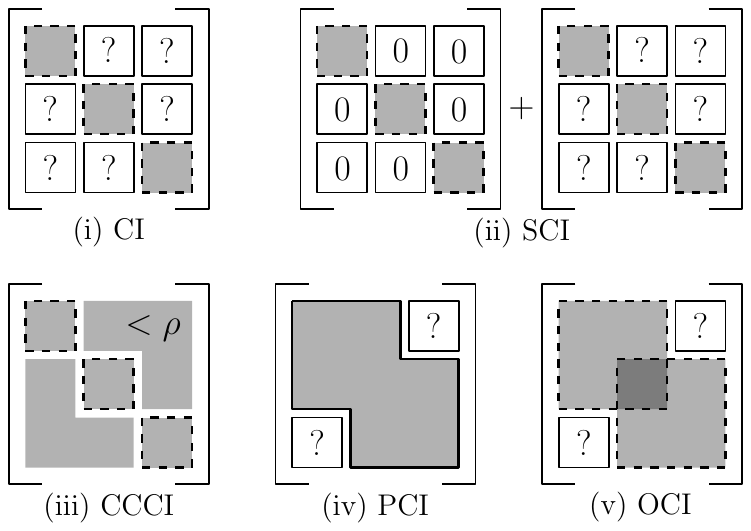}
	\caption{Comparison of the estimation error covariance matrix with three estimates for distinct partial information structures. Dashed contours represent knowledge of bounds and solid contours represent exact knowledge.}\label{fig:structural_info}
\end{figure}

Standard CI approaches are not appropriate to address the OCI problem. Indeed, in Section~\ref{sec:sota_apprch}, we apply a state-of-the-art approach (described, e.g., in \cite{AjglStraka2018}) that is common for CI problems and we analyze its shortcomings.

The PCI problem is the closest in the literature to the OCI problem, since both have \emph{structural knowledge} about the joint estimation error. However, their scope is very different. First, PCI requires \emph{exact} knowledge about element-wise correlation between pairs of estimates, whereas OCI only requires \emph{bounds}. Second, the structural knowledge of the OCI problem is compatible with correlation information that is provided from multiple sources and may overlap as a result, whereas the PCI problem is not. Third, the PCI problem is limited to handling bounds on principal components of the joint estimation error covariance matrix, whereas the OCI problem is not. Moreover, the techniques employed to approach the PCI problem are very distinct to those used in this paper to address the OCI problem.

\subsection{Contributions}
The main contributions of this paper are twofold:
\begin{enumerate}[(a)]
\item We introduce a distributed fusion problem with \emph{partial structural knowledge of correlation}, which we call OCI, that has not been addressed previously in the literature.
\item The problem is analyzed in depth. We establish necessary and sufficient conditions on the available information to ensure the feasibility of the problem. We express a family-optimal solution to the problem as a semidefinite program (SDP), which is computationally tractable and suitable for real-time implementation.
\end{enumerate}

\subsection{Notation}
Throughout this paper, the $n\times n$ identity, $n\times m$ null, and $n\times m$ ones matrices are denoted by $\mathbf{I}_n$, $\zeros_{n\times m}$, and $\ones_{n\times m}$, respectively. When clear from context, the subscripts will be dropped to streamline notation.
The sets of $n\times n$ real symmetric positive semidefinite and positive definite matrices are denoted by $S^n_+$ and  $S^n_{++}$, respectively. Moreover, $\mathbf{P} \succ \zeros$ ($\mathbf{P}\succeq \zeros$) denotes that the symmetric matrix $\mathbf{P} \in \R^{n\times n}$ is positive definite (semidefinite) and $\mathbf{P} \succ \mathbf{Q}$ ($\mathbf{P} \succeq \mathbf{Q}$) denotes that the symmetric matrix $\mathbf{P}-\mathbf{Q} \in \R^{n\times n}$  is positive definite (semidefinite). Given a matrix $\mathbf{A}\in \R^{n\times m}$, $\mathbf{A}^+$ denotes the Moore-Penrose inverse of $\mathbf{A}$ \cite[Chap.~1.6]{Zhang2005}, and $\col{\mathbf{A}} \subseteq \R^{n}$, $\mathrm{row}\,\mathbf{A} \subseteq \R^{m}$, and $\ker{\mathbf{A}} \subseteq \R^{m}$ denote the column space, row space, and kernel of $\mathbf{A}$, respectively. Given a linear subspace $K$ of  $\R^n$, $K^\perp$ denotes the orthogonal complement of $K$ and $\dim{K}$ its dimension.

\section{Problem Formulation}\label{sec:prob_form}

The objective is to estimate a state $\mathbf{x} \in \R^n$ based on the availability of $N$ partial estimates $\mathbf{z}_i = \mathbf{H}_i\mathbf{x}+\mathbf{e}_i$ with $i = 1,2,\ldots,N$ (which come from multiple sources). Here, each $\mathbf{H}_i$ is a known matrix and $\mathbf{e}_i$ is zero-mean random noise. Concatenating all estimates in a single vector $\mathbf{z} := [\mathbf{z}_1^\top,\ldots,\mathbf{z}_N^\top]^\top$, one can write $\mathbf{z} = \mathbf{H}\mathbf{x} + \mathbf{e}$, where $\mathbf{H} := [\mathbf{H}_1^\top \;\; \cdots \;\; \mathbf{H}_N^\top]^\top \in \R^{o\times n}$ and $\mathbf{e} := [\mathbf{e}_1^\top,\ldots,\mathbf{e}_N^\top]^\top$. The second moment of $\mathbf{e}$ is denoted by $\EV[\mathbf{e}\mathbf{e}^\top]$. Naturally, we assume that $\mathbf{z}$ is rich enough to provide an unbiased estimate of $\mathbf{x}$, i.e., that $\mathbf{H}$ has full column rank. We are interested in designing a linear fusion law that provides an unbiased estimator $\mathbf{\hat{x}} = \mathbf{K}\mathbf{z}$, where $\mathbf{K}\in \R^{n\times o}$ is the fusion gain. Furthermore, $\mathbf{\hat{x}}$ is unbiased if $\EV[\mathbf{\hat{x}}] = \mathbf{x}$, which is equivalent to enforcing $\mathbf{K}\mathbf{H} = \mathbf{I}$ when designing $\mathbf{K}$.

Crucially, we consider a case where $ \EV[\mathbf{e}\mathbf{e}^\top]$ is not exactly known and has the form 
\begin{equation}\label{eq:E_ee}
	\EV[\mathbf{e}\mathbf{e}^\top] = \mathbf{R} + \mathbf{C}\mathbf{P}\mathbf{C}^\top,
\end{equation}
where $\mathbf{R}\in S^o_{++}$ and $\mathbf{C} \in \R^{o\times m}$ are known and $\mathbf{P}\in S^m_{++}$ is not exactly known. Specifically, the information structure addressed in this paper, assumes knowledge of $M$ bounds on components of $\mathbf{P}$, which come from the sources of the partial estimates. Each bound  $b \in \{1,2, \ldots,M\}$ is written as $\mathbf{W}_b\mathbf{P}\mathbf{W}_b^\top \preceq \mathbf{X}_b$, where $\mathbf{W}_b \in \R^{o_b\times m}$ and $\mathbf{X}_b \in S_{++}^{o_b}$.  One can now define the set of admissible matrices $\mathbf{P}$ given the known bounds as
\begin{equation}\label{eq:Pcal_def}
	\Pcal := \{\mathbf{P} \in S_{++}^m \!:\! \mathbf{W}_b\mathbf{P}\mathbf{W}_b^\top \!\preceq \mathbf{X}_b \;\forall b \in\! \{1,2,\ldots,M\} \}.
\end{equation}

\vspace{0.4cm}
\begin{exampleEndEq}\label{eg:bound_coop}
	Notice that this information structure is a generalization of the one that arises from the cooperative localization toy problem introduced in Section~\ref{sec:motivating_example}. Indeed, there are three estimates available, $z^k_i$, $z^k_{i,p}$, and $z^k_{i,p}$ and $\mathbf{H}_1 = \mathbf{H}_2 = \mathbf{H}_3 = 1$. Moreover, $\EV[\mathbf{e}_i^k\mathbf{e}_i^{k\top}]$ in \eqref{eq:E_ee_motivating_example} has the form of \eqref{eq:E_ee}, with
	\begin{equation*}
		\mathbf{R} = \begin{bmatrix}Q_i & \mathbf{0} \\ \mathbf{0}&\mathbf{R}_i^k + \diag(Q_p,Q_q) \end{bmatrix} \quad \text{and} \quad \mathbf{C} = \begin{bmatrix} 0 & 1 & 0 \\ 1 & 0 & 0\\ 0 & 0 & 1 \end{bmatrix}. 
	\end{equation*}
	Typically, $\mathbf{R}$ represents process and/or sensor noise and $\mathbf{C}$ represents how the uncertainty described by $\mathbf{P}$ shapes the error of the output $\mathbf{z}$. The three bounds on  $\mathbf{P} = \EV[\tilde{\boldsymbol{\chi}}_i^{k-1} \tilde{\boldsymbol{\chi}}_i ^{k-1\top}]$ have the form $\mathbf{W}_b\mathbf{P}\mathbf{W}_b^\top \preceq \mathbf{X}_b$, where 
	\begin{equation*}
		\mathbf{W}_1 = \mathbf{I}_3, \; 	\mathbf{W}_2 = \begin{bmatrix} 1 & 0&  0\\ 0 & 1 & 0 \end{bmatrix}, \; \mathbf{W}_3 = \begin{bmatrix} 0 & 1&  0\\ 0 & 0 & 1 \end{bmatrix}.\\\qedhere
	\end{equation*}
\end{exampleEndEq}
\vspace{0.4cm}

The estimation error is a random vector denoted by $\mathbf{\tilde{x}} := \mathbf{\hat{x}}-\mathbf{x}$. Given the constraint $\mathbf{K}\mathbf{H} = \mathbf{I}$ on the gain, one can write $\mathbf{\tilde{x}} =\mathbf{K}\mathbf{z}-\mathbf{x} = \mathbf{K}(\mathbf{H}\mathbf{x}+\mathbf{e})-\mathbf{x} = \mathbf{K}\mathbf{e}$ and $\EV[\mathbf{\tilde{x}}\mathbf{\tilde{x}}^\top] = \mathbf{K}	\EV[\mathbf{e}\mathbf{e}^\top] \mathbf{K}^\top$. The goal is to design a gain $\mathbf{K}$ that optimizes an upper bound on the worst-case second moment of the estimation error for all admissible $\mathbf{P}\in \Pcal$. Formally, the goal is to solve the optimization problem
\begin{equation}\label{eq:OCI_orig_prob}
	\begin{aligned}
		&\min_{\substack{\mathbf{K}\in \R^{n\times o}, \mathbf{B}\in S^n_+}}  && J(\mathbf{B})\\
		&\quad\quad\; \mathrm{s.t.} &&  \mathbf{K}\mathbf{H} = \mathbf{I}\\  
		&&&	\mathbf{B} \succeq \mathbf{K}(\mathbf{R} + \mathbf{C}\mathbf{P}\mathbf{C}^\top)\mathbf{K}^\top,\; \forall \mathbf{P} \in \Pcal,
	\end{aligned}
\end{equation}
where $J: S^n_+ \to \Rnn$ is any optimality criterion that satisfies the following monotonicity condition.

\begin{assumption}\label{ass:J}
	Given $\mathbf{X},\mathbf{Y} \in S^n_+$, the map $J: S^n_+ \to \R$ is such that $\mathbf{X} \succ \mathbf{Y} \implies J(\mathbf{X}) > J(\mathbf{Y})$.
\end{assumption}

This monotonicity assumption on $J$ is very mild. Intuitively, let $\mathbf{B}_1$ and $\mathbf{B}_2$ be error covariance matrices. Assumption~\ref{ass:J} enforces that if $\mathbf{B}_1 \prec \mathbf{B}_2$, i.e., the covariance $\mathbf{B}_2$ portrays a larger spread of the error distribution in every direction than $\mathbf{B}_1$, then $J(\mathbf{B}_2)>J(\mathbf{B}_1)$. Common criteria in fusion applications such as the trace or determinant satisfy it.

\begin{remark}
		A Kalman filtering problem can be cast in this framework. Suppose we have access to an unbiased a priori estimate for $\mathbf{x}$, which is denoted by $\hat{\mathbf{x}}_-$, i.e., $\hat{\mathbf{x}}_- := \mathbf{x} + \mathbf{e}^\prime_-$, where $\mathbf{e}^\prime_-$ is zero-mean noise. In the context of the Kalman filter, $\hat{\mathbf{x}}_-$ would be the so-called predicted estimated. Suppose we also have access to a vector of sensor outputs $\mathbf{y} := \mathbf{C}^\prime\mathbf{x}+\mathbf{e}_y^\prime$, where $\mathbf{e}_y^\prime$ is zero-mean noise. Then, one can formulate the problem of finding an unbiased estimate for $\mathbf{x}$ in the framework presented in this section with $\mathbf{z}^\top = [\hat{\mathbf{x}}_-^\top \; \mathbf{y}^\top]$, $\mathbf{H}^\top = [\mathbf{I}_n \; \mathbf{C}^{\prime\top}]$, and $\mathbf{e}^\top = [ \mathbf{e}^{\prime\top}_- \; \mathbf{e}^{\prime\top}_y]$. If one writes the linear gain $\mathbf{K}$ as $\mathbf{K} = [\mathbf{K}_- \; \mathbf{K}_y]$, where $\mathbf{K}_- \in \R^{n\times n}$, the condition $\mathbf{KH}= \mathbf{I}$ can be equivalently written as $\mathbf{K}_- = \mathbf{I}-\mathbf{K}_y\mathbf{C}^\prime$. The expression for the linear filter then becomes
		\begin{equation*}
			\begin{split}
				\hat{\mathbf{x}} &= \mathbf{K}_-\hat{\mathbf{x}}_- + \mathbf{K}_y\mathbf{y} = \hat{\mathbf{x}}_- + \mathbf{K}_y(\mathbf{y}-\mathbf{C}^\prime \hat{\mathbf{x}}_-),
			\end{split}
		\end{equation*}
		which is the standard form of the update step of a Kalman filter. The design problem becomes finding the gain $\mathbf{K}_y$.
\end{remark}

\begin{remark}
		The form of $\EV[\mathbf{e}\mathbf{e}^\top]$ in \eqref{eq:E_ee} and the partial information structure in \eqref{eq:Pcal_def} generalize multiple CI problems. For the sake of the illustration, consider the availability of only two partial estimates $\mathbf{z}_1$ and $\mathbf{z}_2$. First, a basic CI problem can be cast in the OCI framework with $\mathbf{R} = \zeros$, $\mathbf{C} = \mathbf{I}$, $\mathbf{W}_1 = [\mathbf{I} \; \zeros]$, and $\mathbf{W}_2 = [\zeros \; \mathbf{I}]$. Second, a SCI problem can be cast as $\mathbf{R} = \diag(\mathbf{X}^\mathrm{ind}_1,\mathbf{X}^\mathrm{ind}_2)$, $\mathbf{C} = \mathbf{I}$, $\mathbf{W}_1 = [\mathbf{I} \; \zeros]$, and $\mathbf{W}_2 = [\zeros \; \mathbf{I}]$, where $\mathbf{X}^\mathrm{ind}_1$ and $\mathbf{X}^\mathrm{ind}_2$ are the bounds on the autocorrelation of the independent components. Third, overlapping states fusion with CI \cite{SijsHanebeckEtAl2013} can be cast with $\mathbf{R} = \zeros$ and $\mathbf{C} = \mathbf{I}$ (matrices $\mathbf{W}_1$ and $\mathbf{W}_2$ are omitted for the sake of brevity). However, in this work, we focus exclusively on the case $\mathbf{R} \succ \zeros$, which arises from the motivating example in Section~\ref{sec:motivating_example}. We envision that results for $\mathbf{R} = \zeros$ and $\mathbf{R} \succeq \zeros$ that are analogous to the ones derived in this work can be obtained and applied to these CI flavors. Nevertheless, these results do not follow immediately from the results for $\mathbf{R} \succ \zeros$ herein and therefore are left for future work.
\end{remark}

\section{Overlapping Covariance Intersection}

In this section, we propose an approach to solve the OCI problem \eqref{eq:OCI_orig_prob}. An analysis of the partial knowledge structure is carried out in Section~\ref{sec:analysis}. In Section~\ref{sec:sota_apprch}, we apply a state-of-the-art approach that is common for CI problems to the OCI problem introduced in this paper and discuss why it is not appropriate. In Section~\ref{sec:OCI_reformulation}, we provide a computationally efficient solution approach to the OCI problem \eqref{eq:OCI_orig_prob}.

\subsection{Analysis of Partial Knowledge}\label{sec:analysis}

First, one can rewrite the set $\Pcal$ of admissible matrices $\mathbf{P}$ resorting to bounds on the inverse of $\mathbf{P}$. The convenience of this reformulation will be discussed further in Example~\ref{eg:bounds} and also in Section~\ref{sec:OCI_reformulation}.

\begin{lemma}\label{lem:inverse_bounds}
The set $\Pcal$ of admissible matrices $\mathbf{P}$ can be expressed as $\Pcal := \{\mathbf{P} \in\! S_{++}^m \!:\mathbf{P}^{-1} \!\succeq \mathbf{Y}_b \;\forall b \in \{1,2,\ldots,M\} \}$, where $\mathbf{Y}_b := \mathbf{W}_b^\top\mathbf{X}_b^{-1} \mathbf{W}_b$ for $b = 1,2,\ldots, M$.  
\end{lemma}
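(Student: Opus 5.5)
The proof is a pointwise equivalence: for each fixed $\mathbf{P}\in S^m_{++}$ and each fixed bound $b$, I will show that
\begin{equation*}
\mathbf{W}_b\mathbf{P}\mathbf{W}_b^\top \preceq \mathbf{X}_b \iff \mathbf{P}^{-1} \succeq \mathbf{W}_b^\top\mathbf{X}_b^{-1}\mathbf{W}_b = \mathbf{Y}_b .
\end{equation*}
Once this holds for every $b$, the two descriptions of $\Pcal$ in \eqref{eq:Pcal_def} and in the statement define the same subset of $S^m_{++}$. Both are intersections over $b$ of conditions that agree term by term.

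To prove the equivalence, I will use the symmetric block matrix
\begin{equation*}
\mathbf{M}_b := \begin{bmatrix} \mathbf{P}^{-1} & \mathbf{W}_b^\top \\ \mathbf{W}_b & \mathbf{X}_b \end{bmatrix} \in \R^{(m+o_b)\times(m+o_b)}
\end{equation*}
and apply the Schur complement characterization of positive semidefiniteness twice, once with respect to each diagonal block. Both diagonal blocks are positive definite, since $\mathbf{P}\in S^m_{++}$ and $\mathbf{X}_b\in S^{o_b}_{++}$, so the invertible form of the lemma applies in both directions and no generalized inverses are needed.
\begin{itemize}
\item Taking the Schur complement of $\mathbf{X}_b$ gives $\mathbf{M}_b\succeq\zeros \iff \mathbf{P}^{-1}-\mathbf{W}_b^\top\mathbf{X}_b^{-1}\mathbf{W}_b\succeq\zeros$.
\item Taking the Schur complement of $\mathbf{P}^{-1}$ gives $\mathbf{M}_b\succeq\zeros \iff \mathbf{X}_b-\mathbf{W}_b\mathbf{P}\mathbf{W}_b^\top\succeq\zeros$.
\end{itemize}
Chaining the two equivalences yields the claim. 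If I want the argument self-contained, I can derive each equivalence from a congruence. For example,
\begin{equation*}
\begin{bmatrix} \mathbf{I} & -\mathbf{W}_b^\top\mathbf{X}_b^{-1} \\ \zeros & \mathbf{I}\end{bmatrix}\mathbf{M}_b\begin{bmatrix} \mathbf{I} & \zeros \\ -\mathbf{X}_b^{-1}\mathbf{W}_b & \mathbf{I}\end{bmatrix} = \diag\!\big(\mathbf{P}^{-1}-\mathbf{Y}_b,\;\mathbf{X}_b\big),
\end{equation*}
and the analogous congruence eliminates the off-diagonal blocks using $\mathbf{P}$.

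I do not expect a real obstacle; the only point needing care is that $\mathbf{W}_b$ is a general, possibly rank-deficient, rectangular matrix. The Schur-complement route never inverts $\mathbf{W}_b$ or $\mathbf{Y}_b$, so this is harmless. It also means $\mathbf{Y}_b$ is typically only positive semidefinite rather than definite, which is consistent with the statement, since $\mathbf{Y}_b$ only needs to be a lower bound on $\mathbf{P}^{-1}$.
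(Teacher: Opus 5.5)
Your proposal is correct and matches the paper's proof: the paper also reduces the claim to a per-$b$ equivalence for fixed $\mathbf{P}\in S^m_{++}$, and applies the positive-definite Schur complement criterion twice to the block matrix with diagonal blocks $\mathbf{P}^{-1}$ and $\mathbf{X}_b$, swapping the block order so that each pivot is positive definite. Your explicit congruence is a valid self-contained replacement for citing that criterion.
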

\begin{proof}
	See Appendix~\ref{sec:proof_lem_inverse_bounds}.
\end{proof}

Second, we establish necessary and sufficient conditions for the boundedness of admissible covariance matrices. Define $\mathbf{W} := [\mathbf{W}_1^\top \; \cdots \; \mathbf{W}_M^\top]^\top$. The row space of $\mathbf{W}$ characterizes the components where bounds of $\mathbf{P}$ are known, which will be instrumental in the following results.

\begin{lemma}\label{lem:boundedness}
	 There exists $\mathbf{X}\in S_{++}^m$ such that $\mathbf{X} \succeq\mathbf{P}$ for all $\mathbf{P}\in \Pcal$ if and only if $\mathbf{W}$ is full column rank. There exists $\mathbf{Q}\in S_{++}^o$ such that $\mathbf{Q} \succeq \mathbf{R}+\mathbf{C}\mathbf{P}\mathbf{C}^\top$ for all $\mathbf{P}\in \Pcal$ if and only if $\rank(\mathbf{W}) = \rank([\mathbf{W}^\top\; \mathbf{C}^\top]^\top)$.
\end{lemma}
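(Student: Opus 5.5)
Both equivalences reduce to the same device: test boundedness along directions, using the quadratic forms $\mathbf{v}^\top\mathbf{P}\mathbf{v}$, and split $\R^m$ as $\col(\mathbf{W}^\top)\oplus\ker\mathbf{W}$.

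\emph{First claim, sufficiency.} Assume $\mathbf{W}$ has full column rank. For any $\mathbf{P}\in\Pcal$, stacking the $M$ bounds gives
\[
\mathbf{W}\mathbf{P}\mathbf{W}^\top\preceq \mathbf{D}:=\diag(\mathbf{X}_1,\ldots,\mathbf{X}_M)+\mathbf{E},
\]
where $\mathbf{E}$ collects the off-diagonal blocks $\mathbf{W}_a\mathbf{P}\mathbf{W}_b^\top$, which are not directly bounded. It is cleaner to argue blockwise. Each $\mathbf{v}\in\R^m$ can be written as $\mathbf{v}=\sum_b\mathbf{W}_b^\top\mathbf{u}_b$, because $\row\mathbf{W}=\R^m$. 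Since $\mathbf{P}\succ\zeros$, Cauchy--Schwarz in the $\mathbf{P}$-inner product gives
\[
\mathbf{v}^\top\mathbf{P}\mathbf{v}\le M\sum_b\mathbf{u}_b^\top\mathbf{W}_b\mathbf{P}\mathbf{W}_b^\top\mathbf{u}_b\le M\sum_b\mathbf{u}_b^\top\mathbf{X}_b\mathbf{u}_b .
\]
Now fix the linear choice $\mathbf{u}=\mathbf{W}^{\top+}\mathbf{v}$, i.e.\ $\mathbf{u}=\mathbf{W}(\mathbf{W}^\top\mathbf{W})^{-1}\mathbf{v}$. This yields $\mathbf{P}\preceq\mathbf{X}:=M\,(\mathbf{W}^\top\mathbf{W})^{-1}\mathbf{W}^\top\mathbf{D}_0\mathbf{W}(\mathbf{W}^\top\mathbf{W})^{-1}$ with $\mathbf{D}_0=\diag(\mathbf{X}_b)$. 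This $\mathbf{X}$ is positive definite, and one can add $\mathbf{I}$ if desired.

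\emph{First claim, necessity.} Suppose there is $\mathbf{v}\neq\zeros$ with $\mathbf{W}\mathbf{v}=\zeros$. Take any $\mathbf{P}_0\in\Pcal$; for instance $\mathbf{P}_0=\epsilon\mathbf{I}$ with $\epsilon$ small is admissible since every $\mathbf{X}_b\succ\zeros$. Set $\mathbf{P}_t=\mathbf{P}_0+t\mathbf{v}\mathbf{v}^\top$. Then $\mathbf{W}_b\mathbf{P}_t\mathbf{W}_b^\top=\mathbf{W}_b\mathbf{P}_0\mathbf{W}_b^\top$, so $\mathbf{P}_t\in\Pcal$ for all $t\ge0$. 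But $\mathbf{v}^\top\mathbf{P}_t\mathbf{v}\to\infty$, so no $\mathbf{X}$ can dominate $\Pcal$.

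\emph{Second claim.} The condition $\rank\mathbf{W}=\rank[\mathbf{W}^\top\;\mathbf{C}^\top]^\top$ is equivalent to $\row\mathbf{C}\subseteq\row\mathbf{W}$, i.e.\ $\ker\mathbf{W}\subseteq\ker\mathbf{C}$.

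For necessity, if some $\mathbf{v}\in\ker\mathbf{W}$ has $\mathbf{C}\mathbf{v}\neq\zeros$, use the same family $\mathbf{P}_t$. Then $\mathbf{R}+\mathbf{C}\mathbf{P}_t\mathbf{C}^\top$ contains the term $t(\mathbf{C}\mathbf{v})(\mathbf{C}\mathbf{v})^\top$, which is unbounded along $\mathbf{C}\mathbf{v}$.

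For sufficiency, write $\mathbf{C}=\mathbf{G}\mathbf{W}$, which is possible because $\row\mathbf{C}\subseteq\row\mathbf{W}$. For $\mathbf{w}\in\R^o$, set $\mathbf{v}=\mathbf{C}^\top\mathbf{w}=\mathbf{W}^\top\mathbf{G}^\top\mathbf{w}$. Splitting $\mathbf{G}^\top\mathbf{w}$ into the blocks $\mathbf{u}_b$ and applying the same Cauchy--Schwarz estimate gives
\[
\mathbf{w}^\top\mathbf{C}\mathbf{P}\mathbf{C}^\top\mathbf{w}\le M\,\mathbf{w}^\top\mathbf{G}\mathbf{D}_0\mathbf{G}^\top\mathbf{w}.
\]
Hence $\mathbf{Q}:=\mathbf{R}+M\mathbf{G}\mathbf{D}_0\mathbf{G}^\top\succ\zeros$ works, positive definiteness coming from $\mathbf{R}\succ\zeros$. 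The first claim's sufficiency is then just the special case $\mathbf{C}=\mathbf{I}$, $\mathbf{G}=\mathbf{W}^{+}$.

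The only delicate step is that the cross blocks $\mathbf{W}_a\mathbf{P}\mathbf{W}_b^\top$ are unknown. The Cauchy--Schwarz bound $(\sum_b a_b)^2\le M\sum_b a_b^2$, applied with $a_b=\|\mathbf{P}^{1/2}\mathbf{W}_b^\top\mathbf{u}_b\|$, is precisely what handles this. Everything else is routine linear algebra. Lemma~\ref{lem:inverse_bounds} could alternatively give $\mathbf{P}^{-1}\succeq\frac1M\sum_b\mathbf{Y}_b$, and then $\mathbf{P}\preceq M(\sum_b\mathbf{Y}_b)^{-1}$ whenever $\sum_b\mathbf{Y}_b=\mathbf{W}^\top\mathbf{D}_0^{-1}\mathbf{W}\succ\zeros$, which is a slicker route I would also check.
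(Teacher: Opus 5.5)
Your proof is correct, but the sufficiency direction takes a genuinely different route from the paper's. The necessity argument, via $\mathbf{P}_0+t\mathbf{v}\mathbf{v}^\top$ with $\mathbf{v}\in\ker\mathbf{W}$, is the same as the paper's. Your observation that $\epsilon\mathbf{I}\in\Pcal$ for small $\epsilon>0$ also supplies the nonemptiness of $\Pcal$, which the paper tacitly assumes.

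The paper argues in information form. Lemma~\ref{lem:inverse_bounds} gives $\mathbf{P}^{-1}\succeq\mathbf{Y}_b$ for all $b$, and averaging gives $\mathbf{P}^{-1}\succeq\frac{1}{M}\mathbf{W}^\top\mathbf{D}_0^{-1}\mathbf{W}$ with $\mathbf{D}_0=\diag(\mathbf{X}_1,\ldots,\mathbf{X}_M)$ (Proposition~\ref{prop:big_bound_on_P}). Writing $\mathbf{C}=\mathbf{S}\mathbf{W}$ and choosing $\epsilon$ with $\mathbf{D}_0^{-1}\succeq\epsilon\mathbf{S}^\top\mathbf{S}$ yields $\mathbf{P}^{-1}\succeq\frac{\epsilon}{M}\mathbf{C}^\top\mathbf{C}$. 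Lemma~\ref{lem:inverse_bounds} then turns this back into $\mathbf{C}\mathbf{P}\mathbf{C}^\top\preceq\frac{M}{\epsilon}\mathbf{I}$, and the first claim is the case $\mathbf{C}=\mathbf{I}$. This is your ``slicker route'', extended by the scalar-domination step, which avoids inverting the possibly singular $\sum_b\mathbf{Y}_b$.

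You instead stay in the primal and control the unknown cross blocks $\mathbf{W}_a\mathbf{P}\mathbf{W}_b^\top$ directly with the triangle and Cauchy--Schwarz inequalities. This needs no Schur complements and gives the explicit, $\epsilon$-free bound $\mathbf{Q}=\mathbf{R}+M\mathbf{G}\mathbf{D}_0\mathbf{G}^\top$.

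The two arguments are dual. For full column rank $\mathbf{W}$, $\min\{\mathbf{u}^\top\mathbf{D}_0\mathbf{u}:\mathbf{W}^\top\mathbf{u}=\mathbf{v}\}=\mathbf{v}^\top(\mathbf{W}^\top\mathbf{D}_0^{-1}\mathbf{W})^{-1}\mathbf{v}$. So optimizing your decomposition $\mathbf{u}$ reproduces the paper's bound $\mathbf{P}\preceq M(\mathbf{W}^\top\mathbf{D}_0^{-1}\mathbf{W})^{-1}$. Your fixed choice $\mathbf{u}=\mathbf{W}^{\top+}\mathbf{v}$ is in general looser, which is harmless for an existence statement.

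What the paper's route buys is reuse. The averaged information bound of Proposition~\ref{prop:big_bound_on_P} is invoked again in the proof of Theorem~\ref{th:feas_iff}, and it reappears as the uniform-weight Kahan bound in Corollary~\ref{cor:OCI_Kahan_optimal}.

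Finally, your opening display with $\mathbf{D}:=\diag(\mathbf{X}_1,\ldots,\mathbf{X}_M)+\mathbf{E}$ is tautological and never used, so you should drop it.
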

\begin{proof}
	See Appendix~\ref{sec:proof_lem_boundedness}.
\end{proof}

Note that the results of Lemma~\ref{lem:boundedness} are quite intuitive. If the bounds $\mathbf{W}_b\mathbf{P}\mathbf{W}_b^\top \preceq \mathbf{X}_b$, $b = 1,2,\ldots, M$, are informative about all components of  $\mathbf{P}$, i.e., the row space of $[\mathbf{W}_1^\top \; \cdots \; \mathbf{W}_M^\top]^\top$ is $\R^m$ (or, equivalently, $\mathbf{W}$ is full column rank), then the admissible matrices $\mathbf{P}\in\Pcal$ are bounded. Furthermore, if the bounds are informative about the components of $\mathbf{P}$ extracted by $\mathbf{C}$, i.e., the row space of $\mathbf{C}$ is a linear subspace of the row space of $\mathbf{W}$ (or, equivalently,  $\rank(\mathbf{W}) = \rank([\mathbf{W}^\top\; \mathbf{C}^\top]^\top)$), then  $\mathbf{R}+\mathbf{C}\mathbf{P}\mathbf{C}^\top$ is bounded for all admissible $\mathbf{P}\in \Pcal$. Also note that the second condition in Lemma~\ref{lem:boundedness} is weaker than the first, in the sense that the first implies the second.

Third, we turn to the feasibility analysis of the OCI problem \eqref{eq:OCI_orig_prob}. To be clear,  \eqref{eq:OCI_orig_prob} is said to be feasible if there is a pair $(\mathbf{K},\mathbf{B})$ that satisfies the constraints of \eqref{eq:OCI_orig_prob}. It is well-known that $\mathbf{H}$ admits a left inverse if and only if $\mathbf{H}$ is full column rank \cite[Chap.1.3, Lemma~2]{Ben-IsraelGreville2003}, so that is a necessary condition for the existence of $\mathbf{K}$ subject to $\mathbf{K}\mathbf{H} = \mathbf{I}$. On top of that, given any $\mathbf{K}$, the boundedness of $\mathbf{R}+\mathbf{C}\mathbf{P}\mathbf{C}^\top$ for all admissible $\mathbf{P}\in \Pcal$ is a sufficient condition for the existence of $\mathbf{B}$ such that $\mathbf{B} \succeq \mathbf{K}(\mathbf{R} + \mathbf{C}\mathbf{P}\mathbf{C}^\top)\mathbf{K}^\top$ for all $\mathbf{P} \in \Pcal$. As a result, a sufficient condition for feasibility follows as a corollary of Lemma~\ref{lem:boundedness}.

\begin{corollary}\label{cor:feas_suff}
	If $\mathbf{H}$ is full column rank and $\rank(\mathbf{W}) = \rank([\mathbf{W}^\top\; \mathbf{C}^\top]^\top)$, then the OCI problem \eqref{eq:OCI_orig_prob} is feasible.
\end{corollary}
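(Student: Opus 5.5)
The plan is to exhibit an explicit feasible pair $(\mathbf{K},\mathbf{B})$ for \eqref{eq:OCI_orig_prob}, treating the two constraints one at a time.

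\emph{Unbiasedness constraint.} Since $\mathbf{H}$ has full column rank, $\mathbf{H}^\top\mathbf{H}$ is invertible. We therefore take the left inverse
\[
\mathbf{K} := \mathbf{H}^+ = (\mathbf{H}^\top\mathbf{H})^{-1}\mathbf{H}^\top,
\]
which satisfies $\mathbf{K}\mathbf{H} = \mathbf{I}$. In fact, any left inverse would work, since the argument below does not use the particular choice.

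\emph{Semi-infinite constraint.} Because $\rank(\mathbf{W}) = \rank([\mathbf{W}^\top\; \mathbf{C}^\top]^\top)$, the second statement of Lemma~\ref{lem:boundedness} gives some $\mathbf{Q}\in S_{++}^o$ with $\mathbf{Q} \succeq \mathbf{R}+\mathbf{C}\mathbf{P}\mathbf{C}^\top$ for every $\mathbf{P}\in\Pcal$. Set $\mathbf{B} := \mathbf{K}\mathbf{Q}\mathbf{K}^\top$. Congruence preserves the Loewner order: if $\mathbf{Q}-(\mathbf{R}+\mathbf{C}\mathbf{P}\mathbf{C}^\top)\succeq\zeros$, then for any $\mathbf{v}$, taking $\mathbf{u}=\mathbf{K}^\top\mathbf{v}$ shows
\[
\mathbf{v}^\top\mathbf{K}\bigl(\mathbf{Q}-\mathbf{R}-\mathbf{C}\mathbf{P}\mathbf{C}^\top\bigr)\mathbf{K}^\top\mathbf{v}\geq 0 .
\]
Hence $\mathbf{B} \succeq \mathbf{K}(\mathbf{R}+\mathbf{C}\mathbf{P}\mathbf{C}^\top)\mathbf{K}^\top$ for all $\mathbf{P}\in\Pcal$. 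Moreover, $\mathbf{B}\in S^n_+$ because $\mathbf{Q}\succ\zeros$; it is even positive definite, since $\mathbf{K}$ has full row rank as a consequence of $\mathbf{K}\mathbf{H}=\mathbf{I}$.

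The pair $(\mathbf{K},\mathbf{B})$ therefore satisfies both constraints, so \eqref{eq:OCI_orig_prob} is feasible. Note that $J$ plays no role, since feasibility concerns only the constraint set.

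There is no real obstacle here. All the substantive work, namely uniform boundedness of $\mathbf{R}+\mathbf{C}\mathbf{P}\mathbf{C}^\top$ over $\Pcal$ under the rank condition, is supplied by Lemma~\ref{lem:boundedness}. The only point needing care is that the bound $\mathbf{Q}$ is uniform in $\mathbf{P}$, so that one fixed $\mathbf{B}$ serves every admissible $\mathbf{P}$ at once, which is exactly what that lemma guarantees.
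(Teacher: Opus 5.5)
Your proposal is correct and is essentially the paper's argument: full column rank of $\mathbf{H}$ supplies a left inverse $\mathbf{K}$, and the uniform bound $\mathbf{Q}$ from Lemma~\ref{lem:boundedness} yields a feasible $\mathbf{B} = \mathbf{K}\mathbf{Q}\mathbf{K}^\top$ by congruence. You write out explicitly the choice of $\mathbf{B}$ that the paper's text leaves implicit.
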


\begin{remark}\label{rm:feas_iff}
	It is possible to establish a necessary and sufficient condition for the feasibility of the OCI problem. Intuitively, this result follows from the fact that the OCI problem is feasible if and only if there is a gain $\mathbf{K}$ that is a left inverse of $\mathbf{H}$ and such that the bounds are informative about the components of $\mathbf{P}$ extracted by $\mathbf{K}\mathbf{C}$, which formally amounts to $\rank(\mathbf{W}) = \rank([\mathbf{W}^\top\; (\mathbf{KC})^\top]^\top)$. However, these conditions cannot be easily written in terms of the parameters $\mathbf{H}$, $\mathbf{R}$, $\mathbf{C}$, and $\mathbf{W}$. Nonetheless, after the reformulation of the OCI problem in Section~\ref{sec:OCI_reformulation}, such a condition follows easily and is stated in Theorem~\ref{th:feas_iff}, which is presented later in the paper.
\end{remark}

\begin{example}\label{eg:bounds}
	The information bounds $\mathbf{P}^{-1} \succeq \mathbf{Y}_b \;\forall b \in \{1,2,\ldots,M\}$ have a very interesting geometrical interpretation. Indeed, a covariance matrix $\mathbf{P}$ can be geometrically represented by an ellipsoid $\Ecal_{\mathbf{P}} = \{\mathbf{x} \in \R^m\!: {\mathbf{x}^\top\mathbf{P}^{-1}\mathbf{x}\leq 1}\}$.
	Furthermore, the relation $\mathbf{P}^{-1} \succeq \mathbf{Y}_b$ can be geometrically understood as $\Ecal_{\mathbf{P}} \subseteq \Ecal_{\mathbf{Y}_b^{-1}}$ \cite{DengZhangEtAl2012}. Therefore, the admissible set $\Pcal$ can be characterized as the intersection of the ellipsoids generated by each bound, i.e., $\mathbf{P}\in \Pcal \iff \Ecal_{\mathbf{P}}\subseteq \cap_{b = 1}^M  \Ecal_{\mathbf{Y}_b^{-1}}$. In Fig.~\ref{fig:eg1_2D} this aspect is illustrated for $m = 2$, with 	$\mathbf{W}_1 = \mathbf{I}_2$, $\mathbf{W}_2 = [1 \; 0]$, and $\mathbf{W}_3 =[2 \;-1]$ and randomly generated $\mathbf{X}_b$. Notice that the first bound bounds two components of $\mathbf{P}$, thus it defines a bounded region in $\R^2$. The second and third bounds only bound one component of $\mathbf{P}$, thus they define unbounded regions in $\R^2$ that can be understood as degenerate ellipsoids. In Fig.~\ref{fig:eg1_3D} we illustrate the bounds for the toy cooperative localization problem presented in Section~\ref{sec:motivating_example} with $\mathbf{W}_b$ defined as in Example~\ref{eg:bound_coop} and with randomly generated $\mathbf{X}_b$. Notice that the first bound is characterized by a bounded ellipsoid and the second and third by degenerate ellipsoids in $\R^3$. Furthermore, since $\mathbf{H}$ and $\mathbf{W}$ are full column rank, both conditions for feasibility of Corollary~\ref{cor:feas_suff} hold and, as expected by Lemma~\ref{lem:boundedness}, $\Pcal$ is bounded.
\end{example}

\begin{figure}[ht]
	\centering
	\includegraphics[width = 0.85\linewidth]{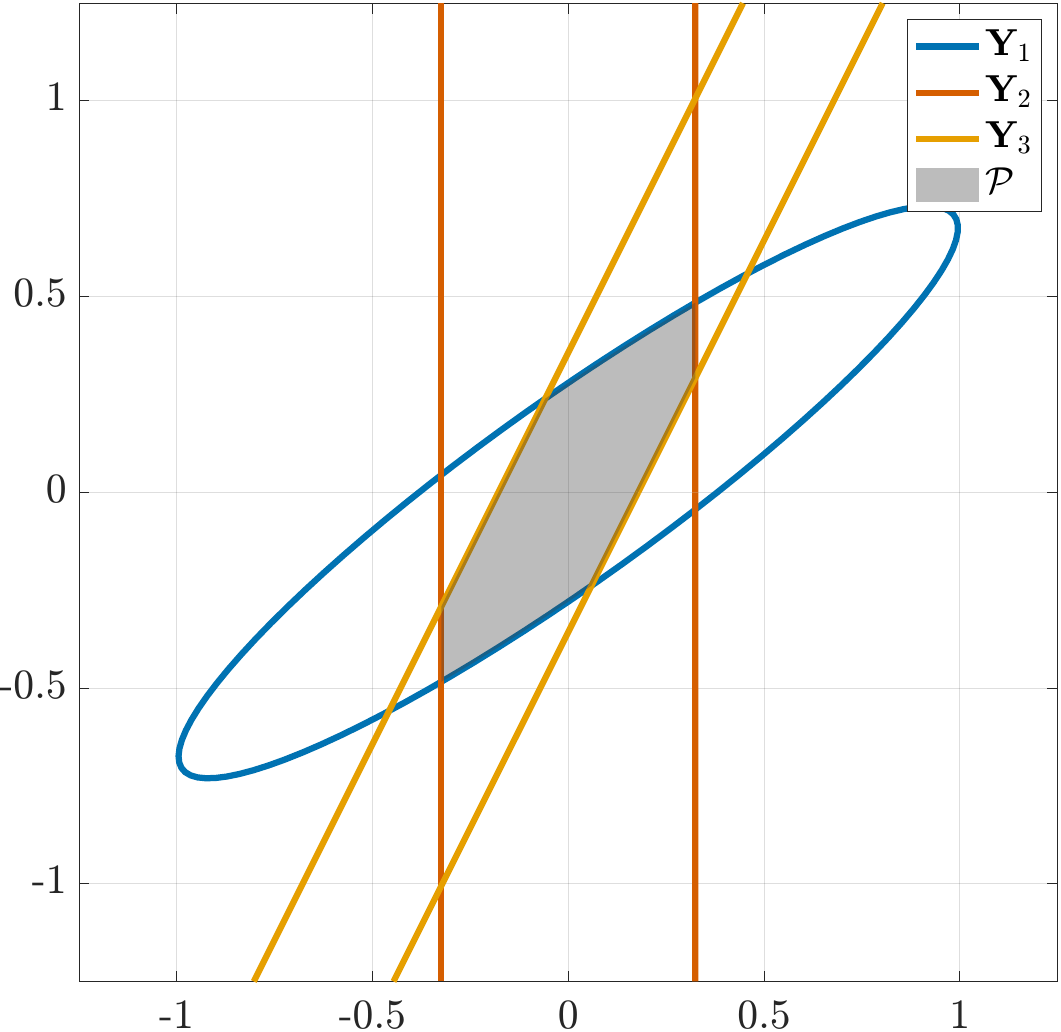}
	\caption{Illustrative OCI bounds and admissible set $\Pcal\subset S_{++}^2$.}\label{fig:eg1_2D}
\end{figure}

\begin{figure}[ht]
	\centering
	\includegraphics[width = 0.85 \linewidth]{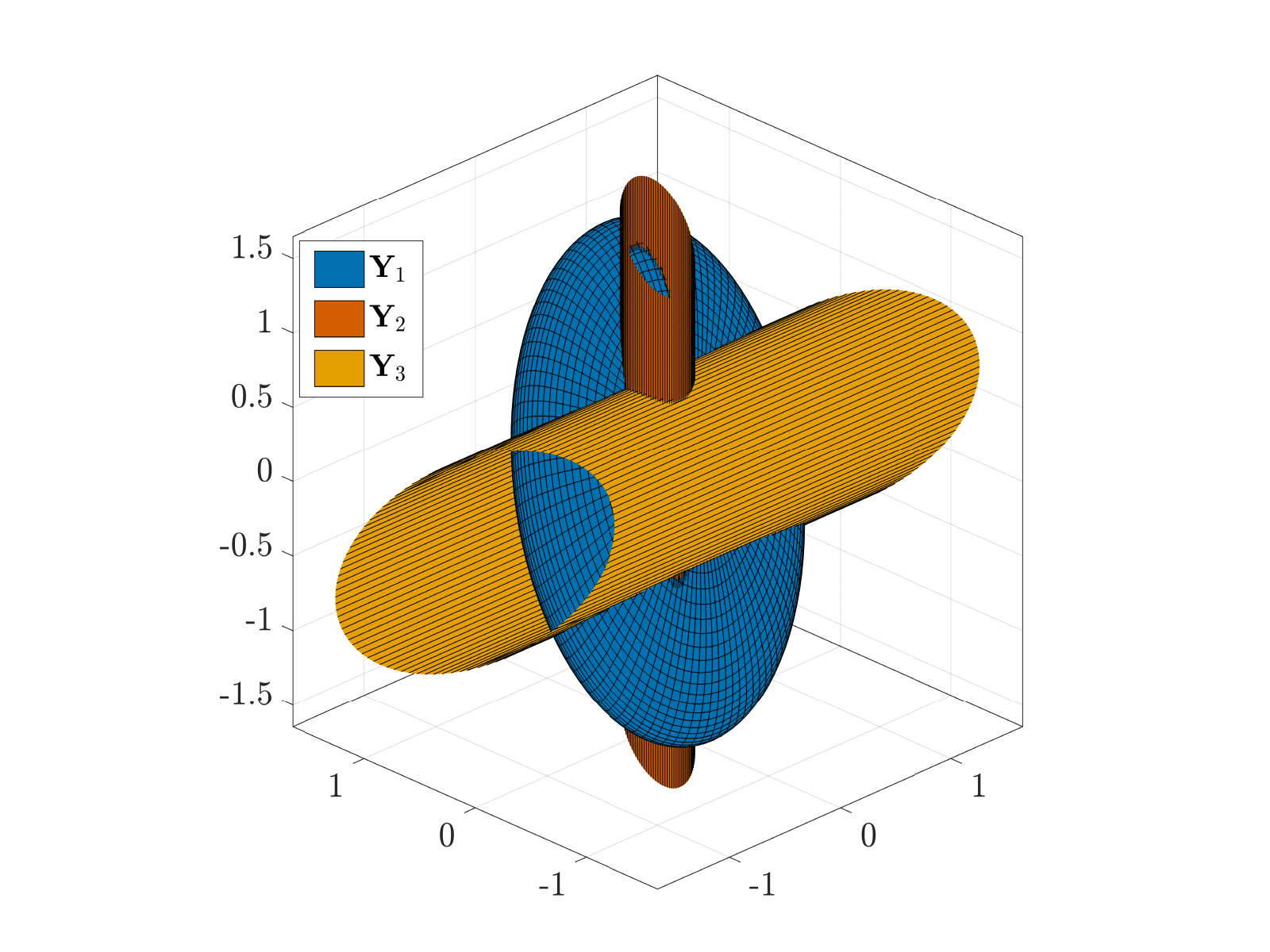}
	\caption{Illustrative OCI bounds and admissible set $\Pcal\subset S_{++}^3$.}\label{fig:eg1_3D}
\end{figure}

\subsection{A State-of-the-art Approach}\label{sec:sota_apprch}

Henceforth, we resume the analysis of the OCI problem \eqref{eq:OCI_orig_prob} with the goal of finding solutions efficiently. The OCI problem  \eqref{eq:OCI_orig_prob} has a form that is similar to most flavors of the CI problem but suffers from additional complications. In this section, we apply a state-of-the-art approach that is common for CI problems to the OCI problem introduced in this paper and we discuss why it is not appropriate. 

The optimization problem \eqref{eq:OCI_orig_prob} is nonlinear and challenging to solve numerically due to the ``coupling'' between the gain and the admissible matrices $\mathbf{P}\in \Pcal$ in the constraint $\mathbf{B} \succeq \mathbf{K}(\mathbf{R} + \mathbf{C}\mathbf{P}\mathbf{C}^\top)\mathbf{K}^\top,\,  \forall \mathbf{P} \in \Pcal$. Applying the common approach to decouple a CI problem to the OCI problem \eqref{eq:OCI_orig_prob} amounts to proposing a bound $\mathbf{X}$ for $\Pcal$ and then optimizing the gains for that bound with $\mathbf{B} = \mathbf{K}^\top(\mathbf{R}+\mathbf{C}\mathbf{P}\mathbf{C}^\top)\mathbf{K}^\top$, i.e.,
	\begin{equation}\label{eq:OCI_X_bound}
		\begin{aligned}
			&\min_{\substack{\mathbf{K}\in \R^{n\times o}, \mathbf{X}\in S^m_+}}  && J(\mathbf{K}(\mathbf{R} + \mathbf{C}\mathbf{X}\mathbf{C}^\top)\mathbf{K}^\top)\\
			&\quad\quad\;\mathrm{s.t.} &&  \mathbf{K}\mathbf{H} = \mathbf{I}\\  
			&&& \mathbf{X} \succeq \mathbf{P}, \; \forall \mathbf{P} \in \Pcal.
		\end{aligned}
	\end{equation}
	A well-known result under Assumption~\ref{ass:J} is that
	\begin{equation*}
		(\mathbf{H}^\top\!\mathbf{Q}^{-1}\mathbf{H})^{-1}\mathbf{H}^\top\!\mathbf{Q}^{-1} \!= \argmin_\mathbf{K} \;\! J( \mathbf{K}\mathbf{Q}\mathbf{K}^\top\!) \;\; \text{s.t.} \;\;  \mathbf{K}\mathbf{H} \!= \!\mathbf{I}
	\end{equation*}
	for any fixed $\mathbf{Q} \succ \mathbf{0}$. Therefore, \eqref{eq:OCI_X_bound} can be decoupled by optimizing $\mathbf{X}$ when the gain $\mathbf{K}$ is chosen to be the optimal gain, i.e., 
	\begin{equation}\label{eq:OCI_X_bound_noK}
		\begin{aligned}
			&\min_{\mathbf{X}\in S^m_+}  && J((\mathbf{H}^\top(\mathbf{R}+\mathbf{C}\mathbf{X}\mathbf{C}^\top)^{-1}\mathbf{H})^{-1})\\
			&\;\; \mathrm{s.t.} &&  \mathbf{X} \succeq \mathbf{P}, \; \forall \mathbf{P} \in \Pcal.
		\end{aligned}
	\end{equation}
For details and a more thorough analysis on this approach see, for instance, \cite{AjglStraka2018}. There are three main shortcomings of applying this approach to the OCI problem:

\begin{enumerate}[(i)]
	\item Notice that if $\Pcal$ is not bounded, \eqref{eq:OCI_X_bound_noK} is not numerically feasible. One can mitigate this by making use of a change of variables $\mathbf{F} = \mathbf{C}\mathbf{X}\mathbf{C}^\top$, which allows to rewrite \eqref{eq:OCI_X_bound_noK} as $\min_\mathbf{F} \; J((\mathbf{H}^\top\!(\mathbf{R}\!+\!\mathbf{F})^{-1}\mathbf{H})^{-1}) \; \text{s.t.} \; \mathbf{F} \succeq\! \mathbf{C}\mathbf{P}\mathbf{C}^\top\!, \forall \mathbf{P} \!\in\! \Pcal$. Notice that if $\mathbf{C}\mathbf{P}\mathbf{C}^\top\!,\forall \mathbf{P} \!\in\! \Pcal,$ is not bounded, \eqref{eq:OCI_X_bound_noK} is not numerically feasible (even after the change of variables). Still, as analyzed in Remark~\ref{rm:feas_iff}, boundedness of $\mathbf{C}\mathbf{P}\mathbf{C}^\top\!,\forall \mathbf{P} \!\in\! \Pcal,$ is not a necessary condition for the feasibility of the OCI problem. Furthermore, even if $\mathbf{C}\mathbf{P}\mathbf{C}^\top\!,\forall \mathbf{P} \!\in\! \Pcal,$ is bounded, the bound $\mathbf{F}$ that leads to the minimum objective may need to be unbounded in some components, in which case numerical issues will arise. Therefore, expressing the OCI problem by \eqref{eq:OCI_X_bound_noK} reduces the generality of the problem and may lead to numerical issues. In the basic CI setting, there is generally an assumption on the boundedness of the family of admissible covariance matrices.
	
	\item The objective function for the  basic CI problem that is analogous to \eqref{eq:OCI_X_bound_noK} is given by $J((\mathbf{H}^\top\mathbf{X}^{-1}\mathbf{H})^{-1})$. In that case, a variable change $\mathbf{Y} = \mathbf{X}^{-1}$ and thoughtfully devised family of bounds for $\mathbf{Y}$ enables a computationally or analytically tractable (suboptimal) solution \cite{ChenArambelEtAl2002,AjglStraka2018}. Due to the more convoluted objective function of  \eqref{eq:OCI_X_bound_noK}, the same techniques do not work for the OCI problem.
	
	\item Bounds on admissible covariance matrices $\mathbf{P}\in \Pcal$, such as the one in \eqref{eq:OCI_X_bound_noK}, are hard to characterize. In \cite{AjglStraka2017} and \cite{AjglStraka2018}, this issue is analyzed in-depth for the basic CI problem (where $\mathbf{P}$ is block diagonal for  which the diagonal blocks are known and no information about the cross-diagonal components is available). Indeed, despite addressing a significantly simpler information structure, it is concluded in \cite{AjglStraka2018} that families of bounds of $\mathbf{P}$ are either not \emph{simple} (i.e., not possible to parameterize, thus requiring brute-force design) or do not characterize all \emph{tight} bounds (thus it may not contain the optimal bound, which leads to the suboptimality of the CI optimization problem). For computational efficiency, parametrizations of bounds are often used (see, e.g., \cite{ChenArambelEtAl2002,ReinhardtNoackEtAl2015}). 
\end{enumerate}

For the aforementioned reasons, this approach is not suitable for the OCI problem introduced in this paper, especially because computational efficiency is of  paramount importance in applications such as cooperative localization.

\subsection{Efficient OCI Solution}\label{sec:OCI_reformulation}

In this section, the shortcomings of the state-of-the-art approach outlined in Section~\ref{sec:sota_apprch} are addressed and an efficient procedure to obtain solutions to the OCI problem \eqref{eq:OCI_orig_prob} is proposed. First, we reframe the OCI problem \eqref{eq:OCI_orig_prob} using novel techniques. Second, we introduce a family of bounds of the set of admissible matrices $\Pcal$ in the reframed problem, which allows for a very efficient numerical computation of a \emph{family-optimal} solution resorting to semidefinite programming.

First, given Assumption~\ref{ass:J}, it is possible to decouple the optimization of the gain and covariance bounds in problem~\eqref{eq:OCI_orig_prob}. Indeed, the following result establishes an equivalence between the solutions of \eqref{eq:OCI_orig_prob} and of
\begin{subequations}\label{eq:OCI_Y}
	\begin{alignat}{2}
		&\min_{\!\!\!\!\!\!\mathbf{Y} \in S_{+}^m, \mathbf{U},\mathbf{B}\in S_+^n\!\!\!\!\!}  &\quad & J(\mathbf{B})\\
		&\quad\quad\!\! \mathrm{s.t.} && \begin{bmatrix} \mathbf{B}&  \mathbf{I} \\ \mathbf{I} & \mathbf{H}^\top\mathbf{R}^{-1}\mathbf{H}-\mathbf{U}\end{bmatrix}  \succeq \mathbf{0} \label{eq:OCI_Y_B}\\
		&  && \begin{bmatrix} \mathbf{U}&  \mathbf{H}^\top\mathbf{R}^{-1}\mathbf{C} \\ (\mathbf{H}^\top\mathbf{R}^{-1}\mathbf{C} )^\top & \mathbf{Y}+\mathbf{C}^\top \mathbf{R}^{-1}\mathbf{C}\end{bmatrix}  \succeq \mathbf{0}\label{eq:OCI_Y_U}\\
		&& & \mathbf{Y} \preceq \mathbf{P}^{-1},\; \forall \mathbf{P} \in \Pcal. \label{eq:OCI_Y_Y}
	\end{alignat}
\end{subequations}
To be fully clear, by equivalence we mean that given a solution to \eqref{eq:OCI_Y} one can compute a solution to \eqref{eq:OCI_orig_prob} and vice-versa.

\begin{theorem}\label{th:equiv_OCI_prob}
	Assume that Assumption~\ref{ass:J} holds. If $(\mathbf{Y}^\star,\mathbf{U}^\star,\mathbf{B}^\star)$ is a solution to \eqref{eq:OCI_Y}, then the pair $(\mathbf{K}^\star,\mathbf{B}^\star)$ is a solution to \eqref{eq:OCI_orig_prob}, with 
	\begin{equation}\label{eq:K_star_OCI_Y}
		\begin{split}
			\!\!\!\mathbf{K}^\star \! := \! (&\mathbf{H}^\top\!\mathbf{R}^{-1}(\mathbf{R}\!-\!\mathbf{C}(\mathbf{Y}^\star\!+\!\mathbf{C}^\top\! \mathbf{R}^{-1}\mathbf{C})^+ \mathbf{C}^\top\!)\mathbf{R}^{-1}\mathbf{H})^{-1}\!\\
			&\mathbf{H}^\top\!\mathbf{R}^{-1}(\mathbf{R}\!-\!\mathbf{C}(\mathbf{Y}^\star\!+\!\mathbf{C}^\top\! \mathbf{R}^{-1}\mathbf{C})^+ \mathbf{C}^\top\!)\mathbf{R}^{-1}\text{.}\!\!\!
		\end{split}
	\end{equation}
	If $(\mathbf{K}^{\circ},\mathbf{B}^\circ)$ is a solution to \eqref{eq:OCI_orig_prob}, the triple $(\mathbf{Y}^\bullet,\mathbf{U}^\bullet, \mathbf{B}^\bullet)$ is a solution to \eqref{eq:OCI_Y}, with
	\begin{align}
		\mathbf{Y}^\bullet &= (\mathbf{K}^\circ\mathbf{C})^\top(\mathbf{B}^\circ-\mathbf{K}^\circ \mathbf{R}\mathbf{K}^{\circ\top})^+(\mathbf{K}^\circ\mathbf{C})\label{eq:OCI_Y_ball}\\
		\mathbf{U}^\bullet & = \mathbf{H}^\top\mathbf{R}^{-1}\mathbf{C}(\mathbf{Y}^\bullet+\mathbf{C}^\top \mathbf{R}^{-1}\mathbf{C})^+ \mathbf{C}^\top\mathbf{R}^{-1}\mathbf{H}\\
		\mathbf{B}^\bullet  &=	(\mathbf{H}^\top\mathbf{R}^{-1}\mathbf{H}-\mathbf{U}^\bullet)^{-1}.\label{eq:OCI_U_ball}
	\end{align}
	Furthermore, the OCI problem \eqref{eq:OCI_orig_prob} is feasible if and only if  \eqref{eq:OCI_Y} is feasible.
\end{theorem}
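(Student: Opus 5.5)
The proof establishes a two-way map between the feasible sets of \eqref{eq:OCI_orig_prob} and \eqref{eq:OCI_Y}. Each map must carry a feasible point of one problem to a feasible point of the other with no larger objective. Optimality then transfers through Assumption~\ref{ass:J}, and so does feasibility. The workhorse is Lemma~\ref{lem:inverse_bounds}: by it, \eqref{eq:OCI_Y_Y} simply says that $\mathbf{Y}$ is a lower bound on the information matrices $\mathbf{P}^{-1}$, $\mathbf{P}\in\Pcal$. The other ingredients are Schur complements and the generalized (Moore--Penrose) Schur complement for singular blocks.

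\textbf{Step 1 (reduction of the robust constraint).} For fixed $\mathbf{K}$, I would characterize when $\mathbf{B}\succeq \mathbf{K}(\mathbf{R}+\mathbf{C}\mathbf{P}\mathbf{C}^\top)\mathbf{K}^\top$ holds for all $\mathbf{P}\in\Pcal$. Set $\mathbf{G}:=\mathbf{K}\mathbf{C}$ and $\mathbf{D}:=\mathbf{B}-\mathbf{K}\mathbf{R}\mathbf{K}^\top$. The constraint reads $\mathbf{D}\succeq\mathbf{G}\mathbf{P}\mathbf{G}^\top$ for all $\mathbf{P}\in\Pcal$. By the generalized Schur complement, $\mathbf{D}\succeq\mathbf{G}\mathbf{P}\mathbf{G}^\top$ with $\mathbf{P}\succ\zeros$ is equivalent to $\mathbf{D}\succeq\zeros$, $\col\mathbf{G}\subseteq\col\mathbf{D}$ and $\mathbf{P}^{-1}\succeq\mathbf{G}^\top\mathbf{D}^+\mathbf{G}$. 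Hence the robust constraint is equivalent to $\mathbf{Y}:=\mathbf{G}^\top\mathbf{D}^+\mathbf{G}$ satisfying \eqref{eq:OCI_Y_Y}, together with the range condition. This is where $\mathbf{Y}^\bullet$ in \eqref{eq:OCI_Y_ball} comes from.

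The range condition needs care. If $\Pcal$ is nonempty, pick some $\mathbf{P}_0\in\Pcal$; then $\mathbf{D}\succeq\mathbf{G}\mathbf{P}_0\mathbf{G}^\top$ with $\mathbf{P}_0\succ\zeros$ already forces $\col\mathbf{G}\subseteq\col\mathbf{D}$. The case $\Pcal=\emptyset$ is trivial: both problems are then unconstrained in the $\mathbf{P}$-part, taking $\mathbf{Y}$ large handles \eqref{eq:OCI_Y}, and I would treat it separately.

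\textbf{Step 2 (optimal gain for a fixed $\mathbf{Y}$).} For fixed $\mathbf{Y}\succeq\zeros$, I would show that the smallest achievable $\mathbf{B}$ over $\mathbf{K}$ with $\mathbf{K}\mathbf{H}=\mathbf{I}$ and $\mathbf{B}\succeq\mathbf{K}\mathbf{R}\mathbf{K}^\top+\mathbf{K}\mathbf{C}\mathbf{Y}^{+}\mathbf{C}^\top\mathbf{K}^\top$ (suitably interpreted on $\col\mathbf{Y}$) is $(\mathbf{H}^\top\mathbf{R}^{-1}\mathbf{H}-\mathbf{U}_\mathbf{Y})^{-1}$, with $\mathbf{U}_\mathbf{Y}:=\mathbf{H}^\top\mathbf{R}^{-1}\mathbf{C}(\mathbf{Y}+\mathbf{C}^\top\mathbf{R}^{-1}\mathbf{C})^+\mathbf{C}^\top\mathbf{R}^{-1}\mathbf{H}$.

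When $\mathbf{Y}\succ\zeros$ this is the BLUE/Kalman formula for noise covariance $\mathbf{Q}=\mathbf{R}+\mathbf{C}\mathbf{Y}^{-1}\mathbf{C}^\top$. The Woodbury identity gives $\mathbf{Q}^{-1}=\mathbf{R}^{-1}-\mathbf{R}^{-1}\mathbf{C}(\mathbf{Y}+\mathbf{C}^\top\mathbf{R}^{-1}\mathbf{C})^{-1}\mathbf{C}^\top\mathbf{R}^{-1}$. Then $\mathbf{H}^\top\mathbf{Q}^{-1}\mathbf{H}=\mathbf{H}^\top\mathbf{R}^{-1}\mathbf{H}-\mathbf{U}_\mathbf{Y}$, and the optimal gain $(\mathbf{H}^\top\mathbf{Q}^{-1}\mathbf{H})^{-1}\mathbf{H}^\top\mathbf{Q}^{-1}$ is exactly \eqref{eq:K_star_OCI_Y}. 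The matrix inequality $\mathbf{K}\mathbf{Q}\mathbf{K}^\top\succeq(\mathbf{H}^\top\mathbf{Q}^{-1}\mathbf{H})^{-1}$ for every left inverse $\mathbf{K}$ of $\mathbf{H}$ is the standard Gauss--Markov bound, proved by completing the square.

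For singular $\mathbf{Y}$, I would use a limiting or regularization argument, $\mathbf{Y}+\epsilon\mathbf{I}$ with $\epsilon\to0$. Alternatively, I would work with the information form $\mathbf{Q}^{-1}$ directly, since $\mathbf{R}^{-1}-\mathbf{R}^{-1}\mathbf{C}(\mathbf{Y}+\mathbf{C}^\top\mathbf{R}^{-1}\mathbf{C})^+\mathbf{C}^\top\mathbf{R}^{-1}$ stays well defined and positive semidefinite. I expect this to be the main obstacle: justifying the pseudoinverse formulas, the invertibility of $\mathbf{H}^\top\mathbf{R}^{-1}\mathbf{H}-\mathbf{U}_\mathbf{Y}$, and the equivalence with Step~1 when $\mathbf{Y}$ and $\mathbf{D}$ are singular. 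This requires showing $\col(\mathbf{C}^\top\mathbf{R}^{-1}\mathbf{H})\subseteq\col(\mathbf{Y}+\mathbf{C}^\top\mathbf{R}^{-1}\mathbf{C})$, which holds because $\ker(\mathbf{Y}+\mathbf{C}^\top\mathbf{R}^{-1}\mathbf{C})\subseteq\ker\mathbf{C}$.

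\textbf{Step 3 (LMI form and conclusion).} By Schur complements, \eqref{eq:OCI_Y_U} is equivalent to $\mathbf{U}\succeq\mathbf{U}_\mathbf{Y}$, using the range inclusion from Step~2. Likewise, \eqref{eq:OCI_Y_B} is equivalent to $\mathbf{H}^\top\mathbf{R}^{-1}\mathbf{H}-\mathbf{U}\succ\zeros$ and $\mathbf{B}\succeq(\mathbf{H}^\top\mathbf{R}^{-1}\mathbf{H}-\mathbf{U})^{-1}$. Note that \eqref{eq:OCI_Y_B} forces invertibility, since its upper-right block is $\mathbf{I}$.

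The two directions then close as follows. Any feasible $(\mathbf{Y},\mathbf{U},\mathbf{B})$ yields, via Step~2, the gain $\mathbf{K}^\star$ with $\mathbf{B}\succeq\mathbf{K}^\star(\mathbf{R}+\mathbf{C}\mathbf{P}\mathbf{C}^\top)\mathbf{K}^{\star\top}$ for all $\mathbf{P}\in\Pcal$, because $\mathbf{P}^{-1}\succeq\mathbf{Y}$. Conversely, a feasible $(\mathbf{K}^\circ,\mathbf{B}^\circ)$ yields $\mathbf{Y}^\bullet$ satisfying \eqref{eq:OCI_Y_Y} by Step~1. Then $\mathbf{U}^\bullet=\mathbf{U}_{\mathbf{Y}^\bullet}$, and $\mathbf{B}^\bullet\preceq\mathbf{B}^\circ$ by Step~2, since $\mathbf{K}^\circ$ is one admissible gain for $\mathbf{Y}^\bullet$. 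This gives equal feasibility.

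Objective values are preserved in both directions, with $J(\mathbf{B}^\bullet)\le J(\mathbf{B}^\circ)$. Strict monotonicity in Assumption~\ref{ass:J} only covers $\succ$, so to get $\le$ from $\preceq$ I would argue by optimality: if $J(\mathbf{B}^\bullet)>J(\mathbf{B}^\circ)$ were possible, perturbation arguments would contradict optimality. Failing that, I would verify $\mathbf{B}^\bullet=\mathbf{B}^\circ$ at optimum directly. Together these show that optimal points map to optimal points.
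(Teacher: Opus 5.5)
Your plan follows the paper's route, i.e., its Lemma~\ref{lem:feas_lem}:
\begin{itemize}
\item a feasibility-preserving map in each direction;
\item the generalized Schur complement producing $\mathbf{Y}^\bullet$ and $\col(\mathbf{K}^\circ\mathbf{C})\subseteq\col(\mathbf{B}^\circ-\mathbf{K}^\circ\mathbf{R}\mathbf{K}^{\circ\top})$;
\item Woodbury plus the Gauss--Markov bound for fixed $\mathbf{Y}$;
\item Schur-complement readings of \eqref{eq:OCI_Y_B}--\eqref{eq:OCI_Y_U}.
\end{itemize}
Incidentally, $\Pcal$ is never empty ($\epsilon\mathbf{I}\in\Pcal$ for small $\epsilon>0$), so your case split is unnecessary.

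\textbf{The deferred singular case.} What you defer as ``the main obstacle'' is most of the paper's proof. Your regularization $\mathbf{Y}+\epsilon\mathbf{I}$ only works once three facts are established. Write $\mathbf{G}=\mathbf{K}^\circ\mathbf{C}$ and $\mathbf{D}=\mathbf{B}^\circ-\mathbf{K}^\circ\mathbf{R}\mathbf{K}^{\circ\top}$.
\begin{itemize}
\item[(a)] You need $\ker\mathbf{Y}^\bullet\subseteq\ker\mathbf{G}$, which follows from $\col\mathbf{G}\subseteq\col\mathbf{D}$. Without it, $\mathbf{G}(\mathbf{Y}^\bullet+\epsilon\mathbf{I})^{-1}\mathbf{G}^\top$ diverges and $\mathbf{K}^\circ$ is not ``admissible'' for $\mathbf{Y}^\bullet$. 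This is the paper's extra constraint $\mathbf{K}\mathbf{C}\mathbf{V}_\perp=\mathbf{0}$ in \eqref{eq:OCI_mod_prob}.
\item[(b)] You need $\mathbf{D}\succeq\mathbf{G}\mathbf{Y}^{\bullet+}\mathbf{G}^\top$, so that $\mathbf{B}^\circ$ really dominates $\mathbf{K}^\circ(\mathbf{R}+\mathbf{C}\mathbf{Y}^{\bullet+}\mathbf{C}^\top)\mathbf{K}^{\circ\top}$.
\item[(c)] In the direction from \eqref{eq:OCI_Y} to \eqref{eq:OCI_orig_prob}, Step~2 is a \emph{lower} bound over gains. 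It does not bound $\mathbf{K}^\star(\mathbf{R}+\mathbf{C}\mathbf{P}\mathbf{C}^\top)\mathbf{K}^{\star\top}$ from above. For that you need \eqref{eq:ineq_RU}: $(\mathbf{R}+\mathbf{C}\mathbf{P}\mathbf{C}^\top)^{-1}\succeq\mathbf{\Lambda}_{\mathbf{Y}}:=\mathbf{R}^{-1}-\mathbf{R}^{-1}\mathbf{C}(\mathbf{Y}+\mathbf{C}^\top\mathbf{R}^{-1}\mathbf{C})^+\mathbf{C}^\top\mathbf{R}^{-1}$ whenever $\mathbf{P}^{-1}\succeq\mathbf{Y}$. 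The paper proves this via a spectral factorization and Proposition~\ref{prop:schur}\eqref{schur:ineq11}. Regularizing both $\mathbf{Y}$ and $\mathbf{P}^{-1}$ by $\epsilon\mathbf{I}$, using $\ker(\mathbf{Y}+\mathbf{C}^\top\mathbf{R}^{-1}\mathbf{C})\subseteq\ker\mathbf{C}$, also works. You then finish with $\mathbf{\Lambda}\mathbf{Q}\mathbf{\Lambda}\preceq\mathbf{\Lambda}$ whenever $\mathbf{Q}^{-1}\succeq\mathbf{\Lambda}\succeq\mathbf{0}$.
\end{itemize}
With (a)--(c) in hand, your $\epsilon$-limit is a legitimate and somewhat shorter substitute for the paper's $\mathbf{S}$-projection and Proposition~\ref{prop:matrix_id}.

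\textbf{The genuine gap: the last step.} Neither of your patches works.
\begin{itemize}
\item Perturbing to $\mathbf{B}^\circ+\epsilon\mathbf{I}$ says nothing about $J(\mathbf{B}^\circ)$ unless $J$ is continuous.
\item $\mathbf{B}^\bullet=\mathbf{B}^\circ$ at an optimum is false in general. Take $\mathbf{C}=\mathbf{0}$, $\mathbf{H}=\mathbf{I}_2$, $\mathbf{R}=\diag(1,1/2)$ and $J=\lambda_{\max}$. Then $\mathbf{B}^\circ=\diag(1,0.7)$ is optimal while $\mathbf{B}^\bullet=\mathbf{R}$.
\end{itemize}

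The missing observation is that the feasible set of \eqref{eq:OCI_Y} is upward closed in $\mathbf{B}$. Since $\mathbf{B}^\circ\succeq\mathbf{B}^\bullet$, the triple $(\mathbf{Y}^\bullet,\mathbf{U}^\bullet,\mathbf{B}^\circ)$ also satisfies \eqref{eq:OCI_Y_B}. Hence the two problems have exactly the same set of feasible $\mathbf{B}$. That gives the feasibility equivalence and the first statement with no monotonicity of $J$ at all.

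The second statement insists on the specific $\mathbf{B}^\bullet=(\mathbf{H}^\top\mathbf{R}^{-1}\mathbf{H}-\mathbf{U}^\bullet)^{-1}$. It therefore genuinely needs $\mathbf{B}^\bullet\preceq\mathbf{B}^\circ\Rightarrow J(\mathbf{B}^\bullet)\le J(\mathbf{B}^\circ)$. The paper takes this for granted through its appeal to the BLUE minimizing $J(\mathbf{K}\mathbf{Q}\mathbf{K}^\top)$. It holds for trace, determinant, $\lambda_{\max}$, or any continuous $J$ satisfying Assumption~\ref{ass:J}, but not under Assumption~\ref{ass:J} alone. For a counterexample, take $\mathbf{C}=\mathbf{0}$, $\mathbf{H}=\mathbf{R}=\mathbf{I}_2$, and let $J(\mathbf{B})$ equal:
\begin{itemize}
\item $B_{22}$ if $B_{22}<1$;
\item $6-\tanh(\min\{B_{11},2\})$ if $B_{22}=1$;
\item $B_{22}+10$ if $B_{22}>1$.
\end{itemize}
Assumption~\ref{ass:J} holds and $\mathbf{B}^\circ=\diag(2,1)$ is optimal, yet $J(\mathbf{B}^\bullet)=J(\mathbf{I})>J(\mathbf{B}^\circ)$. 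So add non-strict monotonicity as a hypothesis rather than trying to derive it.
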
%
\begin{proof}
	See Appendix~\ref{sec:proof_lem_equiv_OCI_prob}.
\end{proof}

Note that the OCI problem \eqref{eq:OCI_Y} is formulated using inverses of bounds of admissible covariance matrices.\footnote{Within the CI field, the inverse covariance intersection approach studied in \cite{NoackSijsEtAl2017,NygardsDeleskogEtAl2018,AjglStraka2020} also handles inverses of covariance matrices to address the basic CI information structure setting under a specific decomposition of the estimates and covariance bounds. The Kalman filter in information form \cite[Chap.~6]{GrewalAndrews2001} also handles inverses of covariance matrices to prevent numerical issues.} This allows to numerically account for unbounded components efficiently, which addresses shortcoming~(i) described in Section III-B. However, it comes at the expense of a more complex characterization of the bounds $\mathbf{B}$ in the OCI problem \eqref{eq:OCI_orig_prob}. It turns out that resorting to conditions on the positive definiteness of block matrices allows to express the bounds $\mathbf{B}$ resorting to two linear matrix inequalities \eqref{eq:OCI_Y_B} and \eqref{eq:OCI_Y_U}. This enables efficient optimization with off-the-shelf solvers, which addresses shortcoming~(ii).

\begin{remark}
	Besides designing a gain $\mathbf{K}$ and obtaining a fused covariance bound $\mathbf{B}$, one may be also interested in obtaining a bound for $\mathbf{P}$ or on some components of $\mathbf{P}$, i.e., $\mathbf{D}\mathbf{P}\mathbf{D}^\top$ with $\mathbf{D}\in \R^{d\times m}$. In that case, one can add a regularization term to \eqref{eq:OCI_Y} to obtain a good bound with negligible increase of the original objective. Clearly, for such bound $\mathbf{M}\in S_{++}^d$ to exist, $\mathbf{D}\mathbf{P}\mathbf{D}^\top$ needs to be bounded for all $\mathbf{P}\in \Pcal$, which by making cosmetic changes to the proof of Lemma~\ref{lem:boundedness} is the case if and only if $\rank(\mathbf{W}) = \rank([\mathbf{W}^\top\; \mathbf{D}^\top]^\top)$. From Lemma~\ref{lem:inverse_bounds}, it follows that ${\mathbf{D}\mathbf{P}\mathbf{D}^\top \preceq \mathbf{M}} \iff {\mathbf{P}^{-1} \succeq \mathbf{D}^\top\mathbf{M}^{-1}\mathbf{D}}$. So, from $\mathbf{Y}$, one can obtain a bound $\mathbf{M}$ as $\mathbf{Y}\succeq \mathbf{D}^\top\mathbf{M}^{-1}\mathbf{D}$, which by Proposition~\ref{prop:schur}\eqref{schur:Apd} can also be written as a LMI, and one can add a regularization term to the objective to minimize $G(\mathbf{M})$, where $G$ is any performance criterion that also satisfies Assumption~\ref{ass:J}. Then, \eqref{eq:OCI_Y} becomes
	\begin{equation*}
		\begin{aligned}
			&\min_{\substack{\mathbf{Y} \in S_{+}^m, \mathbf{U},\mathbf{B}\in S_+^n\\ \mathbf{M}\in S_{++}^d}}  &\quad & J(\mathbf{B}) + \gamma G(\mathbf{M})\\
			&\quad\quad \;\; \mathrm{s.t.} && \begin{bmatrix} \mathbf{B}&  \mathbf{I} \\ \mathbf{I} & \mathbf{H}^\top\mathbf{R}^{-1}\mathbf{H}-\mathbf{U}\end{bmatrix}  \succeq \mathbf{0} \\
			&  && \begin{bmatrix} \mathbf{U}&  \mathbf{H}^\top\mathbf{R}^{-1}\mathbf{C} \\ (\mathbf{H}^\top\mathbf{R}^{-1}\mathbf{C} )^\top & \mathbf{Y}+\mathbf{C}^\top \mathbf{R}^{-1}\mathbf{C}\end{bmatrix}  \succeq \mathbf{0}\\
			&&&  \begin{bmatrix} \mathbf{M}&  \mathbf{D} \\ \mathbf{D}^\top & \mathbf{Y}\end{bmatrix}  \succeq \mathbf{0}\\
			&& & \mathbf{Y} \preceq \mathbf{P}^{-1},\; \forall \mathbf{P} \in \Pcal,
		\end{aligned}
	\end{equation*}
where $\gamma >0$ is a small regularization weight. For example, in the toy cooperative localization problem presented in Section~\ref{sec:motivating_example}, it is advantageous to obtain a bound $\mathbf{X}_i^{k}$ on $\EV[\tilde{\boldsymbol{\chi}}_i^{k} \tilde{\boldsymbol{\chi}}_i ^{k\top}]$ to be used in the fusion instance at time $k+1$. Such a bound can be obtained using the procedure outlined in this remark.
\end{remark}

Remarkably, the formulation of the OCI problem as \eqref{eq:OCI_Y} allows to obtain a necessary and sufficient condition for the feasibility of the original OCI problem \eqref{eq:OCI_orig_prob}, as shown in the following result. Indeed, given $\{\mathbf{W}_b\}_{b \in \{1,2,\ldots,M\}}$, $\mathbf{H}$, $\mathbf{R}$, and $\mathbf{C}$, it is possible to evaluate a simple rank condition to conclude on the feasibility of the OCI problem \eqref{eq:OCI_Y}.
\vspace{-0cm}
\begin{condition}\label{cond:feas}
	The matrix $\mathbf{H}^\top\mathbf{R}^{-1}\mathbf{H}-\mathbf{H}^\top\mathbf{R}^{-1}\mathbf{C}(\mathbf{W}^\top\mathbf{W}+\mathbf{C}^\top \mathbf{R}^{-1}\mathbf{C})^+ \mathbf{C}^\top\mathbf{R}^{-1}\mathbf{H}$ is full rank.
\end{condition}
\vspace{-0.2cm}

\begin{theorem}\label{th:feas_iff}
	Under Assumption~\ref{ass:J}, the OCI problem \eqref{eq:OCI_orig_prob} is feasible if and only if Condition~\ref{cond:feas} holds.
\end{theorem}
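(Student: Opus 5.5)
By Theorem~\ref{th:equiv_OCI_prob}, the OCI problem \eqref{eq:OCI_orig_prob} is feasible if and only if \eqref{eq:OCI_Y} is feasible. It therefore suffices to show that \eqref{eq:OCI_Y} is feasible if and only if Condition~\ref{cond:feas} holds. I will do this in three steps: first reduce \eqref{eq:OCI_Y} to a single matrix condition on $\mathbf{Y}$, then show that this condition is monotone in $\mathbf{Y}$ in a suitable sense, and finally identify an extremal admissible $\mathbf{Y}$ whose nullspace is that of $\mathbf{W}$.

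For the reduction, write $\mathbf{G} := \mathbf{H}^\top\mathbf{R}^{-1}\mathbf{H}$, $\mathbf{F} := \mathbf{H}^\top\mathbf{R}^{-1}\mathbf{C}$, and $\mathbf{S}(\mathbf{Y}) := \mathbf{Y}+\mathbf{C}^\top\mathbf{R}^{-1}\mathbf{C}$.
\begin{itemize}
\item By the generalized Schur complement (Proposition~\ref{prop:schur}), \eqref{eq:OCI_Y_B} admits some $\mathbf{B}$ if and only if $\mathbf{G}-\mathbf{U}\succ\zeros$. Semidefiniteness alone is not enough, because the off-diagonal block $\mathbf{I}$ must lie in the column space of $\mathbf{G}-\mathbf{U}$.
\item Again by Proposition~\ref{prop:schur}, \eqref{eq:OCI_Y_U} holds if and only if $\col\mathbf{F}^\top\subseteq\col\mathbf{S}(\mathbf{Y})$ and $\mathbf{U}\succeq\mathbf{F}\mathbf{S}(\mathbf{Y})^+\mathbf{F}^\top$.
\item The range condition is automatic. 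Since $\mathbf{Y}\succeq\zeros$, $\col \mathbf{S}(\mathbf{Y})\supseteq \col\mathbf{C}^\top\supseteq\col \mathbf{C}^\top\mathbf{R}^{-1}\mathbf{H} = \col\mathbf{F}^\top$.
\end{itemize}
Hence \eqref{eq:OCI_Y} is feasible if and only if there exists $\mathbf{Y}$ satisfying \eqref{eq:OCI_Y_Y} such that $\mathbf{G}-\mathbf{F}\mathbf{S}(\mathbf{Y})^+\mathbf{F}^\top\succ\zeros$. For the reverse direction one picks $\mathbf{U}=\mathbf{F}\mathbf{S}(\mathbf{Y})^+\mathbf{F}^\top$ and $\mathbf{B}=(\mathbf{G}-\mathbf{U})^{-1}$. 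Note also that $\mathbf{G}-\mathbf{F}\mathbf{S}^+\mathbf{F}^\top\succeq\zeros$ always holds, since it is a Schur complement of $[\mathbf{H}\;\mathbf{C}]^\top\mathbf{R}^{-1}[\mathbf{H}\;\mathbf{C}]+\diag(\zeros,\mathbf{Y})\succeq\zeros$. So "full rank" is the same as "positive definite" for this matrix.

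The second step is to show that the positive-definiteness condition depends on $\mathbf{Y}$ only through $\ker\mathbf{Y}$, and only monotonically. Take a vector $\mathbf{v}$ with $\mathbf{v}^\top(\mathbf{G}-\mathbf{F}\mathbf{S}^+\mathbf{F}^\top)\mathbf{v}=0$. The variational form of the Schur complement gives
\[
\min_{\mathbf{w}}\;(\mathbf{H}\mathbf{v}-\mathbf{C}\mathbf{w})^\top\mathbf{R}^{-1}(\mathbf{H}\mathbf{v}-\mathbf{C}\mathbf{w})+\mathbf{w}^\top\mathbf{Y}\mathbf{w}=0.
\]
Because $\mathbf{R}\succ\zeros$, this holds if and only if there is $\mathbf{w}\in\ker\mathbf{Y}$ with $\mathbf{H}\mathbf{v}=\mathbf{C}\mathbf{w}$. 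Therefore the matrix is singular if and only if $\col\mathbf{H}\cap\mathbf{C}(\ker\mathbf{Y})\neq\{\zeros\}$; here $\mathbf{v}\neq\zeros$ forces $\mathbf{H}\mathbf{v}\neq\zeros$ since $\mathbf{H}$ has full column rank. The identical argument with $\mathbf{Y}$ replaced by $\mathbf{W}^\top\mathbf{W}$ shows that Condition~\ref{cond:feas} is equivalent to $\col\mathbf{H}\cap\mathbf{C}(\ker\mathbf{W})=\{\zeros\}$.

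The third step, which I expect to be the main obstacle, is to show that $\ker\mathbf{W}$ is the smallest kernel attainable by a $\mathbf{Y}$ satisfying \eqref{eq:OCI_Y_Y}.
\begin{itemize}
\item \emph{Necessity of the kernel inclusion.} For any feasible $\mathbf{Y}$, we have $\ker\mathbf{W}\subseteq\ker\mathbf{Y}$. Following the unboundedness argument of Lemma~\ref{lem:boundedness}: if $\mathbf{w}\in\ker\mathbf{W}$, then $\mathbf{P}+t\mathbf{w}\mathbf{w}^\top\in\Pcal$ for all $t\ge0$. The quantity $\mathbf{w}^\top(\mathbf{P}+t\mathbf{w}\mathbf{w}^\top)^{-1}\mathbf{w}\to0$ as $t\to\infty$, which forces $\mathbf{w}^\top\mathbf{Y}\mathbf{w}=0$ and hence $\mathbf{Y}\mathbf{w}=\zeros$.
\item \emph{Attainability.} Conversely, there is a feasible $\mathbf{Y}$ with $\ker\mathbf{Y}=\ker\mathbf{W}$. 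Since each $\mathbf{X}_b\succ\zeros$, choose $\epsilon>0$ with $\epsilon\mathbf{I}\preceq\mathbf{X}_b^{-1}$ for all $b$. Then $\mathbf{Y}:=\frac{\epsilon}{M}\mathbf{W}^\top\mathbf{W}=\frac{1}{M}\sum_b\epsilon\mathbf{W}_b^\top\mathbf{W}_b$ satisfies $\mathbf{Y}\preceq\frac1M\sum_b\mathbf{Y}_b\preceq\mathbf{P}^{-1}$ for all $\mathbf{P}\in\Pcal$, by Lemma~\ref{lem:inverse_bounds}.
\end{itemize}
Combining the steps completes the argument. If Condition~\ref{cond:feas} holds, the $\mathbf{Y}$ from the attainability construction has $\ker\mathbf{Y}=\ker\mathbf{W}$, so $\col\mathbf{H}\cap\mathbf{C}(\ker\mathbf{Y})=\{\zeros\}$ and this $\mathbf{Y}$ makes \eqref{eq:OCI_Y} feasible. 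If instead \eqref{eq:OCI_Y} is feasible with some $\mathbf{Y}$, then $\ker\mathbf{W}\subseteq\ker\mathbf{Y}$ gives $\mathbf{C}(\ker\mathbf{W})\subseteq\mathbf{C}(\ker\mathbf{Y})$, so $\col\mathbf{H}\cap\mathbf{C}(\ker\mathbf{W})=\{\zeros\}$ and Condition~\ref{cond:feas} holds.
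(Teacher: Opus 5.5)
Your proof is correct, and its skeleton matches the paper's. Both arguments proceed in the same steps:
\begin{enumerate}[(a)]
\item reduce to feasibility of \eqref{eq:OCI_Y} via Theorem~\ref{th:equiv_OCI_prob};
\item turn the two LMIs for fixed $\mathbf{Y}$ into invertibility of $\mathbf{H}^\top\mathbf{R}^{-1}\mathbf{H}-\mathbf{H}^\top\mathbf{R}^{-1}\mathbf{C}(\mathbf{Y}+\mathbf{C}^\top\mathbf{R}^{-1}\mathbf{C})^+\mathbf{C}^\top\mathbf{R}^{-1}\mathbf{H}$, which is the first part of Proposition~\ref{prop:help_feas_iff};
\item show this condition is monotone in $\col\mathbf{Y}$;
\item compare with an admissible $\mathbf{Y}$ whose range is $\row\mathbf{W}$. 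The paper uses $\mathbf{W}^\top\diag(\mathbf{X}_1,\ldots,\mathbf{X}_M)^{-1}\mathbf{W}/M$; your $\frac{\epsilon}{M}\mathbf{W}^\top\mathbf{W}$ serves equally well.
\end{enumerate}

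Where you differ is the monotonicity step. The paper derives it from the representation $\mathbf{H}^\top\mathbf{S}(\mathbf{S}^\top(\mathbf{R}+\mathbf{C}\mathbf{Y}^+\mathbf{C}^\top)\mathbf{S})^{-1}\mathbf{S}^\top\mathbf{H}$ of Proposition~\ref{prop:matrix_id}, which it obtains through an $\epsilon$-regularization and a limit. You instead use the variational form of the generalized Schur complement to get the criterion $\col\mathbf{H}\cap\mathbf{C}(\ker\mathbf{Y})=\{\zeros\}$ directly. In the paper's notation this is exactly the statement that $\mathbf{S}^\top\mathbf{H}$ has full column rank, so the two criteria agree. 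Your derivation buys three things:
\begin{itemize}
\item it is more elementary;
\item it shows explicitly that the matrix is always positive semidefinite, so ``full rank'' means ``positive definite'';
\item it gives Condition~\ref{cond:feas} the transparent form $\col\mathbf{H}\cap\mathbf{C}(\ker\mathbf{W})=\{\zeros\}$. In words, no unbounded direction of $\mathbf{P}$ enters $\mathbf{z}$ exactly like a change of $\mathbf{x}$, which is the intuition behind Remark~\ref{rm:feas_iff}.
\end{itemize}

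Your ray argument with $\mathbf{P}+t\mathbf{w}\mathbf{w}^\top\in\Pcal$ also rigorously establishes $\ker\mathbf{W}\subseteq\ker\mathbf{Y}$ for every admissible $\mathbf{Y}$. The paper merely asserts this step (``the bounded components of $\mathbf{P}$ are characterized by the row space of $\mathbf{W}$''). Two points are left implicit in your write-up. First, $\Pcal\neq\emptyset$ is needed to start the ray; it holds because, e.g., $\delta\mathbf{I}\in\Pcal$ for small $\delta>0$. Second, you rely on the standing assumption that $\mathbf{H}$ has full column rank.
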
%
\begin{proof}
	See Appendix~\ref{sec:proof_feas_iff}
\end{proof}

To address shortcoming~(iii) described in Section~\ref{sec:sota_apprch}, similarly to the literature on other flavors of the CI problem, we parameterize a family of bounds for all $\mathbf{P}\! \in\! \Pcal$, i.e., we find a parameterization of a family of matrices $\mathbf{Y}$ that satisfy $\mathbf{Y}\!\preceq\! \mathbf{P}^{-1}$ for all $\mathbf{P}\!\in\! \Pcal$. Recall from Example~\ref{eg:bounds} that each bound $\mathbf{Y}_b$ can be geometrically interpreted as a (possibly degenerate) ellipsoid and that $\Pcal$ can be interpreted as the intersection of all ellipsoids characterized by $\{\mathbf{Y}_b\}_{b \in \{1,2,\ldots,M\}}$. Therefore, finding a family of bounds for all $\mathbf{P} \in \Pcal$ amounts to finding a family of bounds for the intersection of $M$ ellipsoids, also known as a family of circumscribing ellipsoids. We opt to use the very simple family studied in \cite{Kahan1968}, which is characterized by
\begin{equation*}\label{eq:Kahan_parametrization}
	\textstyle\mathbf{Y} = \sum_{b=1}^M\boldsymbol{\omega}_b\mathbf{Y}_b,
\end{equation*}
where $\boldsymbol{\omega} \in \Rnn^{M}$ is a vector that parameterizes the family and must sum to one, i.e.,  $\boldsymbol{\omega} \in  \Delta^M := \{\boldsymbol{\omega} \in \Rnn^{M} : \ones^\top \boldsymbol{\omega} = 1 \}$. We say that a bound $\mathbf{Y}$ is tight if there is no $\mathbf{S}\neq \mathbf{Y}$ with $\mathbf{S}\preceq \mathbf{Y}$ that is also a bound. In \cite{Kahan1968} it is shown that this family characterizes every tight ellipsoid when $M = 2$ and the intersection of the boundary of both ellipsoids is nonempty. For $M>2$ it is also shown that when the intersection of the boundary of all ellipsoids is nonempty (which is rarely the case), each element of the family is a tight bound. However, for $M>2$ the family does not characterize, in general, all tight bounds, therefore an OCI-optimal bound may not be contained in this family. Henceforth, we call this the \emph{Kahan family} of bounding ellipsoids. Remarkably, this family is a generalization of the most common family of bounding ellipsoids used for the basic CI problem, e.g., \cite{ReinhardtNoackEtAl2015}. 

Restricting the OCI problem to this family of bounding ellipsoids amounts to replacing the information constraints $\mathbf{W}_b\mathbf{P}\mathbf{W}_b^\top \preceq \mathbf{X}_b$, with $b = 1,2,\ldots,M$, with a more conservative information constraint $\mathbf{P}^{-1} \revm{\succeq}  \sum_{b=1}^M\boldsymbol{\omega}_b\mathbf{Y}_b$, where the choice of the parameters $\boldsymbol{\omega}$ only needs to satisfy $\boldsymbol{\omega} \in \Delta^M$. Incorporating the conservative information constraints into the OCI problem \eqref{eq:OCI_orig_prob} leads to the \emph{Kahan-family} OCI problem
\begin{equation}\label{eq:OCI_kahan}
	\begin{aligned}
		&\!\!\!\min_{\substack{\mathbf{K}\in \R^{n\!\times \!o}, \mathbf{B}\in S^n_+\\ \boldsymbol{\omega} \in \Delta^M}} && \!\!\!\! J(\mathbf{B})\\
		&\!\!\!\;\quad\quad \mathrm{s.t.} &&  \!\!\!\!  \mathbf{K}\mathbf{H} = \mathbf{I}\\  
		&&& \!\!\!\! 	\mathbf{B} \succeq \mathbf{K}(\mathbf{R} + \mathbf{C}\mathbf{P}\mathbf{C}^\top)\mathbf{K}^\top\!,\; \forall \mathbf{P} \in \Pcal_{\mathrm{KF}}(\boldsymbol{\omega}) 
	\end{aligned}%
\end{equation}%
where
\begin{equation*}
	\textstyle\Pcal_{\mathrm{KF}}(\boldsymbol{\omega}):= \left\{\mathbf{P} \in S_{++}^m : \mathbf{P}^{-1} \revm{\succeq}   \sum_{b=1}^M\boldsymbol{\omega}_b\mathbf{Y}_b \right\}.
\end{equation*}
Notice that considering more conservative bounds on the available information via the Kahan family of bounding ellipsoids amounts to tightening the constraint $\mathbf{B}\! \succeq\! \mathbf{K}(\mathbf{R}\! +\! \mathbf{C}\mathbf{P}\mathbf{C}^\top)\mathbf{K}^\top,\, \forall \mathbf{P} \!\in\!\Pcal,$ in the OCI problem \eqref{eq:OCI_orig_prob}. Notice that $M-1$ degrees of freedom have been introduced via $\boldsymbol{\omega}$ in \eqref{eq:OCI_kahan} to parameterize the family. Let $(\mathbf{K}^\star,\mathbf{B}^\star,\boldsymbol{\omega}^\star)$ be a solution to \eqref{eq:OCI_kahan}. We say that $(\mathbf{K}^\star,\mathbf{B}^\star)$ is the \emph{Kahan-family-optimal solution} to the OCI problem \eqref{eq:OCI_orig_prob} associated with the \emph{optimal Kahan bounding ellipsoid} $\mathbf{Y}^\star = \sum_{b=1}^M\boldsymbol{\omega}^\star_b\mathbf{Y}_b$. A computationally efficient characterization of the Kahan-family-optimal solution to the OCI problem \eqref{eq:OCI_orig_prob} follows as a corollary of Theorems~\ref{th:equiv_OCI_prob} and~\ref{th:feas_iff}.

\begin{oldcorollary}\label{cor:OCI_Kahan_optimal}
	The pair $(\mathbf{K}^\star,\mathbf{B}^\star)$ is a Kahan-family-optimal solution to the OCI problem \eqref{eq:OCI_orig_prob}, where $ (\mathbf{U}^\star, \mathbf{B}^\star, \boldsymbol{\omega}^\star)\in$
	\begin{equation}\label{eq:OCI_Kahan_optimal}
		\begin{aligned}
			& \!\!\!\!\underset{\substack{\mathbf{U},\mathbf{B}\in S_+^n\\ \boldsymbol{\omega} \in \Delta^M}}{\argmin} &\quad \!\!\!\!\!&J(\mathbf{B})\\
			&\!\!\!\mathrm{s.t.} && \!\!\begin{bmatrix} \mathbf{B}&  \mathbf{I} \\ \mathbf{I} & \mathbf{H}^\top\mathbf{R}^{-1}\mathbf{H}-\mathbf{U}\end{bmatrix}  \succeq \mathbf{0}\\
			&  && \!\!\begin{bmatrix} \mathbf{U}&  \mathbf{H}^\top\mathbf{R}^{-1}\mathbf{C} \\ (\mathbf{H}^\top\mathbf{R}^{-1}\mathbf{C} )^\top & \!\!\sum_{b=1}^M\!\boldsymbol{\omega}_b\mathbf{Y}_b +\mathbf{C}^\top\! \mathbf{R}^{-1}\mathbf{C}\end{bmatrix} \! \succeq \!\mathbf{0}\!\!\!\!  
		\end{aligned}
	\end{equation}
	and \vspace{-0.3cm}
	\begin{equation*}
		\begin{split}
				\mathbf{K}^\star \! := \!&\left(\!\!\mathbf{H}^\top\!\mathbf{R}^{-\!1}\!\!\left(\!\!\mathbf{R}\!-\!\mathbf{C}\!\left( \sum_{b=1}^M\!\boldsymbol{\omega}^\star_b\mathbf{Y}_b\!+\!\mathbf{C}^\top\! \mathbf{R}^{-\!1}\mathbf{C}\!\!\right)^{\!\!+}\!\!\! \mathbf{C}^\top\!\!\right)\!\!\mathbf{R}^{\!-\!1}\mathbf{H}\right)^{\!\!\!-\!1}\\
				&\;\;\;\mathbf{H}^\top\!\mathbf{R}^{-\!1}\!\!\left(\!\!\mathbf{R}\!-\!\mathbf{C}\!\left( \sum_{b=1}^M\!\boldsymbol{\omega}^\star_b\mathbf{Y}_b\!+\!\mathbf{C}^\top\! \mathbf{R}^{-\!1}\mathbf{C}\!\!\right)^{\!\!+}\!\!\! \mathbf{C}^\top\!\!\right)\!\!\mathbf{R}^{\!-\!1}.
		\end{split}
	\end{equation*}
Furthermore, \eqref{eq:OCI_Kahan_optimal} is feasible if and only if Condition~\ref{cond:feas} holds or, equivalently, the OCI problem \eqref{eq:OCI_orig_prob} is feasible.
\end{oldcorollary}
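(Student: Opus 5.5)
The plan is to view the Kahan-family problem \eqref{eq:OCI_kahan}, for each fixed $\boldsymbol{\omega}\in\Delta^M$, as an instance of the OCI problem \eqref{eq:OCI_orig_prob}. We then apply Theorem~\ref{th:equiv_OCI_prob} to that instance and optimize jointly over $\boldsymbol{\omega}$.

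\textbf{Step 1: recast $\Pcal_{\mathrm{KF}}(\boldsymbol{\omega})$ as an admissible set.} Fix $\boldsymbol{\omega}\in\Delta^M$. Since $\mathbf{Y}_b=\mathbf{W}_b^\top\mathbf{X}_b^{-1}\mathbf{W}_b$, we can write $\sum_b\boldsymbol{\omega}_b\mathbf{Y}_b=\mathbf{W}_\omega^\top\mathbf{X}_\omega^{-1}\mathbf{W}_\omega$. Here $\mathbf{W}_\omega$ stacks the $\mathbf{W}_b$ with $\boldsymbol{\omega}_b>0$, and $\mathbf{X}_\omega=\diag(\mathbf{X}_b/\boldsymbol{\omega}_b)$. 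By Lemma~\ref{lem:inverse_bounds} with the single bound $(\mathbf{W}_\omega,\mathbf{X}_\omega)$, the set $\Pcal_{\mathrm{KF}}(\boldsymbol{\omega})$ is an admissible set of the form \eqref{eq:Pcal_def}. Its largest information bound is attained: every $\mathbf{Y}$ with $\mathbf{Y}\preceq\mathbf{P}^{-1}$ for all $\mathbf{P}\in\Pcal_{\mathrm{KF}}(\boldsymbol{\omega})$ satisfies $\mathbf{Y}\preceq\mathbf{Y}_\omega:=\sum_b\boldsymbol{\omega}_b\mathbf{Y}_b$. To see this, take $\mathbf{P}=(\mathbf{Y}_\omega+\epsilon\mathbf{I})^{-1}$ and let $\epsilon\to0$.

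\textbf{Step 2: apply Theorem~\ref{th:equiv_OCI_prob} for fixed $\boldsymbol{\omega}$.} The inner problem in $(\mathbf{K},\mathbf{B})$ is equivalent to \eqref{eq:OCI_Y} with constraint \eqref{eq:OCI_Y_Y} replaced by $\mathbf{Y}\preceq\mathbf{Y}_\omega$.

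Next we eliminate $\mathbf{Y}$. The lower-right block of \eqref{eq:OCI_Y_U} is monotone in $\mathbf{Y}$, so increasing $\mathbf{Y}$ to $\mathbf{Y}_\omega$ preserves feasibility of \eqref{eq:OCI_Y_U} for any fixed $\mathbf{U}$. Hence we may set $\mathbf{Y}=\mathbf{Y}_\omega$ without loss of optimality. This yields exactly the constraints in \eqref{eq:OCI_Kahan_optimal} for fixed $\boldsymbol{\omega}$. The gain is \eqref{eq:K_star_OCI_Y} with $\mathbf{Y}^\star=\mathbf{Y}_\omega$, which is the stated $\mathbf{K}^\star$.

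\textbf{Step 3: minimize jointly over $\boldsymbol{\omega}$.} Minimizing over $\boldsymbol{\omega}$ on both sides is a joint minimum, so an argmin $(\mathbf{U}^\star,\mathbf{B}^\star,\boldsymbol{\omega}^\star)$ of \eqref{eq:OCI_Kahan_optimal} yields a solution of \eqref{eq:OCI_kahan}. The converse direction of Theorem~\ref{th:equiv_OCI_prob} guarantees that the optimal values coincide, so no better $(\mathbf{K},\mathbf{B},\boldsymbol{\omega})$ exists.

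\textbf{Step 4: feasibility; this is the main obstacle.} We must show that \eqref{eq:OCI_Kahan_optimal} is feasible for some $\boldsymbol{\omega}$ if and only if Condition~\ref{cond:feas} holds.

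The tool comes from the proof of Theorem~\ref{th:feas_iff}: feasibility of \eqref{eq:OCI_Y} with a fixed information matrix $\mathbf{Y}$ amounts to full rank of $\mathbf{H}^\top\mathbf{R}^{-1}\mathbf{H}-\mathbf{H}^\top\mathbf{R}^{-1}\mathbf{C}(\mathbf{Y}+\mathbf{C}^\top\mathbf{R}^{-1}\mathbf{C})^+\mathbf{C}^\top\mathbf{R}^{-1}\mathbf{H}$. I expect this rank to depend only on $\col(\mathbf{Y})$.

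For the uniform choice $\boldsymbol{\omega}=\ones/M$, we have $\col(\mathbf{Y}_\omega)=\sum_b\row\mathbf{W}_b=\row\mathbf{W}=\col(\mathbf{W}^\top\mathbf{W})$. Hence the condition coincides with Condition~\ref{cond:feas}, which gives sufficiency.

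For necessity, any $\boldsymbol{\omega}$ gives $\col(\mathbf{Y}_\omega)\subseteq\row\mathbf{W}$. A smaller range can only lose rank, so feasibility for some $\boldsymbol{\omega}$ implies feasibility with $\row\mathbf{W}$, i.e., Condition~\ref{cond:feas}. Equivalently, $\Pcal_{\mathrm{KF}}(\boldsymbol{\omega})\supseteq\Pcal$, so feasibility of \eqref{eq:OCI_kahan} implies feasibility of \eqref{eq:OCI_orig_prob}, and Theorem~\ref{th:feas_iff} then yields the condition.

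The delicate point is verifying the range-only dependence of the rank. I would check it by splitting $\R^m$ into $\col(\mathbf{Y})\oplus\ker\mathbf{Y}$ and using the Schur complement facts in Proposition~\ref{prop:schur}.
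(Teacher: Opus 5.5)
Your proposal is correct and follows the paper's proof. You fix $\boldsymbol{\omega}$, invoke Theorem~\ref{th:equiv_OCI_prob}, and minimize over $\boldsymbol{\omega}$; you then settle feasibility with the uniform weights plus Proposition~\ref{prop:help_feas_iff} for sufficiency, and with $\Pcal\subseteq\Pcal_{\mathrm{KF}}(\boldsymbol{\omega})$ plus Theorem~\ref{th:feas_iff} for necessity. Your Steps 1--2 (maximality of $\sum_b\boldsymbol{\omega}_b\mathbf{Y}_b$ as an information bound, and monotonicity of \eqref{eq:OCI_Y_U} in $\mathbf{Y}$) spell out the reduction the paper simply calls immediate.

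The range-only dependence you flag as delicate is exactly the second statement of Proposition~\ref{prop:help_feas_iff}, so you can cite it rather than re-derive it. In Step 3 it is cleaner to cite Lemma~\ref{lem:feas_lem}, which compares feasible points, rather than Theorem~\ref{th:equiv_OCI_prob}, which only speaks about solutions that a fixed-$\boldsymbol{\omega}$ problem need not possess.
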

\begin{proof}
	See Appendix~\ref{sec:proof_cor_OCI_Kahan_optimal}.
\end{proof}

Crucially, the feasible set of \eqref{eq:OCI_Kahan_optimal} is convex, since it is characterized by two LMIs and a linear equality constraint. Therefore, if $J$ is convex, then \eqref{eq:OCI_Kahan_optimal} is a convex optimization problem and enjoys a plethora of desirable properties such as robustness to changes in input parameters and the existence of efficient numerical algorithms with global optimality guarantees \cite{BoydVandenberghe2004}. Notice that this formulation addresses shortcomings~(ii) and~(iii), pointed out in Section~\ref{sec:sota_apprch}, as a result. Moreover, for common choices of $J$ such as the trace or determinant, one can write \eqref{eq:OCI_Kahan_optimal} as a SDP, for which well-performing off-the-shelf solvers with polynomial worst-case complexity exist \cite{VandenbergheBoyd1996}\cite[Section~6]{MOSEK2024}. 

\begin{remark}
	If one desires to use the determinant as the metric $J$, then the problem \eqref{eq:OCI_Kahan_optimal} needs to be slightly modified to be cast as a SDP. Indeed, since $\det(\mathbf{X}^{-1}) = \det(\mathbf{X})^{-1}$ and the logarithm is strictly increasing, the objective of \eqref{eq:OCI_Kahan_optimal} should be chosen as $-\mathrm{logdet}(\mathbf{H}^\top\mathbf{R}^{-1}\mathbf{H}-\mathbf{U})$ so that it is convex and allows to cast \eqref{eq:OCI_Kahan_optimal} as a SDP (see \cite[Section~6.2.3]{MOSEK2024} for details).
\end{remark}

\begin{example}
	Solving the Kahan-family-optimal OCI problem \eqref{eq:OCI_Kahan_optimal} with trace minimization for the two-dimensional example in Example~\ref{eg:bounds} yields $\mathbf{K}^\star = {[2.1535 \;\; -3.9684]}$, and $\mathbf{B}^\star = 0.9248$. The corresponding optimal Kahan bounding ellipsoid $\mathbf{Y}^\star$ is depicted in Fig.~\ref{fig:eg2_2D}. One can compare the fusion performance between the proposed OCI and SCI, even though it is unfair since SCI uses less information. Indeed, in Fig.~\ref{fig:eg2_2D_SCI}, we depict information bounds that are analogous to the ones  in Example~\ref{eg:bounds} and that are compatible with the SCI framework. The optimal SCI solution yields  $\mathbf{K}^{\star\mathrm{SCI}} = {[2.3894 \;\; -3.6607]}$, $\mathbf{B}^{\star\mathrm{SCI}} = 4.829$, and the optimal bounding ellipsoid $\mathbf{Y}^{\star\mathrm{SCI}}$ is depicted in Fig.~\ref{fig:eg2_2D_SCI}. One concludes that the set of admissible covariance matrices $\Pcal$ is significantly larger under the information structure of SCI, which also explains the significantly larger fused covariance $\mathbf{B}^{\star\mathrm{SCI}}$.
\end{example}

\begin{figure}[ht]
	\centering
	\includegraphics[width = 0.9\linewidth]{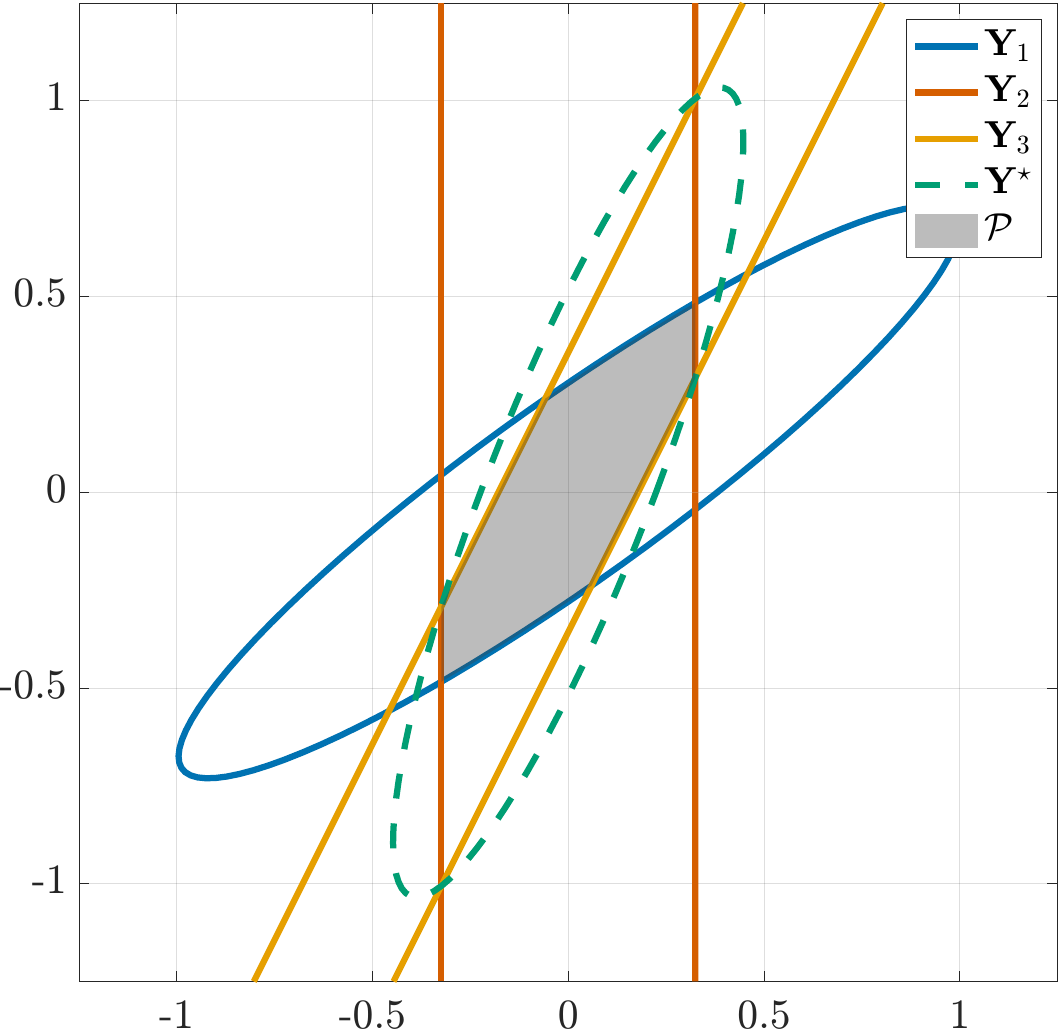}
	\caption{Illustrative trace-minimization OCI solution for two-dimensional scenario in Example~\ref{eg:bounds}.}\label{fig:eg2_2D}
\end{figure}

\begin{figure}[ht]
	\centering
	\includegraphics[width = 0.9\linewidth]{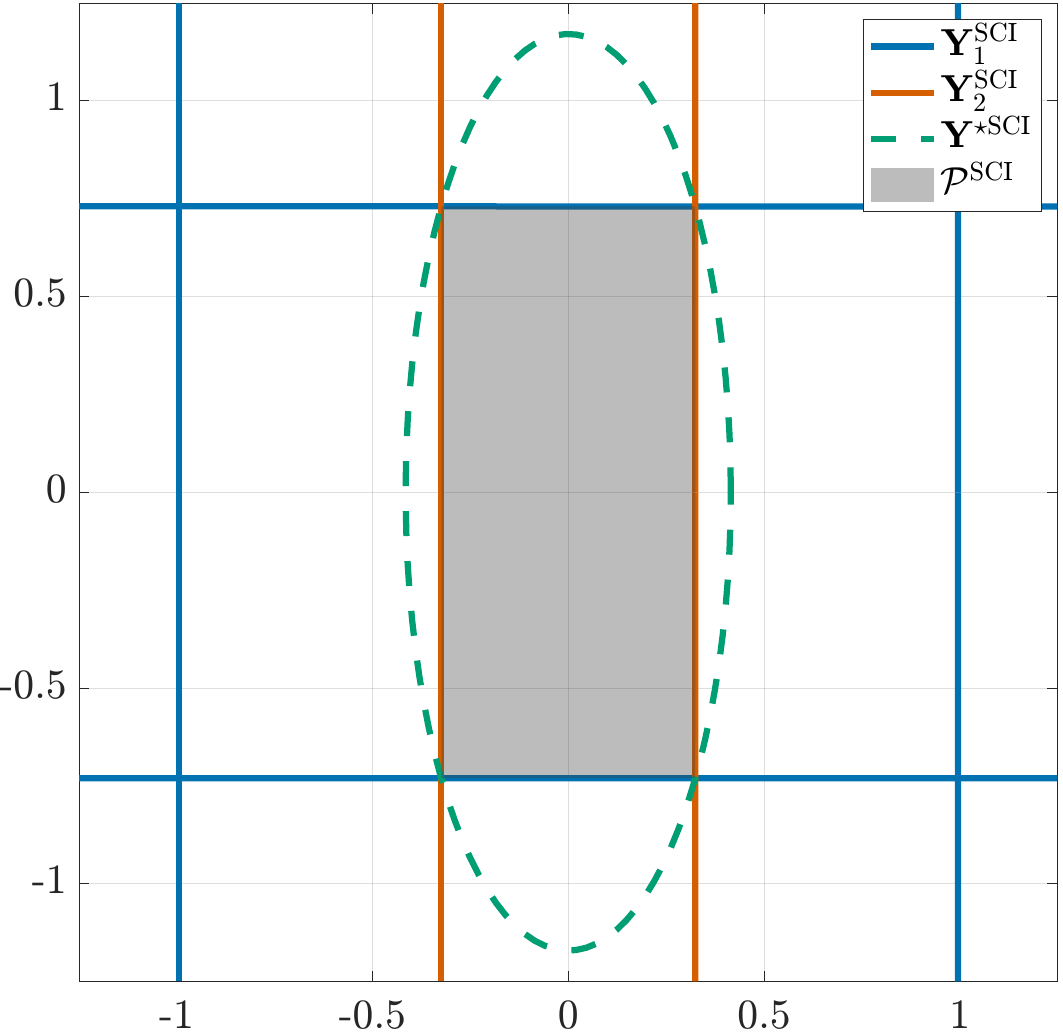}
	\caption{Illustrative trace-minimization SCI solution for two-dimensional scenario in Example~\ref{eg:bounds}.}\label{fig:eg2_2D_SCI}
\end{figure}

\begin{remark}
	A MATLAB implementation of the computation of the Kahan-family-optimal solution of the OCI problem, as well as the code of all numerical examples in this paper, is available in an open-access repository at \href{https://github.com/decenter2021/OCI}{{\small \texttt{github.com/decenter2021/OCI}}}.
\end{remark}
\section{Conclusion}

The distributed fusion problem addressed in this paper stems from the cooperative localization problem in emerging ultra large-scale engineering systems. In these settings, each agent has access to noisy data from multiple sensors and from communication with other agents that must be fused. It is infeasible to keep track of the covariance between all measurements, but it is feasible that partial structural knowledge about the joint estimation error covariance matrix is tracked by the agents in a distributed fusion framework. The following conclusions were drawn in this paper. First, this problem can be expressed as a generalized covariance intersection (CI) problem and was named overlapping covariance intersection (OCI). Second, we establish necessary and sufficient conditions on the available information for the feasibility of the OCI problem, which turn out to be very mild. Third, we restrict the problem to a parameterized family of bounds for the joint estimation error covariance matrix (which is a generalization of the family of bounds used in the basic CI setting). A solution to the restricted OCI problem is given by a semidefinite program (SDP), which is computationally tractable and suitable for real-time implementation. 

Future work should study whether it is possible to use a less conservative family of bounds while maintaining attractive computational properties. The analysis in \cite{AjglStraka2018} may prove fruitful in that regard. Furthermore, deriving analogous results for the cases $\mathbf{R} = \zeros$ and $\mathbf{R}\succeq \zeros$ would enable the SDP characterization of a family-optimal solution to established CI problems that are a particularization of OCI.

\appendices
\section{Proofs}

The following results on the positive (semi)definiteness of block matrices will be instrumental in establishing the results in this paper.

\begin{proposition}\label{prop:schur}
	Let $\mathbf{X} = \begin{bmatrix} \mathbf{A} & \mathbf{B} \\  \mathbf{B}^\top & \mathbf{C}
	\end{bmatrix}$ be a symmetric block matrix with $\mathbf{A}\in \R^{p\times p}$ and $\mathbf{C}\in \R^{q\times q}$. Then:
	\begin{enumerate}[(i)]
		\item If $\mathbf{A}\succ \mathbf{0}$, then $\mathbf{X} \succeq \mathbf{0} \iff  \mathbf{C}-\mathbf{B}^\top\mathbf{A}^{-1}\mathbf{B} \succeq \mathbf{0}$;\label{schur:Apd}
		\item   $\mathbf{X} \!\succeq \!\mathbf{0} \iff  \mathbf{A}\succeq \mathbf{0},\; \col{\mathbf{B}}\subseteq \col{\mathbf{A}},\;\mathbf{C}\!-\!\mathbf{B}^\top\mathbf{A}^{+}\mathbf{B} \succeq \mathbf{0}$;\label{schur:Apsd}
		\item If $\mathbf{B} \!=\! \mathbf{I}_p$, then $\mathbf{X}\!\succeq \!\mathbf{0} \! \implies\! {\mathbf{A} \!\succ\! \mathbf{0}},\; {\mathbf{C} \!\succ\! \mathbf{0}},\;\mathbf{A} \!\succeq \! \mathbf{C}^{-1},\; \mathbf{A}^{-1} \preceq \mathbf{C}$;\label{schur:BIr}
		\item  If $\mathbf{B} = \mathbf{I}_p$, then $\mathbf{A} \succ \mathbf{0}, \mathbf{C} \succ \mathbf{0}$ and either $ \mathbf{A} \succeq \mathbf{C}^{-1}$ or $\mathbf{A}^{-1} \preceq \mathbf{C}$ implies $\mathbf{X}\succeq \mathbf{0}$;\label{schur:BIl}
		\item If $\mathbf{X} \succ \mathbf{0}$, then $\mathbf{X}^{-1} \succeq \begin{bmatrix} \mathbf{Q} & \mathbf{0} \\  \mathbf{0} &\mathbf{0}\end{bmatrix} \implies \mathbf{A}^{-1} \succeq \mathbf{Q}$,  where $\mathbf{Q}\in S_{+}^q$.\label{schur:ineq11}
	\end{enumerate}
\end{proposition}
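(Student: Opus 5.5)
The plan is to treat the five items separately. Items (i) and (ii) are the classical (generalized) Schur complement lemmas. Items (iii)–(iv) follow from (i)–(ii) together with the fact that matrix inversion is order-reversing on $S^p_{++}$. Item (v) uses a congruence argument together with the fact that $\mathbf{M}\mathbf{M}^\top$ and $\mathbf{M}^\top\mathbf{M}$ have the same nonzero eigenvalues.

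\emph{Items (i) and (ii).} For (i), with $\mathbf{A}\succ\mathbf{0}$, I would use the congruence
\begin{equation*}
\begin{bmatrix}\mathbf{I} & \mathbf{0}\\ -\mathbf{B}^\top\mathbf{A}^{-1} & \mathbf{I}\end{bmatrix}\mathbf{X}\begin{bmatrix}\mathbf{I} & -\mathbf{A}^{-1}\mathbf{B}\\ \mathbf{0} & \mathbf{I}\end{bmatrix}=\begin{bmatrix}\mathbf{A} & \mathbf{0}\\ \mathbf{0} & \mathbf{C}-\mathbf{B}^\top\mathbf{A}^{-1}\mathbf{B}\end{bmatrix},
\end{equation*}
which preserves positive semidefiniteness in both directions. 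Item (ii) is the Albert generalized Schur complement; I would cite it from \cite{Zhang2005} or prove it directly as follows.
\begin{itemize}
\item[] \emph{Necessity of the range condition.} If $\mathbf{X}\succeq\mathbf{0}$ and $\mathbf{v}\in\ker\mathbf{A}$, then testing $\mathbf{X}$ on $(t\mathbf{v},\mathbf{w})$ gives $2t\,\mathbf{v}^\top\mathbf{B}\mathbf{w}+\mathbf{w}^\top\mathbf{C}\mathbf{w}\ge 0$ for all $t$. Hence $\mathbf{B}^\top\mathbf{v}=\mathbf{0}$, i.e.\ $\col\mathbf{B}\subseteq\col\mathbf{A}$.
\item[] \emph{Both directions of the remaining equivalence.} Given the range condition, $\mathbf{A}\mathbf{A}^+\mathbf{B}=\mathbf{B}$. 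Then the congruence above, with $\mathbf{A}^{-1}$ replaced by $\mathbf{A}^+$, block-diagonalizes $\mathbf{X}$ into $\mathrm{diag}(\mathbf{A},\,\mathbf{C}-\mathbf{B}^\top\mathbf{A}^+\mathbf{B})$.
\end{itemize}

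\emph{Items (iii) and (iv).} For (iii), take $\mathbf{B}=\mathbf{I}_p$, so $q=p$.
\begin{itemize}
\item[] By (ii), $\mathbf{X}\succeq\mathbf{0}$ gives $\col\mathbf{I}\subseteq\col\mathbf{A}$, so $\mathbf{A}\succeq\mathbf{0}$ is invertible and hence $\mathbf{A}\succ\mathbf{0}$.
\item[] Then (i) gives $\mathbf{C}\succeq\mathbf{A}^{-1}\succ\mathbf{0}$.
\item[] Inverting the inequality $\mathbf{C}\succeq\mathbf{A}^{-1}$, using that inversion reverses the order on $S^p_{++}$, gives $\mathbf{C}^{-1}\preceq\mathbf{A}$.
\end{itemize}
To justify the order reversal, note that $\mathbf{P}\succeq\mathbf{Q}\succ\mathbf{0}$ implies $\mathbf{Q}^{-1/2}\mathbf{P}\mathbf{Q}^{-1/2}\succeq\mathbf{I}$. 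Hence the eigenvalues of $\mathbf{Q}^{1/2}\mathbf{P}^{-1}\mathbf{Q}^{1/2}$ are at most one, i.e.\ $\mathbf{P}^{-1}\preceq\mathbf{Q}^{-1}$.

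For (iv), the same order-reversal fact shows that, for $\mathbf{A},\mathbf{C}\succ\mathbf{0}$, the two alternatives $\mathbf{A}\succeq\mathbf{C}^{-1}$ and $\mathbf{C}\succeq\mathbf{A}^{-1}$ are equivalent. Under either one, (i) yields $\mathbf{X}\succeq\mathbf{0}$.

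\emph{Item (v).} This is the only step that needs care. The obvious route goes through the top-left block of $\mathbf{X}^{-1}$, which equals $(\mathbf{A}-\mathbf{B}\mathbf{C}^{-1}\mathbf{B}^\top)^{-1}$. That route fails: it yields an inequality on the Schur complement, and passing from there to $\mathbf{A}$ goes in the wrong direction. Also, $\mathbf{Q}$ may be singular, which rules out a direct Schur-complement argument on $\mathbf{Q}^{-1}$. Note that the dimension of $\mathbf{Q}$ must match that of $\mathbf{A}$, so I read $\mathbf{Q}\in S^p_+$.

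Instead I would set $\mathbf{E}:=[\mathbf{I}_p\;\mathbf{0}]$, so that the hypothesis reads $\mathbf{X}^{-1}\succeq\mathbf{E}^\top\mathbf{Q}\mathbf{E}$ and $\mathbf{A}=\mathbf{E}\mathbf{X}\mathbf{E}^\top$. The steps are:
\begin{itemize}
\item[] Congruence by $\mathbf{X}^{1/2}$ gives $\mathbf{I}\succeq\mathbf{M}^\top\mathbf{M}$ with $\mathbf{M}:=\mathbf{Q}^{1/2}\mathbf{E}\mathbf{X}^{1/2}$.
\item[] Since $\mathbf{M}\mathbf{M}^\top$ and $\mathbf{M}^\top\mathbf{M}$ share their nonzero eigenvalues, this gives $\mathbf{Q}^{1/2}\mathbf{A}\mathbf{Q}^{1/2}=\mathbf{M}\mathbf{M}^\top\preceq\mathbf{I}$.
\item[] Repeating the same swap with $\mathbf{N}:=\mathbf{Q}^{1/2}\mathbf{A}^{1/2}$ (note $\mathbf{A}\succ\mathbf{0}$ as a principal block of $\mathbf{X}\succ\mathbf{0}$) turns this into $\mathbf{A}^{1/2}\mathbf{Q}\mathbf{A}^{1/2}\preceq\mathbf{I}$.
\item[] A final congruence by $\mathbf{A}^{-1/2}$ gives $\mathbf{Q}\preceq\mathbf{A}^{-1}$.
\end{itemize}
This argument handles singular $\mathbf{Q}$ uniformly, so no perturbation $\mathbf{Q}+\epsilon\mathbf{I}$ is needed.
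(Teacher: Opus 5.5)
Your proposal is correct. For (i)--(iv) it matches the paper in substance: the paper only cites \cite{Zhang2005} and \cite{HornJohnson2012}, and your congruence arguments plus order reversal of inversion on $S^p_{++}$ are the standard proofs of those results. You are also right that $\mathbf{Q}$ in (v) must be $p\times p$; in the proof of Lemma~\ref{lem:feas_lem} it is applied with $\mathbf{Q}=\mathbf{D}$, which has the size of the top-left block.

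The genuine difference is in (v). The paper writes, via the Banachiewicz formula, $\mathbf{X}^{-1}=\diag(\mathbf{A}^{-1},\mathbf{0})+\mathbf{G}(\mathbf{C}-\mathbf{B}^\top\mathbf{A}^{-1}\mathbf{B})^{-1}\mathbf{G}^\top$ with $\mathbf{G}=[-(\mathbf{A}^{-1}\mathbf{B})^\top\;\mathbf{I}]^\top$, and calls (v) immediate. Strictly, this decomposition only gives the converse implication for free. The stated direction needs one more observation: the test vectors $\mathbf{v}=[\mathbf{x}^\top\;(\mathbf{B}^\top\mathbf{A}^{-1}\mathbf{x})^\top]^\top$ lie in $\ker\mathbf{G}^\top$, so the hypothesis evaluated at $\mathbf{v}$ reads $\mathbf{x}^\top\mathbf{A}^{-1}\mathbf{x}\geq\mathbf{x}^\top\mathbf{Q}\mathbf{x}$.

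Your route replaces the block-inverse formula by two square-root congruences and the $\mathbf{M}\mathbf{M}^\top$/$\mathbf{M}^\top\mathbf{M}$ spectral swap. It is a bit longer, but it buys three things:
\begin{itemize}
\item it is self-contained;
\item every step is an equivalence, so you get both directions at once;
\item it works verbatim with $[\mathbf{I}_p\;\mathbf{0}]$ replaced by any full-row-rank $\mathbf{E}$ and $\mathbf{A}:=\mathbf{E}\mathbf{X}\mathbf{E}^\top$, giving a variant of Lemma~\ref{lem:inverse_bounds} that allows singular $\mathbf{Q}$.
\end{itemize}
The paper's route is shorter once the formula is granted. It also shows structurally why $\mathbf{A}^{-1}$ is the sharp bound: the slack of $\mathbf{X}^{-1}$ over $\diag(\mathbf{A}^{-1},\mathbf{0})$ vanishes on a subspace whose first-block projection is all of $\R^p$.

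Incidentally, the test vector above equals $\mathbf{X}\mathbf{E}^\top\mathbf{A}^{-1}\mathbf{x}$. Substituting it directly into $\mathbf{X}^{-1}\succeq\mathbf{E}^\top\mathbf{Q}\mathbf{E}$ gives a one-line proof for general $\mathbf{E}$ that needs neither the formula nor matrix square roots.
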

\begin{proof}
	Statement~(i) follows from \cite[Theorem~1.12(b)]{Zhang2005}. Statement~(ii) follows from a particularization of \cite[Theorem~1.20]{Zhang2005}, which holds for any choice of generalized inverse of $\mathbf{A}$, to the Moore-Penrose inverse $\mathbf{A}^+$. Statements~(iii) and~(iv) follow from \cite[Corollary~7.7.10]{HornJohnson2012}. To prove statement~(v), note that, by \cite[Theorem~1.12(a)]{Zhang2005}, since $\mathbf{X}\succ \mathbf{0}$, then $\mathbf{A} \succ \mathbf{0}$, $\mathbf{C} \succ \mathbf{0}$, $\mathbf{C}-\mathbf{B}^\top\mathbf{A}^{-1}\mathbf{B} \succ \mathbf{0}$ and by the Banachiewicz inversion formula \cite[Equation~(0.7.2)]{Zhang2005}, it follows that
	\begin{equation}\label{eq:Banachiewicz}
		\begin{split}
		\mathbf{X}^{-1} &= \begin{bmatrix} \mathbf{A}^{\!-1} & \mathbf{0} \\ \mathbf{0}& \mathbf{0}
		\end{bmatrix}\\&+\begin{bmatrix} -\mathbf{A}^{-1} \mathbf{B}\\ \mathbf{I} \end{bmatrix} (\mathbf{C}-\mathbf{B}^\top\mathbf{A}^{-1}\mathbf{B})^{-1}\begin{bmatrix} -\mathbf{A}^{-1} \mathbf{B}\\ \mathbf{I} \end{bmatrix}^{\top},
		\end{split}
	\end{equation}
	which establishes~(v) immediately.
\end{proof}
\subsection{Proof of Lemma~\ref{lem:inverse_bounds}}\label{sec:proof_lem_inverse_bounds}

To prove the result, one has to show that for any $\mathbf{P}\in S_{++}^m$ and any $b\in \{1,2,\ldots,M\}$, 	$\mathbf{W}_b\mathbf{P}\mathbf{W}_b^\top \preceq \mathbf{X}_b \iff \mathbf{P}^{-1} \succeq \mathbf{Y}_b = \mathbf{W}_b^\top \mathbf{X}_b^{-1}\mathbf{W}_b$. Consider any $\mathbf{P}\in S_{++}^m$ and any $b\in \{1,2,\ldots,M\}$, with $\mathbf{W}_b \in \R^{o_b\times m}$.  Since $\mathbf{P}\succ \mathbf{0}$, by Proposition~\ref{prop:schur}\eqref{schur:Apd}, it follows that $\mathbf{W}_b\mathbf{P}\mathbf{W}_b^\top \preceq \mathbf{X}_b$ is equivalent to 
\begin{equation*}
	\begin{bmatrix} \mathbf{P}^{-1} & \mathbf{W}_b^\top \\  \mathbf{W}_b & \mathbf{X}_b
	\end{bmatrix} \succeq \mathbf{0} \iff  \begin{bmatrix} \mathbf{X}_b  & \mathbf{W}_b \\  \mathbf{W}_b^\top & \mathbf{P}^{-1}
	\end{bmatrix} \succeq \mathbf{0}.
\end{equation*}
Since $\mathbf{X}_b \succ \mathbf{0}$, again by Proposition~\ref{prop:schur}\eqref{schur:Apd}, it follows that $\mathbf{W}_b\mathbf{P}\mathbf{W}_b^\top \preceq \mathbf{X}_b \iff \mathbf{P}^{-1} \succeq \mathbf{W}_b^\top \mathbf{X}_b^{-1}\mathbf{W}_b =  \mathbf{Y}_b$.


\subsection{Proof of Lemma~\ref{lem:boundedness}}\label{sec:proof_lem_boundedness}


The following proposition is required for the proof of the lemma.

\begin{proposition}\label{prop:big_bound_on_P}
	For all $\mathbf{P}\in \Pcal$ it holds that $\mathbf{P}^{-1} \succeq \mathbf{W}^\top\diag(\mathbf{X}_1, \mathbf{X}_2, \ldots,\mathbf{X}_M)^{-1}\mathbf{W}/M$.
\end{proposition}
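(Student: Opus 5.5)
The plan is to combine the individual inverse bounds from Lemma~\ref{lem:inverse_bounds} by averaging them, and then to recognise the average as the block-diagonal expression in the statement.

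First, fix any $\mathbf{P}\in\Pcal$. By Lemma~\ref{lem:inverse_bounds}, $\mathbf{P}^{-1}\succeq \mathbf{Y}_b = \mathbf{W}_b^\top\mathbf{X}_b^{-1}\mathbf{W}_b$ for every $b\in\{1,2,\ldots,M\}$. Since the Loewner order is preserved under nonnegative combinations, summing these $M$ inequalities with weights $1/M$ gives
\begin{equation*}
	\mathbf{P}^{-1} = \frac{1}{M}\sum_{b=1}^M \mathbf{P}^{-1} \succeq \frac{1}{M}\sum_{b=1}^M \mathbf{W}_b^\top\mathbf{X}_b^{-1}\mathbf{W}_b .
\end{equation*}

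Second, identify this sum with the block expression. Recall that $\mathbf{W} = [\mathbf{W}_1^\top\;\cdots\;\mathbf{W}_M^\top]^\top$. Because each $\mathbf{X}_b\succ\mathbf{0}$, we have $\diag(\mathbf{X}_1,\ldots,\mathbf{X}_M)^{-1} = \diag(\mathbf{X}_1^{-1},\ldots,\mathbf{X}_M^{-1})$. Expanding the block product then yields
\begin{equation*}
	\mathbf{W}^\top\diag(\mathbf{X}_1,\ldots,\mathbf{X}_M)^{-1}\mathbf{W} = \sum_{b=1}^M \mathbf{W}_b^\top\mathbf{X}_b^{-1}\mathbf{W}_b .
\end{equation*}
Substituting this identity into the previous display gives exactly the claimed bound.

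There is no genuine obstacle here. The only point that needs care is the block-diagonal identity, which is a routine block multiplication that respects the dimensions $o_b$ of each $\mathbf{W}_b$.
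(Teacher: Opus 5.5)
Your proof is correct and follows essentially the same route as the paper: invoke Lemma~\ref{lem:inverse_bounds} to get $\mathbf{P}^{-1}\succeq \mathbf{W}_b^\top\mathbf{X}_b^{-1}\mathbf{W}_b$ for each $b$, average the $M$ inequalities, and identify the resulting sum with $\mathbf{W}^\top\diag(\mathbf{X}_1,\ldots,\mathbf{X}_M)^{-1}\mathbf{W}$. You also state the block-diagonal expansion explicitly, which the paper uses without comment.
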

\begin{proof}
	For any $\mathbf{P} \in \Pcal$, it follows from Lemma~\ref{lem:inverse_bounds} that $\mathbf{P}^{-1} \succeq \mathbf{W}_b^\top \mathbf{X}_b^{-1}\mathbf{W}_b$ for all $b\in \{1,2,\ldots,M\}$. Therefore, $\mathbf{P}^{-1} = \sum_{b = 1}^M \mathbf{P}^{-1}/M \succeq \sum_{b = 1}^{M} \mathbf{W}_b^\top \mathbf{X}_b^{-1}\mathbf{W}_b/M = \mathbf{W}^\top\diag(\mathbf{X}_1, \mathbf{X}_2, \ldots,\mathbf{X}_M)^{-1}\mathbf{W}/M$.
\end{proof}

Notice that the first statement of the lemma is a particular case of the second statement. Indeed, if the second statement holds, then setting $\mathbf{C} = \mathbf{I}$ yields the first statement immediately. Therefore, in what follows, we present a proof only for the second statement of the lemma, which is more general.

One direction of the equivalence states that if there exists $\mathbf{Q}\in S_{++}^o$ such that $\mathbf{Q} \succeq \mathbf{R}+\mathbf{C}\mathbf{P}\mathbf{C}^\top$ for all $\mathbf{P}\in \Pcal$, then $\rank(\mathbf{W}) = \rank([\mathbf{W}^\top\; \mathbf{C}^\top]^\top)$. We prove the reciprocal of this implication, i.e., if $\rank(\mathbf{W}) \neq \rank([\mathbf{W}^\top\; \mathbf{C}^\top]^\top)$, then there is not $\mathbf{Q}\in S_{++}^o$ such that $\mathbf{Q} \succeq \mathbf{R}+\mathbf{C}\mathbf{P}\mathbf{C}^\top$ for all $\mathbf{P}\in \Pcal$. If $\rank(\mathbf{W}) \neq \rank([\mathbf{W}^\top\; \mathbf{C}^\top]^\top)$, then $\rank(\mathbf{W}) < \rank([\mathbf{W}^\top\; \mathbf{C}^\top]^\top)$, which means that there exists a nonnull vector $\mathbf{v} \in \R^m$ such that $\mathbf{W}\mathbf{v} = \mathbf{0}$ and $\mathbf{C}\mathbf{v} \neq \mathbf{0}$. Moreover, from the definition of $\mathbf{W}$, it follows that $\mathbf{W}_b\mathbf{v} = \mathbf{0}$ for all $b\in \{1,2,\ldots,M\}$. Take $\mathbf{P}_1 \in \Pcal$ and define $\mathbf{P}_2 = \mathbf{P}_1 + \alpha \mathbf{v}\mathbf{v}^\top$ for some $\alpha \geq 0$. Notice that for all $b\in \{1,2,\ldots,M\}$, $\mathbf{W}_b\mathbf{P}_2\mathbf{W}_b^\top = \mathbf{W}_b\mathbf{P}_1 \mathbf{W}_b^\top \preceq \mathbf{X}_b$, therefore $\mathbf{P}_2 \in \Pcal$ for any choice of $\alpha\geq 0$. Since $\mathbf{C}\mathbf{v} \neq \mathbf{0}$, there is no $\mathbf{Q}\in S_{++}^m$ such that $\mathbf{R}+\mathbf{C}\mathbf{P}_2\mathbf{C}^\top = \mathbf{R}+\mathbf{C}\mathbf{P}_1\mathbf{C}^\top + \alpha\mathbf{C}\mathbf{v}\mathbf{v}^\top\mathbf{C}^\top\preceq \mathbf{Q}$ holds for any  choice of $\alpha\geq 0$.

We now prove the other direction of the equivalence, i.e., if $\rank(\mathbf{W}) = \rank([\mathbf{W}^\top\; \mathbf{C}^\top]^\top)$, then there exists $\mathbf{Q}\in S_{++}^o$ such that $\mathbf{Q} \succeq \mathbf{R}+\mathbf{C}\mathbf{P}\mathbf{C}^\top$ for all $\mathbf{P}\in \Pcal$.  By Proposition~\ref{prop:big_bound_on_P}, it follows that  $\mathbf{P}^{-1} \succeq \mathbf{W}^\top\diag(\mathbf{X}_1, \mathbf{X}_2, \ldots,\mathbf{X}_M)^{-1}\mathbf{W}/M$ for all $\mathbf{P}\in \Pcal$. Notice that if $\rank(\mathbf{W}) = \rank([\mathbf{W}^\top\; \mathbf{C}^\top]^\top)$, then $\row \mathbf{C} \subseteq \row \mathbf{W}$, hence there exists a matrix $\mathbf{S}$ such that $\mathbf{C} = \mathbf{S}\mathbf{W}$. Since $\diag(\mathbf{X}_1, \mathbf{X}_2, \ldots,\mathbf{X}_M)^{-1} \succ \zeros$, then there exists $\epsilon>0$ such that $\diag(\mathbf{X}_1, \mathbf{X}_2, \ldots,\mathbf{X}_M)^{-1} \succeq \epsilon \mathbf{S}^\top \mathbf{S}$. Therefore, $\mathbf{P}^{-1} \succeq \mathbf{W}^\top\diag(\mathbf{X}_1, \mathbf{X}_2, \ldots,\mathbf{X}_M)^{-1}\mathbf{W}/M \succeq  (\epsilon/M) \mathbf{W}^\top \mathbf{S}^\top \mathbf{S} \mathbf{W} = (\epsilon/M)\mathbf{C}^\top\mathbf{C}$ for all $\mathbf{P}\in \Pcal$. Equivalently, by Lemma~\ref{lem:inverse_bounds}, $\mathbf{C}\mathbf{P}\mathbf{C^\top}\preceq \mathbf{I}M/\epsilon$ for all $\mathbf{P}\in \Pcal$. As a result, defining $\mathbf{Q} = \mathbf{R}+\mathbf{I}M/\epsilon$, it follows that $\mathbf{R} + \mathbf{C}\mathbf{P}\mathbf{C^\top}\preceq \mathbf{Q}$ for all $\mathbf{P}\in \Pcal$, which concludes the proof.

\subsection{Proof of Theorem~\ref{th:equiv_OCI_prob}}\label{sec:proof_lem_equiv_OCI_prob}

The proof of the result relies heavily on the following lemma, which also shows immediately that the OCI problem \eqref{eq:OCI_orig_prob} is feasible if and only if  \eqref{eq:OCI_Y} is feasible.
	
\begin{lemma}\label{lem:feas_lem}
	Assume Assumption~\ref{ass:J} holds. If $(\mathbf{Y}^\star,\mathbf{U}^\star,\mathbf{B}^\star)$ is in the feasible domain of \eqref{eq:OCI_Y}, then the pair $(\mathbf{K}^\star,\mathbf{B}^\star)$  defined in \eqref{eq:K_star_OCI_Y} is in the feasible domain of \eqref{eq:OCI_orig_prob}. If $(\mathbf{K}^{\circ},\mathbf{B}^\circ)$ is in the feasible domain of \eqref{eq:OCI_orig_prob}, the triple $(\mathbf{Y}^\bullet,\mathbf{U}^\bullet, \mathbf{B}^\bullet)$ defined in \eqref{eq:OCI_Y_ball}--\eqref{eq:OCI_U_ball} is in the feasible domain of \eqref{eq:OCI_Y} and  $J(\mathbf{B}^\bullet) \leq J(\mathbf{B}^\circ)$.
\end{lemma}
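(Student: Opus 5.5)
The plan is to treat the two directions separately. Both rely on the information-form (Woodbury) identity
\[
(\mathbf{R}+\mathbf{C}\mathbf{P}\mathbf{C}^\top)^{-1}=\mathbf{R}^{-1}-\mathbf{R}^{-1}\mathbf{C}(\mathbf{P}^{-1}+\mathbf{C}^\top\mathbf{R}^{-1}\mathbf{C})^{-1}\mathbf{C}^\top\mathbf{R}^{-1}
\]
and on the block-matrix facts in Proposition~\ref{prop:schur}.

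\emph{First direction.} Let $(\mathbf{Y}^\star,\mathbf{U}^\star,\mathbf{B}^\star)$ be feasible for \eqref{eq:OCI_Y}. Define
\[
\mathbf{L}:=\mathbf{R}^{-1}-\mathbf{R}^{-1}\mathbf{C}(\mathbf{Y}^\star+\mathbf{C}^\top\mathbf{R}^{-1}\mathbf{C})^+\mathbf{C}^\top\mathbf{R}^{-1},
\]
so that $\mathbf{K}^\star=(\mathbf{H}^\top\mathbf{L}\mathbf{H})^{-1}\mathbf{H}^\top\mathbf{L}$. The steps are as follows.

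\begin{enumerate}[(a)]
\item Show that $\mathbf{L}\succeq\mathbf{0}$. This holds because $\mathbf{L}\succeq\mathbf{R}^{-1}-\mathbf{R}^{-1}\mathbf{C}(\mathbf{C}^\top\mathbf{R}^{-1}\mathbf{C})^+\mathbf{C}^\top\mathbf{R}^{-1}$, and the right-hand side is $\mathbf{R}^{-1/2}$ times an orthogonal projector complement times $\mathbf{R}^{-1/2}$. The ordering of the pseudo-inverses follows from Proposition~\ref{prop:schur}\eqref{schur:Apsd}, applied to the block matrix with diagonal blocks $\mathbf{Y}^\star+\mathbf{C}^\top\mathbf{R}^{-1}\mathbf{C}$ and $\mathbf{C}^\top\mathbf{R}^{-1}\mathbf{C}$.
\item For each $\mathbf{P}\in\Pcal$, set $\mathbf{S}:=\mathbf{R}+\mathbf{C}\mathbf{P}\mathbf{C}^\top$. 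Since $\mathbf{P}^{-1}\succeq\mathbf{Y}^\star$, the same pseudo-inverse monotonicity argument gives $\mathbf{L}\preceq\mathbf{S}^{-1}$.
\item Apply Proposition~\ref{prop:schur}\eqref{schur:Apsd} to \eqref{eq:OCI_Y_U}. This yields $\mathbf{U}^\star\succeq\mathbf{H}^\top\mathbf{R}^{-1}\mathbf{C}(\mathbf{Y}^\star+\mathbf{C}^\top\mathbf{R}^{-1}\mathbf{C})^+\mathbf{C}^\top\mathbf{R}^{-1}\mathbf{H}$, hence $\mathbf{H}^\top\mathbf{L}\mathbf{H}\succeq\mathbf{H}^\top\mathbf{R}^{-1}\mathbf{H}-\mathbf{U}^\star$.
\item Apply Proposition~\ref{prop:schur}\eqref{schur:BIr} to \eqref{eq:OCI_Y_B}. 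This gives $\mathbf{H}^\top\mathbf{R}^{-1}\mathbf{H}-\mathbf{U}^\star\succ\mathbf{0}$ and $\mathbf{B}^\star\succeq(\mathbf{H}^\top\mathbf{R}^{-1}\mathbf{H}-\mathbf{U}^\star)^{-1}\succeq(\mathbf{H}^\top\mathbf{L}\mathbf{H})^{-1}$. In particular $\mathbf{H}^\top\mathbf{L}\mathbf{H}$ is invertible, so $\mathbf{K}^\star$ is well defined and satisfies $\mathbf{K}^\star\mathbf{H}=\mathbf{I}$.
\item Show $\mathbf{L}\mathbf{S}\mathbf{L}\preceq\mathbf{L}$. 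The matrix $\mathbf{S}^{1/2}\mathbf{L}\mathbf{S}^{1/2}$ has eigenvalues in $[0,1]$, so its square is dominated by itself. Substituting into $\mathbf{K}^\star\mathbf{S}\mathbf{K}^{\star\top}$ then gives $\mathbf{K}^\star\mathbf{S}\mathbf{K}^{\star\top}\preceq(\mathbf{H}^\top\mathbf{L}\mathbf{H})^{-1}\preceq\mathbf{B}^\star$, which is the required feasibility.
\end{enumerate}

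\emph{Second direction.} Let $(\mathbf{K}^\circ,\mathbf{B}^\circ)$ be feasible for \eqref{eq:OCI_orig_prob}. Write $\mathbf{G}:=\mathbf{K}^\circ\mathbf{C}$ and $\mathbf{D}:=\mathbf{B}^\circ-\mathbf{K}^\circ\mathbf{R}\mathbf{K}^{\circ\top}$, so that $\mathbf{D}\succeq\mathbf{G}\mathbf{P}\mathbf{G}^\top$ for all $\mathbf{P}\in\Pcal$.

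\begin{enumerate}[(a)]
\item Show \eqref{eq:OCI_Y_Y}. By Proposition~\ref{prop:schur}\eqref{schur:Apd}, with the invertible block $\mathbf{P}$, the block matrix $\left[\begin{smallmatrix}\mathbf{D}&\mathbf{G}\\ \mathbf{G}^\top&\mathbf{P}^{-1}\end{smallmatrix}\right]$ is positive semidefinite. Proposition~\ref{prop:schur}\eqref{schur:Apsd} then gives $\mathbf{P}^{-1}\succeq\mathbf{G}^\top\mathbf{D}^+\mathbf{G}=\mathbf{Y}^\bullet$.
\item Show \eqref{eq:OCI_Y_U}. This follows from Proposition~\ref{prop:schur}\eqref{schur:Apsd}, since $\col(\mathbf{C}^\top\mathbf{R}^{-1}\mathbf{H})\subseteq\col(\mathbf{Y}^\bullet+\mathbf{C}^\top\mathbf{R}^{-1}\mathbf{C})$ (a sum of positive semidefinite matrices contains the column space of each summand), and $\mathbf{U}^\bullet$ is exactly the corresponding Schur term.
\item Show that $\mathbf{H}^\top\mathbf{R}^{-1}\mathbf{H}-\mathbf{U}^\bullet\succ\mathbf{0}$ and $\mathbf{B}^\bullet\preceq\mathbf{B}^\circ$. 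Take $\mathbf{P}_\epsilon:=(\mathbf{Y}^\bullet+\epsilon\mathbf{I})^{-1}$ and $\mathbf{S}_\epsilon:=\mathbf{R}+\mathbf{C}\mathbf{P}_\epsilon\mathbf{C}^\top$. Next, show $\mathbf{G}\mathbf{P}_\epsilon\mathbf{G}^\top\preceq\mathbf{D}$. Since $\mathbf{G}^\top\mathbf{D}^+\mathbf{G}\preceq\mathbf{Y}^\bullet+\epsilon\mathbf{I}$, the block matrix $\left[\begin{smallmatrix}\mathbf{D}&\mathbf{G}\\ \mathbf{G}^\top&\mathbf{P}_\epsilon^{-1}\end{smallmatrix}\right]$ is positive semidefinite by Proposition~\ref{prop:schur}\eqref{schur:Apsd}, using $\col\mathbf{G}\subseteq\col\mathbf{D}$. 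That column-space inclusion holds because $\mathbf{D}\succeq\mathbf{G}\mathbf{P}\mathbf{G}^\top$ with $\mathbf{P}\succ\mathbf{0}$ for any $\mathbf{P}\in\Pcal$, and $\Pcal$ is nonempty (e.g., $t\mathbf{I}$ for small $t>0$). Hence $\mathbf{B}^\circ\succeq\mathbf{K}^\circ\mathbf{S}_\epsilon\mathbf{K}^{\circ\top}\succeq(\mathbf{H}^\top\mathbf{S}_\epsilon^{-1}\mathbf{H})^{-1}$, where the last step is the minimum-variance property over gains with $\mathbf{K}\mathbf{H}=\mathbf{I}$. Moreover $\mathbf{B}^\circ\succeq\mathbf{K}^\circ\mathbf{R}\mathbf{K}^{\circ\top}\succ\mathbf{0}$, so $\mathbf{H}^\top\mathbf{S}_\epsilon^{-1}\mathbf{H}\succeq\mathbf{B}^{\circ-1}$ for every $\epsilon>0$.
\item Pass to the limit $\epsilon\to0$. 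By Woodbury, $\mathbf{H}^\top\mathbf{S}_\epsilon^{-1}\mathbf{H}\to\mathbf{H}^\top\mathbf{R}^{-1}\mathbf{H}-\mathbf{U}^\bullet$, because $(\mathbf{Y}^\bullet+\epsilon\mathbf{I}+\mathbf{C}^\top\mathbf{R}^{-1}\mathbf{C})^{-1}$ converges to the pseudo-inverse on the relevant column space. This gives $\mathbf{H}^\top\mathbf{R}^{-1}\mathbf{H}-\mathbf{U}^\bullet\succeq\mathbf{B}^{\circ-1}\succ\mathbf{0}$, hence $\mathbf{B}^\bullet\preceq\mathbf{B}^\circ$. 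Constraint \eqref{eq:OCI_Y_B} then holds with equality in the Schur complement, by Proposition~\ref{prop:schur}\eqref{schur:BIl}.
\end{enumerate}

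\emph{Expected obstacles.} The main technical difficulty is the limiting argument in step (d) of the second direction. One must justify that restricting the pseudo-inverse to $\col(\mathbf{C}^\top\mathbf{R}^{-1}\mathbf{H})$ makes the limit exist and equal $\mathbf{U}^\bullet$. I would handle this by an eigen-decomposition of $\mathbf{Y}^\bullet+\mathbf{C}^\top\mathbf{R}^{-1}\mathbf{C}$, noting that its kernel is orthogonal to the range of $\mathbf{C}^\top\mathbf{R}^{-1}\mathbf{H}$. A second, subtler point is deducing $J(\mathbf{B}^\bullet)\le J(\mathbf{B}^\circ)$ from $\mathbf{B}^\bullet\preceq\mathbf{B}^\circ$, since Assumption~\ref{ass:J} only gives strict monotonicity for strict ordering. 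I would first check whether equality $\mathbf{B}^\bullet=\mathbf{B}^\circ$ or an optimality argument covers the nonstrict case. Failing that, I would perturb with $\mathbf{B}^\bullet-\delta\mathbf{I}$ and compare; this step may need an additional continuity-type property of $J$.
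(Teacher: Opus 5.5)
Your matrix analysis is correct, and in the second direction you take a genuinely different and shorter route than the paper.

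\textbf{First direction.} Here you track the paper closely. The paper obtains $\mathbf{C}(\mathbf{P}^{-1}+\mathbf{C}^\top\mathbf{R}^{-1}\mathbf{C})^{-1}\mathbf{C}^\top\preceq\mathbf{C}(\mathbf{Y}^\star+\mathbf{C}^\top\mathbf{R}^{-1}\mathbf{C})^+\mathbf{C}^\top$ from an eigendecomposition and Proposition~\ref{prop:schur}(v). You instead use block-matrix monotonicity of the pseudo-inverse on $\col\mathbf{C}^\top$. Your steps (a) and (e), $\mathbf{L}\succeq\mathbf{0}$ and $\mathbf{L}\mathbf{S}\mathbf{L}\preceq\mathbf{L}$, spell out what the paper calls ``algebraic manipulation''. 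The fact $\mathbf{L}\succeq\mathbf{0}$ is genuinely needed there.

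\textbf{Second direction.} The paper shows $\mathbf{K}^\circ\mathbf{C}\mathbf{V}_\perp=\mathbf{0}$ with $\col\mathbf{V}_\perp=\ker\mathbf{Y}^\bullet$. It then passes to the restricted problem \eqref{eq:OCI_mod_prob}, solves it by a BLUE argument on $\col\mathbf{S}$, and needs Proposition~\ref{prop:matrix_id} to identify the resulting bound with $(\mathbf{H}^\top\mathbf{R}^{-1}\mathbf{H}-\mathbf{U}^\bullet)^{-1}$.

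Your test covariance $\mathbf{P}_\epsilon=(\mathbf{Y}^\bullet+\epsilon\mathbf{I})^{-1}$ satisfies $\mathbf{G}\mathbf{P}_\epsilon\mathbf{G}^\top\preceq\mathbf{D}$ by construction. It need not lie in $\Pcal$, and you rightly do not claim it does. The Loewner form of Gauss--Markov then applies, and the limit is legitimate because $\ker(\mathbf{Y}^\bullet+\mathbf{C}^\top\mathbf{R}^{-1}\mathbf{C})\subseteq\ker\mathbf{C}$. Together these give $\mathbf{B}^\bullet\preceq\mathbf{B}^\circ$ directly.

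This buys you several things:
\begin{itemize}
\item a stronger conclusion than the paper proves;
\item no kernel bookkeeping and no matrix identity;
\item $J$ enters only at the very last step.
\end{itemize}
The paper's route, in exchange, exhibits an explicit gain $\mathbf{K}^\bullet$ attaining $\mathbf{B}^\bullet$. It also produces Proposition~\ref{prop:matrix_id}, which it reuses for Theorem~\ref{th:feas_iff}. Your observation that $\Pcal\neq\emptyset$ is needed for $\col\mathbf{G}\subseteq\col\mathbf{D}$ is also correct; the paper uses it silently.

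\textbf{The open step.} Passing from $\mathbf{B}^\bullet\preceq\mathbf{B}^\circ$ to $J(\mathbf{B}^\bullet)\le J(\mathbf{B}^\circ)$ is a real gap. No equality or optimality argument can close it, because the lemma's last claim is false under Assumption~\ref{ass:J} as stated. Here is a counterexample.
\begin{itemize}
\item Take $n=o=2$, $\mathbf{H}=\mathbf{R}=\mathbf{I}$, $\mathbf{C}=\mathbf{0}$. Then $\mathbf{Y}^\bullet=\mathbf{0}$, $\mathbf{U}^\bullet=\mathbf{0}$ and $\mathbf{B}^\bullet=\mathbf{I}$, while $(\mathbf{I},\diag(1,2))$ is feasible for \eqref{eq:OCI_orig_prob}.
\item Let $\phi(t)=t$ for $t\le 1$ and $\phi(t)=t+1$ for $t>1$. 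Define $J(\mathbf{X})=\phi(\lambda_{\min}(\mathbf{X}))$, plus $1/2$ if $\mathbf{X}=\mathbf{I}$.
\item This $J$ satisfies Assumption~\ref{ass:J}: $\mathbf{X}\succ\mathbf{Y}$ forces $\lambda_{\min}(\mathbf{X})>\lambda_{\min}(\mathbf{Y})$, and the jump of $\phi$ at $1$ absorbs the bonus at $\mathbf{I}$.
\item Yet $J(\mathbf{B}^\bullet)=3/2>1=J(\mathbf{B}^\circ)$.
\end{itemize}

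The paper gets past this point only by asserting, via the cited BLUE $\argmin$ result, that $(\mathbf{K}^\bullet,\mathbf{B}^\bullet)$ solves \eqref{eq:OCI_mod_prob}. That assertion tacitly requires non-strict monotonicity, $\mathbf{X}\succeq\mathbf{Y}\Rightarrow J(\mathbf{X})\ge J(\mathbf{Y})$, which trace and determinant satisfy. You should therefore state that hypothesis explicitly. Alternatively, assume $J$ is continuous: then $J(\mathbf{B}^\circ+\delta\mathbf{I})>J(\mathbf{B}^\bullet)$ for every $\delta>0$, and letting $\delta\downarrow 0$ suffices. With either assumption, your $\mathbf{B}^\bullet\preceq\mathbf{B}^\circ$ finishes the proof at once.
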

\begin{proof}
We start by proving the first statement. First, we show that $\mathbf{K}^\star$ is well defined. Applying Proposition~\ref{prop:schur}\eqref{schur:BIr} to \eqref{eq:OCI_Y_B} yields $\mathbf{H}^\top\mathbf{R}^{-1}\mathbf{H}-\mathbf{U}^\star \succ \mathbf{0}$ and applying Proposition~\ref{prop:schur}\eqref{schur:Apsd} to \eqref{eq:OCI_Y_U} gives $\mathbf{U}^\star \succeq  \mathbf{H}^\top\mathbf{R}^{-1}\mathbf{C}(\mathbf{Y}^\star+\mathbf{C}\mathbf{R}^{-1}\mathbf{C})^+\mathbf{C}^\top\mathbf{R}^{-1}\mathbf{H}$. Thus,
\begin{equation}\label{eq:ineq_invK}
	\begin{split}
			\mathbf{H}^\top\mathbf{R}^{-1}(\mathbf{R}-\mathbf{C}(\mathbf{Y}^\star+\mathbf{C}^\top &\mathbf{R}^{-1}\mathbf{C})^+ \mathbf{C}^\top)\mathbf{R}^{-1}\mathbf{H} \\
			&\succeq \mathbf{H}^\top\mathbf{R}^{-1}\mathbf{H}-\mathbf{U}^\star \succ \mathbf{0},
	\end{split}
\end{equation}
showing that $\mathbf{K}^\star$ is well defined. Furthermore, from \eqref{eq:K_star_OCI_Y}, $\mathbf{K}^\star\mathbf{H}  =\mathbf{I}$. Second, we show that $\mathbf{K}^\star(\mathbf{R}+\mathbf{C}\mathbf{P}\mathbf{C}^\top)\mathbf{K}^{\star\top} \preceq \mathbf{B}^\star$ for all $\mathbf{P}\in \Pcal$. By hypothesis, $\mathbf{Y}^\star \preceq \mathbf{P}^{-1}$ for all  $\mathbf{P}\in \Pcal$ and, as a result
\begin{equation}\label{eq:Y_star_cond_aux}
	\mathbf{Y}^\star+\mathbf{C}^\top\mathbf{R}^{-1}\mathbf{C}\preceq \mathbf{P}^{-1}+\mathbf{C}^\top\mathbf{R}^{-1}\mathbf{C}.
\end{equation}
Matrix $\mathbf{Y}^\star+\mathbf{C}^\top\mathbf{R}^{-1}\mathbf{C}$ is real, symmetric, and positive semidefinite, so by the spectral theorem for real symmetric matrices it admits a factorization
\begin{equation}\label{eq:fact_Y_CRC}
\mathbf{Y}^\star+\mathbf{C}^\top\mathbf{R}^{-1}\mathbf{C} = [\mathbf{V}\; \mathbf{V}_\perp]\begin{bmatrix}
	\mathbf{D} & \mathbf{0} \\ \mathbf{0}& {\mathbf{0}}
\end{bmatrix}[\mathbf{V}\; \mathbf{V}_\perp]^\top,
\end{equation} 
where $r= \rank(\mathbf{Y}^\star+\mathbf{C}^\top\mathbf{R}^{-1}\mathbf{C})$, the columns of $[\mathbf{V}\; \mathbf{V}_\perp]$ form an orthonormal basis for $\R^m$, and $\mathbf{D}\in S_{++}^r$ is a diagonal matrix. Furthermore, since $\mathbf{Y}^\star \succeq \mathbf{0}$ and $\mathbf{R}^{-1} \succ \mathbf{0}$, then $\col(\mathbf{C}^\top)\subseteq \col(\mathbf{V})$. To see why this holds, take a vector $\mathbf{v}\in \col(\mathbf{C}^\top)$. Notice that $\mathbf{v}^\top \mathbf{Y}^\star\mathbf{v} \geq 0$ and, since $\mathbf{C} \mathbf{v} \neq \mathbf{0}$, $\mathbf{v}^\top\mathbf{C}^\top \mathbf{R}^{-1} \mathbf{C}\mathbf{v} > 0$. Thus, $\mathbf{v}^\top(  \mathbf{Y}^\star +\mathbf{C}^\top \mathbf{R}^{-1} \mathbf{C})\mathbf{v} = \mathbf{v}^\top \mathbf{V} \mathbf{D} \mathbf{V}^\top \mathbf{v} >0$, which can only hold if $\mathbf{v} \in \col(\mathbf{V})$. One concludes that $\mathbf{v}\in \col(\mathbf{C}^\top) \implies \mathbf{v} \in \col(\mathbf{V})$, i.e., $\col(\mathbf{C}^\top)\subseteq \col(\mathbf{V})$. Moreover, since $\col(\mathbf{C}^\top)\subseteq \col(\mathbf{V})$, $\mathbf{C}\mathbf{V}_\perp = \mathbf{0}$. Using the factorization \eqref{eq:fact_Y_CRC} in \eqref{eq:Y_star_cond_aux} yields
\begin{equation*}
\left([\mathbf{V}\; \mathbf{V}_\perp]^\top	(\mathbf{P}^{-1}+\mathbf{C}^\top\mathbf{R}^{-1}\mathbf{C})^{-1} [\mathbf{V}\; \mathbf{V}_\perp]\right)^{-1} \succeq \begin{bmatrix}
	\mathbf{D} & \mathbf{0} \\ \mathbf{0}& \mathbf{0}
\end{bmatrix}
\end{equation*}
and, by Proposition~\ref{prop:schur}\eqref{schur:ineq11}, it follows that $\mathbf{V}^\top(\mathbf{P}^{-1} +\mathbf{C}^\top\mathbf{R}^{-1}\mathbf{C})^{-1}\mathbf{V} \preceq \mathbf{D}^{-1}$ for all $\mathbf{P}\in \Pcal$, which can equivalently be written as 
\begin{equation}\label{eq:aux_VVp_proofs}
	\begin{split}
	&\;[\mathbf{V} \; \mathbf{V}_\perp] [\mathbf{V} \; \mathbf{0}]^\top (\mathbf{P}^{-1} +\mathbf{C}^\top\mathbf{R}^{-1}\mathbf{C})^{-1} [\mathbf{V} \; \mathbf{0}]	[\mathbf{V} \; \mathbf{V}_\perp]^\top  	\\
	\!\!\preceq &\;[\mathbf{V} \; \mathbf{V}_\perp] \! \begin{bmatrix}
		\mathbf{D}^{-1} &\!\! \mathbf{0} \\ \mathbf{0}& \!\!\mathbf{0}
	\end{bmatrix}  \!	[\mathbf{V} \; \mathbf{V}_\perp]^\top =  (\mathbf{Y}^\star+\mathbf{C}^\top\mathbf{R}^{-1}\mathbf{C})^+.
\end{split}
\end{equation} 
Since $\mathbf{C}\mathbf{V}_\perp = \mathbf{0}$, $\mathbf{C}[\mathbf{V} \; \mathbf{V}_\perp] [\mathbf{V} \; \mathbf{0}]^\top = \mathbf{C}[\mathbf{V} \; \mathbf{V}_\perp] [\mathbf{V} \; \mathbf{V}_\perp]^\top = \mathbf{C}$ then,  pre- and post-multiplying both sides of \eqref{eq:aux_VVp_proofs} by $\mathbf{C}$ and $\mathbf{C}^\top$, respectively, yields $\mathbf{C}(\mathbf{P}^{-1} +\mathbf{C}^\top\mathbf{R}^{-1}\mathbf{C})^{-1}\mathbf{C}^\top \preceq \mathbf{C}(\mathbf{Y}^\star+\mathbf{C}^\top\mathbf{R}^{-1}\mathbf{C})^+ \mathbf{C}^\top$ for all $\mathbf{P}\in \Pcal$.  This condition is equivalent to
\begin{equation}\label{eq:ineq_RU}
	\begin{split}
	(\mathbf{R}+&\mathbf{C}\mathbf{P}\mathbf{C}^\top)^{-1} \\
	&=  \mathbf{R}^{-1}(\mathbf{R}- \mathbf{C}(\mathbf{P}^{-1}+\mathbf{C}^\top\mathbf{R}^{-1}\mathbf{C})^{-1}\mathbf{C}^\top)\mathbf{R}^{-1} \\
	&\succeq \mathbf{R}^{-1}(\mathbf{R}-\mathbf{C}(\mathbf{Y}^\star+\mathbf{C}^\top \mathbf{R}^{-1}\mathbf{C})^+ \mathbf{C}^\top)\mathbf{R}^{-1}
	\end{split}
\end{equation}
for all $\mathbf{P}\in \Pcal$, where the equality follows from the Sherman-Morrison-Woodbury formula \cite[Section~2.1.4]{GolubVanLoan2013} and the inequality follows from \eqref{eq:OCI_Y_Y}. It follows that for all $\mathbf{P} \in \Pcal$
\begin{equation*}
	\begin{split}
		\mathbf{K}^\star&(\mathbf{R}+\mathbf{C}\mathbf{P}\mathbf{C}^\top)\mathbf{K}^{\star^\top} \\
		& \preceq (\mathbf{H}^\top\mathbf{R}^{-1}(\mathbf{R}-\mathbf{C}(\mathbf{Y}^\star+\mathbf{C}^\top \mathbf{R}^{-1}\mathbf{C})^+ \mathbf{C}^\top)\mathbf{R}^{-1}\mathbf{H})^{-1}\\
		& \preceq (\mathbf{H}^\top\mathbf{R}^{-1}\mathbf{H}-\mathbf{U}^\star )^{-1}\\
		& \preceq \mathbf{B}^\star,
	\end{split}
\end{equation*}
where the first step follows from algebraic manipulation using \eqref{eq:ineq_RU} and \eqref{eq:K_star_OCI_Y}, the second step follows from \eqref{eq:ineq_invK}, and the third step from Proposition~\ref{prop:schur}\eqref{schur:BIr} applied to \eqref{eq:OCI_Y_B}.
Finally, since $\mathbf{K}^\star\mathbf{H} = \mathbf{I}$, $\mathbf{B}^\star \in S_+^n$, and $\mathbf{K}^\star(\mathbf{R}+\mathbf{C}\mathbf{P}\mathbf{C}^\top)\mathbf{K}^{\star^\top} \preceq \mathbf{B}^\star$ for all $\mathbf{P}\in \Pcal$, it follows that $(\mathbf{K}^\star,\mathbf{B}^\star)$ is in the feasible domain of \eqref{eq:OCI_orig_prob}, thereby establishing the first statement of the lemma.

Now, we prove the second statement. First, we show that $\mathbf{P}^{-1} \succeq \mathbf{Y}^\bullet$ for all $\mathbf{P}\in \Pcal$. By hypothesis $\mathbf{K}^\circ(\mathbf{R}+\mathbf{C}\mathbf{P}\mathbf{C}^\top)\mathbf{K}^{\circ\top} \preceq \mathbf{B}^\circ$ for all $\mathbf{P}\in \Pcal$, which can equivalently be written as $\mathbf{K}^\circ\mathbf{C}\mathbf{P}(\mathbf{K}^\circ\mathbf{C})^\top \preceq \mathbf{B}^\circ -\mathbf{K}^\circ\mathbf{R}\mathbf{K}^{\circ\top}:= \tilde{\mathbf{B}}^\circ$. By Proposition~\ref{prop:schur}\eqref{schur:Apd}, this is equivalent to
\begin{equation*}
	\begin{bmatrix} \mathbf{P}^{-1} & (\mathbf{K}^\circ\mathbf{C})^\top \\  \mathbf{K}^\circ\mathbf{C}&  \tilde{\mathbf{B}}^\circ
	\end{bmatrix} \succeq \mathbf{0} \iff 	 \begin{bmatrix} \tilde{\mathbf{B}}^\circ  & \mathbf{K}^\circ\mathbf{C}\\  (\mathbf{K}^\circ\mathbf{C})^\top& \mathbf{P}^{-1}
	\end{bmatrix} \succeq \mathbf{0} .
\end{equation*}
Since $\tilde{\mathbf{B}}^\circ \succeq \mathbf{0}$, by Proposition~\ref{prop:schur}\eqref{schur:Apsd}, it follows that  $\mathbf{P}^{-1}\succeq (\mathbf{K}^\circ\mathbf{C})^\top\tilde{\mathbf{B}}^{\circ+} \mathbf{K}^\circ\mathbf{C} = \mathbf{Y}^\bullet$ for all $\mathbf{P}\in \Pcal$ and $\col{\mathbf{K}^\circ\mathbf{C}}\subseteq \col{\tilde{\mathbf{B}}^\circ}$.
Second, we show that $\mathbf{B}^\circ \succeq \mathbf{K}^\circ(\mathbf{R}+\mathbf{C} \mathbf{Y}^{\circ+}\mathbf{C}^\top)\mathbf{K}^{\circ\top}$. To do so, notice that since $\tilde{\mathbf{B}}^\circ \succeq \mathbf{0}$, $\col{\mathbf{K}^\circ\mathbf{C}}\subseteq \col{\tilde{\mathbf{B}}^\circ}$, and $ \mathbf{Y}^\bullet-(\mathbf{K}^\circ\mathbf{C})^\top\tilde{\mathbf{B}}^{\circ+} \mathbf{K}^\circ\mathbf{C} = \mathbf{0} \succeq \mathbf{0}$, then by Proposition~\ref{prop:schur}\eqref{schur:Apsd} 
\begin{equation*}
	\begin{bmatrix} \tilde{\mathbf{B}}^\circ  & \mathbf{K}^\circ\mathbf{C}\\  (\mathbf{K}^\circ\mathbf{C})^\top& \mathbf{Y}^\bullet
	\end{bmatrix} \succeq \mathbf{0} \iff \begin{bmatrix}  \mathbf{Y}^\bullet& (\mathbf{K}^\circ\mathbf{C})^\top \\  \mathbf{K}^\circ\mathbf{C}&  \tilde{\mathbf{B}}^\circ
	\end{bmatrix} \succeq \mathbf{0}.
\end{equation*}
Using Proposition~\ref{prop:schur}\eqref{schur:Apsd}, it follows that $ \tilde{\mathbf{B}}^\circ \succeq \mathbf{K}^\circ\mathbf{C} \mathbf{Y}^{\circ+}(\mathbf{K}^\circ\mathbf{C})^\top$, which is equivalent to $\mathbf{B}^\circ \succeq \mathbf{K}^\circ(\mathbf{R}+\mathbf{C} \mathbf{Y}^{\circ+}\mathbf{C}^\top)\mathbf{K}^{\circ\top}$. Third, we show that $\mathbf{K}(\mathbf{R}+\mathbf{C}\mathbf{Y}^{\circ+}\mathbf{C}^\top)\mathbf{K}^\top \succeq  \mathbf{K}(\mathbf{R}+\mathbf{C}\mathbf{P}\mathbf{C}^\top)\mathbf{K}^\top$ for all $\mathbf{P} \in \Pcal$ and all $\mathbf{K}$ that satisfy a condition on their kernel. Matrix $\mathbf{Y}^\bullet$ is real, symmetric, and positive semidefinite so, by the spectral theorem for real symmetric matrices, it admits a factorization
\begin{equation}\label{eq:diag_Y_circ}
	\mathbf{Y}^\bullet= [\mathbf{V}\; \mathbf{V}_\perp]\begin{bmatrix}
		\mathbf{D} & \mathbf{0} \\ \mathbf{0}_{(m-r)\times r} & \mathbf{0}
	\end{bmatrix}[\mathbf{V}\; \mathbf{V}_\perp]^\top,
\end{equation} 
where $r= \rank(\mathbf{Y}^\bullet)$, the columns of $[\mathbf{V}\; \mathbf{V}_\perp]$ form an orthonormal basis for $\R^m$, and $\mathbf{D}\in S_{++}^r$ is a diagonal matrix. 
Furthermore, since $\tilde{\mathbf{B}}^{\circ}  \succeq \mathbf{K}^\circ \mathbf{C}\mathbf{P}( \mathbf{K}^\circ \mathbf{C})^\top$ by hypothesis, then $\tilde{\mathbf{B}}^{\circ+}$ can only possibly have null eigenvalues along the components in $\ker{(\mathbf{K}^\circ \mathbf{C})^\top} = (\col \mathbf{K}^\circ \mathbf{C})^\perp$ (otherwise $\mathbf{x}^\top(\tilde{\mathbf{B}}^{\circ}  - \mathbf{K}^\circ \mathbf{C}\mathbf{P}( \mathbf{K}^\circ \mathbf{C})^\top)\mathbf{x} = -\mathbf{x}^\top \mathbf{K}^\circ \mathbf{C}\mathbf{P}( \mathbf{K}^\circ \mathbf{C})^\top\mathbf{x} < 0$ for $\mathbf{x} \in \ker \tilde{\mathbf{B}}^{\circ}  \setminus \ker{(\mathbf{K}^\circ \mathbf{C})^\top}$, which is not possible). Moreover, since $\mathbf{Y}^\bullet = (\mathbf{K}^\circ\mathbf{C})^\top\tilde{\mathbf{B}}^{\circ+} \mathbf{K}^\circ\mathbf{C}$ and the null components of $\tilde{\mathbf{B}}^{\circ+}$  can only possibly be in $(\col \mathbf{K}^\circ \mathbf{C})^\perp$, the null components of $\mathbf{Y}^\bullet$ are along $\ker \mathbf{K}^\circ\mathbf{C}$. As a result, $\mathbf{K}^\circ\mathbf{C}\mathbf{V}_\perp = \mathbf{0}$. Employing the same procedure as in step two of the proof of the first statement of the lemma for the expression $\mathbf{P}^{-1} \succeq \mathbf{Y}^\bullet$ yields 
\begin{equation}\label{eq:aux1}
	[\mathbf{V} \; \mathbf{V}_\perp] [\mathbf{V} \; \mathbf{0}]^\top \mathbf{P}[\mathbf{V} \; \mathbf{0}]	[\mathbf{V} \; \mathbf{V}_\perp]^\top \preceq \mathbf{Y}^{\circ+}
\end{equation} 
for all $\mathbf{K}$ such that $\mathbf{K}\mathbf{C}\mathbf{V}_\perp = \mathbf{0}$, $\mathbf{K}\mathbf{C}[\mathbf{V} \; \mathbf{V}_\perp] [\mathbf{V} \; \mathbf{0}]^\top = \mathbf{K}\mathbf{C}[\mathbf{V} \; \mathbf{V}_\perp] [\mathbf{V} \; \mathbf{V}_\perp]^\top = \mathbf{K}\mathbf{C}$. Therefore, pre- and post-multiplying both sides of \eqref{eq:aux1} by $\mathbf{KC}$ and $(\mathbf{KC})^\top$, respectively, yields $\mathbf{KC} \mathbf{P} (\mathbf{KC})^\top \preceq \mathbf{KC}\mathbf{Y}^{\circ+}(\mathbf{KC})^\top$.  This condition is equivalent to
\begin{equation}\label{eq:KCPCP}
	\mathbf{K}(\mathbf{R}+\mathbf{C}\mathbf{P}\mathbf{C}^\top) \mathbf{K}^\top \preceq 	\mathbf{K}(\mathbf{R}+\mathbf{C}\mathbf{Y}^{\circ+}\mathbf{C}^\top) \mathbf{K}^\top
\end{equation}
for all $\mathbf{K}$ such that $\mathbf{K}\mathbf{C}\mathbf{V}_\perp = \mathbf{0}$ and all $\mathbf{P}\in \Pcal$.
Fourth, we show that $\mathbf{B}^\bullet \succeq (\mathbf{H}^\top\mathbf{R}^{-1}(\mathbf{R}-\mathbf{C}(\mathbf{Y}^\bullet+\mathbf{C}^\top \mathbf{R}^{-1}\mathbf{C})^+ \mathbf{C}^\top)\mathbf{R}^{-1}\mathbf{H})^{-1}$.  
Consider the optimization problem 
\begin{equation}\label{eq:OCI_mod_prob}
	\begin{aligned}
		&\min_{\substack{\mathbf{K}\in \R^{n\times o}, \mathbf{B}\in S^n_+}}  && J(\mathbf{B})\\
		&\quad\quad\;\mathrm{s.t.} &&  \mathbf{K}\mathbf{H} = \mathbf{I}\\  
		&&& \mathbf{K}\mathbf{C}\mathbf{V}_\perp = \mathbf{0}\\
		&&&	\mathbf{B} \succeq \mathbf{K}(\mathbf{R} + \mathbf{C}\mathbf{Y}^{\circ+}\mathbf{C}^\top)\mathbf{K}^\top.
	\end{aligned}
\end{equation}
Notice that the feasible set of \eqref{eq:OCI_mod_prob} is contained in the feasible set of original OCI optimization problem \eqref{eq:OCI_orig_prob}. Indeed, if the pair $(\mathbf{K},\mathbf{B})$ satisfies the constraints of \eqref{eq:OCI_mod_prob}, then $\mathbf{KH} = \mathbf{I}$ and, since $\mathbf{K}\mathbf{C}\mathbf{V}_\perp = \mathbf{0}$, then by \eqref{eq:KCPCP} it follows that $\mathbf{K}(\mathbf{R}+\mathbf{C}\mathbf{P}\mathbf{C}^\top) \mathbf{K}^\top \preceq 	\mathbf{K}(\mathbf{R}+\mathbf{C}\mathbf{Y}^{\circ+}\mathbf{C}^\top) \mathbf{K}^\top \preceq \mathbf{B}$. Define $\mathbf{S}_\perp$ as a matrix whose columns form an orthonormal basis for $\mathbf{C}\mathbf{V}_\perp$ and $\mathbf{S}$ such that the columns of $[\mathbf{S}\; \mathbf{S}_\perp]$ form an orthonormal basis for $\R^o$. A gain that satisfies $ \mathbf{K}\mathbf{C}\mathbf{V}_\perp = \mathbf{0}$ can be written as $\mathbf{K} = \mathbf{K}[\mathbf{S}\; \mathbf{S}_\perp][\mathbf{S}\; \mathbf{S}_\perp]^\top = [\mathbf{K}\mathbf{S}\; \mathbf{0}][\mathbf{S}\; \mathbf{S}_\perp]^\top = \mathbf{K}\mathbf{S}\mathbf{S}^\top$. One can then equivalently rewrite \eqref{eq:OCI_mod_prob} for $\tilde{\mathbf{K}} =  \mathbf{K}\mathbf{S}$ as
\begin{equation}\label{eq:OCI_mod_prob2}
	\begin{aligned}
		&\min_{\substack{\tilde{\mathbf{K}}, \mathbf{B}\in S^n_+}}  && J(\mathbf{B})\\
		&\quad\mathrm{s.t.} &&  \tilde{\mathbf{K}}\mathbf{S}^\top\mathbf{H} = \mathbf{I}\\  
		&&&	\mathbf{B} \succeq \tilde{\mathbf{K}}\mathbf{S}^\top(\mathbf{R} + \mathbf{C}\mathbf{Y}^{\circ+}\mathbf{C}^\top)\mathbf{S}\tilde{\mathbf{K}}^\top
	\end{aligned}
\end{equation}
and then recover the solution to $\mathbf{K}$ with the relation $\mathbf{K} = \tilde{\mathbf{K}}\mathbf{S}^\top$.
Given Assumption~\ref{ass:J}, it is well-known \cite{AjglStraka2018} that, for any $\mathbf{Q}\in S_{++}^o$,
	\begin{equation*}
	(\mathbf{H}^\top\!\mathbf{Q}^{-1}\mathbf{H})^{-1}\mathbf{H}^\top\!\mathbf{Q}^{-1} \!= \argmin_\mathbf{K} \;\! J( \mathbf{K}\mathbf{Q}\mathbf{K}^\top\!) \;\; \text{s.t.} \;\;  \mathbf{K}\mathbf{H} \!= \!\mathbf{I}.
\end{equation*}
As a result, the pair $(\mathbf{K}^\bullet, \mathbf{B}^\bullet)$ with
\begin{align*}
	\mathbf{K}^\bullet  &= \left(\mathbf{H}^\top\mathbf{S}\left(\mathbf{S}^\top(\mathbf{R} + \mathbf{C}\mathbf{Y}^{\circ+}\mathbf{C}^\top)\mathbf{S}\right)^{-1}\mathbf{S}^\top\mathbf{H}\right)^{-1}\\
	&\quad\quad \!\mathbf{H}^\top\mathbf{S}\left(\mathbf{S}^\top(\mathbf{R} + \mathbf{C}\mathbf{Y}^{\circ+}\mathbf{C}^\top)\mathbf{S}\right)^{-1}\mathbf{S}^\top\\
	\mathbf{B}^\bullet  &= 	\mathbf{K}^\bullet (\mathbf{R} + \mathbf{C}\mathbf{Y}^{\circ+}\mathbf{C}^\top)\mathbf{K}^{\bullet \top} \\
	&= \left(\mathbf{H}^\top\mathbf{S}\left(\mathbf{S}^\top(\mathbf{R} + \mathbf{C}\mathbf{Y}^{\circ+}\mathbf{C}^\top)\mathbf{S}\right)^{-1}\mathbf{S}^\top\mathbf{H}\right)^{-1}%
\end{align*}
is a solution to \eqref{eq:OCI_mod_prob}. Furthermore, $(\mathbf{K}^\circ, \mathbf{B}^\circ)$ is on the feasible set of the stricter problem \eqref{eq:OCI_mod_prob} since $\mathbf{K}^\circ \mathbf{H} = \mathbf{I}$ by hypothesis and, as shown before, $\mathbf{K}^\circ\mathbf{C}\mathbf{V}_\perp = \mathbf{0}$, and $\mathbf{B}^\circ \succeq \mathbf{K}^\circ(\mathbf{R}+\mathbf{C} \mathbf{Y}^{\circ+}\mathbf{C}^\top)\mathbf{K}^{\circ\top}$. Therefore, $J(\mathbf{B}^\bullet )\leq J(\mathbf{B}^\circ)$. By Proposition~\ref{prop:matrix_id}, at the end this section, $\mathbf{B}^\bullet$ can be rewritten as $	\mathbf{B}^\bullet =	(\mathbf{H}^\top\mathbf{R}^{-1}\mathbf{H}-\mathbf{U}^\bullet)^{-1}$, with $\mathbf{U}^\bullet  = \mathbf{H}^\top\mathbf{R}^{-1}\mathbf{C}(\mathbf{Y}^\bullet+\mathbf{C}^\top \mathbf{R}^{-1}\mathbf{C})^+ \mathbf{C}^\top\mathbf{R}^{-1}\mathbf{H}$. Now notice that the triple $(\mathbf{Y}^\bullet,\mathbf{U}^\bullet, \mathbf{B}^\bullet)$ is in the feasible domain of \eqref{eq:OCI_Y}. Specifically, from previous analysis $\mathbf{Y}^\bullet \preceq \mathbf{P}^{-1}$ for all $\mathbf{P}\in \Pcal$, so constraint \eqref{eq:OCI_Y_Y} is satisfied; $\mathbf{Y}^\bullet +\mathbf{C}^\top \mathbf{R}^{-1}\mathbf{C} \succeq \mathbf{0}$, $\col{\mathbf{C}^\top\mathbf{R}^{-1}\mathbf{H}}\subseteq \col{(\mathbf{Y}^\bullet+\mathbf{C}^\top \mathbf{R}^{-1}\mathbf{C}})$, and $\mathbf{U}^{\bullet}- \mathbf{H}^\top\mathbf{R}^{-1}\mathbf{C}(\mathbf{Y}^\bullet+\mathbf{C}^\top \mathbf{R}^{-1}\mathbf{C})^+ \mathbf{C}^\top\mathbf{R}^{-1}\mathbf{H} = \mathbf{0} \succeq  \mathbf{0}$, so, by Proposition~\ref{prop:schur}\eqref{schur:Apsd}, constraint \eqref{eq:OCI_Y_U} is satisfied; $\mathbf{B}^\bullet = (\mathbf{H}^\top\mathbf{R}^{-1}\mathbf{H}-\mathbf{U}^\bullet)^{-1} \succ \mathbf{0}$, so, by Proposition~\ref{prop:schur}\eqref{schur:BIl}, constraint \eqref{eq:OCI_Y_B} is satisfied, thereby establishing the second statement.%
\end{proof}%

First, we prove the first statement of the theorem. Let  $(\mathbf{Y}^\star,\mathbf{U}^\star,\mathbf{B}^\star)$ be a solution to \eqref{eq:OCI_Y}. Therefore, $(\mathbf{Y}^\star,\mathbf{U}^\star,\mathbf{B}^\star)$ is in the feasible domain of \eqref{eq:OCI_Y} and, by Lemma~\ref{lem:feas_lem}, $(\mathbf{K}^\star,\mathbf{B}^\star)$ is in the the feasible domain of \eqref{eq:OCI_orig_prob}. We show that $(\mathbf{K}^\star,\mathbf{B}^\star)$ is a solution to \eqref{eq:OCI_orig_prob} by contradiction. Assume, by contradiction, that there is $(\mathbf{K}^{\circ},\mathbf{B}^\circ)$ in the feasible domain of \eqref{eq:OCI_orig_prob} such that $J(\mathbf{B}^\circ) < J(\mathbf{B}^\star)$. By Lemma~\ref{lem:feas_lem}, it follows that the triple $(\mathbf{Y}^\bullet,\mathbf{U}^\bullet, \mathbf{B}^\bullet)$ is in the feasible domain of \eqref{eq:OCI_Y} and $J(\mathbf{B}^\bullet) \leq J(\mathbf{B}^\circ)$. Since $(\mathbf{Y}^\star,\mathbf{U}^\star,\mathbf{B}^\star)$ is a solution to \eqref{eq:OCI_Y}, then $J(\mathbf{B}^\star) \leq J(\mathbf{B}^\bullet) \leq J(\mathbf{B}^\circ)$. Bringing everything together yields $J(\mathbf{B}^\circ) < J(\mathbf{B}^\star)  \leq J(\mathbf{B}^\bullet) \leq J(\mathbf{B}^\circ)$, which is a contradiction.

Second, an analogous approach is used to prove the second statement of the theorem. Let $(\mathbf{K}^{\circ},\mathbf{B}^\circ)$ be a solution to \eqref{eq:OCI_orig_prob}. Therefore, $(\mathbf{K}^{\circ},\mathbf{B}^\circ)$ is in the feasible domain of  \eqref{eq:OCI_orig_prob} and, by Lemma~\ref{lem:feas_lem}, the triple $(\mathbf{Y}^\bullet,\mathbf{U}^\bullet, \mathbf{B}^\bullet)$ is in the feasible domain of \eqref{eq:OCI_Y} and $J(\mathbf{B}^\bullet) \leq J(\mathbf{B}^\circ)$. We show that  $(\mathbf{Y}^\bullet,\mathbf{U}^\bullet, \mathbf{B}^\bullet)$ is a solution to \eqref{eq:OCI_Y} by contradiction. Assume, by contradiction, that there is  $(\mathbf{Y}^\star,\mathbf{U}^\star,\mathbf{B}^\star)$ in the feasible domain of \eqref{eq:OCI_Y} such that $J(\mathbf{B}^\star) < J(\mathbf{B}^\bullet)$. By Lemma~\ref{lem:feas_lem}, it follows that $(\mathbf{K}^\star,\mathbf{B}^\star)$ is in the feasible domain of \eqref{eq:OCI_orig_prob}. Since $(\mathbf{K}^{\circ},\mathbf{B}^\circ)$ is a solution to \eqref{eq:OCI_orig_prob}, then $J(\mathbf{B}^\circ)\leq J(\mathbf{B}^\star)$. Bringing everything together yields $J(\mathbf{B}^\star) < J(\mathbf{B}^\bullet) \leq J(\mathbf{B}^\circ) \leq J(\mathbf{B}^\star)$, which is a contradiction.

Finally, the last statement of the theorem, i.e., that the OCI problem \eqref{eq:OCI_orig_prob} is feasible if and only if  \eqref{eq:OCI_Y} is feasible, follows immediately from Lemma~\ref{lem:feas_lem}.

\begin{proposition}\label{prop:matrix_id} Let $\mathbf{R} \in S_{++}^o$,$\mathbf{C} \in \R^{o\times m}$, and $\mathbf{Y} \in S_+^m$. By the spectral theorem for real symmetric matrices,$ \mathbf{Y}$ admits a factorization
\begin{equation}\label{eq:factorization_Y}
	\mathbf{Y} = [\mathbf{V}\; \mathbf{V}_\perp]\begin{bmatrix}
		\mathbf{D} & \mathbf{0} \\ \mathbf{0}_{(m-r)\times r} & \mathbf{0}
	\end{bmatrix}[\mathbf{V}\; \mathbf{V}_\perp]^\top,
\end{equation}
where $r = \rank(\mathbf{Y})$. Define $\mathbf{S}_\perp$ as a matrix whose columns form an orthonormal basis for $\mathbf{C}\mathbf{V}_\perp$ and  $\mathbf{S}$ such that the columns of $[\mathbf{S}\; \mathbf{S}_\perp]$ form an orthonormal basis for $\R^o$. Then, $\mathbf{R}^{-1}-\mathbf{R}^{-1}\mathbf{C}(\mathbf{Y}+\mathbf{C}^\top \mathbf{R}^{-1}\mathbf{C})^+ \mathbf{C}^\top\mathbf{R}^{-1} = \mathbf{S}(\mathbf{S}^\top(\mathbf{R} + \mathbf{C}\mathbf{Y}^{+}\mathbf{C}^\top)\mathbf{S})^{-1}\mathbf{S}^\top$.
\end{proposition}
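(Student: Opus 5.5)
I plan to avoid working with the pseudoinverse directly. Instead, I would show that the left-hand side $\mathbf{T}:=\mathbf{R}^{-1}-\mathbf{R}^{-1}\mathbf{C}\mathbf{G}^+\mathbf{C}^\top\mathbf{R}^{-1}$, with $\mathbf{G}:=\mathbf{Y}+\mathbf{C}^\top\mathbf{R}^{-1}\mathbf{C}$, has the two properties that characterize the right-hand side. Set $\mathbf{Q}:=\mathbf{R}+\mathbf{C}\mathbf{Y}^+\mathbf{C}^\top\succ\mathbf{0}$. Then $\mathbf{S}^\top\mathbf{Q}\mathbf{S}\succ\mathbf{0}$, so the right-hand side is well defined. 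The matrix $\mathbf{M}:=\mathbf{S}(\mathbf{S}^\top\mathbf{Q}\mathbf{S})^{-1}\mathbf{S}^\top$ is the unique symmetric matrix $\mathbf{X}$ satisfying (a) $\mathbf{X}\mathbf{S}_\perp=\mathbf{0}$ and (b) $\mathbf{S}^\top\mathbf{Q}\mathbf{X}\mathbf{S}=\mathbf{I}$. Indeed, (a) and symmetry force $\mathbf{X}=\mathbf{S}\mathbf{Z}\mathbf{S}^\top$, and then (b) forces $\mathbf{Z}=(\mathbf{S}^\top\mathbf{Q}\mathbf{S})^{-1}$. So it suffices to verify (a) and (b) for $\mathbf{T}$, which is clearly symmetric.

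\textbf{Step 1: property (a).} I would first record two facts. Since $\mathbf{Y}\mathbf{V}_\perp=\mathbf{0}$, we have $\mathbf{G}\mathbf{V}_\perp=\mathbf{C}^\top\mathbf{R}^{-1}\mathbf{C}\mathbf{V}_\perp$. Also $\ker\mathbf{G}\subseteq\ker\mathbf{C}$, because $\mathbf{v}^\top\mathbf{G}\mathbf{v}=0$ forces $\mathbf{C}\mathbf{v}=\mathbf{0}$; this is the same argument as in the proof of Lemma~\ref{lem:feas_lem}. The second fact gives $\mathbf{C}\mathbf{G}^+\mathbf{G}=\mathbf{C}$, since $\mathbf{I}-\mathbf{G}^+\mathbf{G}$ projects onto $\ker\mathbf{G}$. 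Combining the two,
\begin{equation*}
\mathbf{T}\mathbf{C}\mathbf{V}_\perp=\mathbf{R}^{-1}\mathbf{C}\mathbf{V}_\perp-\mathbf{R}^{-1}\mathbf{C}\mathbf{G}^+\mathbf{G}\mathbf{V}_\perp=\mathbf{0}.
\end{equation*}
Since $\col\mathbf{S}_\perp=\col\mathbf{C}\mathbf{V}_\perp$, this gives $\mathbf{T}\mathbf{S}_\perp=\mathbf{0}$.

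\textbf{Step 2: property (b).} It suffices to show $\mathbf{T}\mathbf{Q}\mathbf{S}=\mathbf{S}$, or more generally that $\mathbf{T}\mathbf{Q}-\mathbf{I}$ maps into $\col\mathbf{S}_\perp$, because then $\mathbf{S}^\top(\mathbf{T}\mathbf{Q}-\mathbf{I})=\mathbf{0}$. I would expand
\begin{equation*}
\mathbf{T}\mathbf{Q}=\mathbf{I}+\mathbf{R}^{-1}\mathbf{C}\bigl(\mathbf{Y}^+-\mathbf{G}^+\mathbf{C}^\top\mathbf{R}^{-1}\mathbf{C}\mathbf{Y}^+-\mathbf{G}^+\bigr)\mathbf{C}^\top.
\end{equation*}
Next I substitute $\mathbf{C}^\top\mathbf{R}^{-1}\mathbf{C}=\mathbf{G}-\mathbf{Y}$ and use $\mathbf{Y}\mathbf{Y}^+=\mathbf{V}\mathbf{V}^\top$. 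The middle factor then becomes $(\mathbf{I}-\mathbf{G}^+\mathbf{G})\mathbf{Y}^++\mathbf{G}^+\mathbf{V}\mathbf{V}^\top-\mathbf{G}^+$. After left-multiplication by $\mathbf{C}$, the first term vanishes by $\mathbf{C}\mathbf{G}^+\mathbf{G}=\mathbf{C}$. What remains is $-\mathbf{C}\mathbf{G}^+\mathbf{V}_\perp\mathbf{V}_\perp^\top\mathbf{C}^\top$, so
\begin{equation*}
\mathbf{T}\mathbf{Q}-\mathbf{I}=-\mathbf{R}^{-1}\mathbf{C}\mathbf{G}^+\mathbf{V}_\perp(\mathbf{C}\mathbf{V}_\perp)^\top.
\end{equation*}
Right-multiplying by $\mathbf{S}$ kills this term, because $(\mathbf{C}\mathbf{V}_\perp)^\top\mathbf{S}=\mathbf{0}$ by the definition of $\mathbf{S}$. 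Hence $\mathbf{T}\mathbf{Q}\mathbf{S}=\mathbf{S}$, and (b) follows.

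\textbf{Main obstacle and fallback.} I expect the bookkeeping in Step~2 to be the delicate part, in particular using the pseudoinverse identities only in directions where they actually hold: $\mathbf{C}\mathbf{G}^+\mathbf{G}=\mathbf{C}$ holds, but $\mathbf{G}\mathbf{G}^+=\mathbf{I}$ does not in general. If the direct verification gets tangled, I would fall back on a limiting argument. Set $\mathbf{Z}_\epsilon=\mathbf{Y}+\epsilon\mathbf{V}_\perp\mathbf{V}_\perp^\top\succ\mathbf{0}$ and apply Sherman--Morrison--Woodbury, which gives $(\mathbf{Q}+\epsilon^{-1}\mathbf{C}\mathbf{V}_\perp\mathbf{V}_\perp^\top\mathbf{C}^\top)^{-1}=\mathbf{R}^{-1}-\mathbf{R}^{-1}\mathbf{C}(\mathbf{Z}_\epsilon+\mathbf{C}^\top\mathbf{R}^{-1}\mathbf{C})^{-1}\mathbf{C}^\top\mathbf{R}^{-1}$. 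Then let $\epsilon\to0$. On the left, the standard limit $(\mathbf{Q}+t\mathbf{N})^{-1}\to\mathbf{S}(\mathbf{S}^\top\mathbf{Q}\mathbf{S})^{-1}\mathbf{S}^\top$ as $t\to\infty$ applies. On the right, continuity of the $\mathbf{C}$-sandwiched inverse needs justification, again via $\ker\mathbf{G}\subseteq\ker\mathbf{C}$.
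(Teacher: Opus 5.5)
Your proof is correct, but its main line takes a different route from the paper. Your fallback is essentially the paper's proof. The paper regularizes: it replaces $\mathbf{Y}$ by $\mathbf{Y}+\epsilon\mathbf{V}_\perp\mathbf{V}_\perp^\top\succ\mathbf{0}$, uses Sherman--Morrison--Woodbury to get an exact identity for each $\epsilon>0$, and lets $\epsilon\to0$ on both sides. On the pseudoinverse side it uses an eigendecomposition of $\mathbf{G}$ whose null space $\col\tilde{\mathbf{V}}_\perp$ lies in $\ker\mathbf{C}$. On the other side it works in $[\mathbf{S}\;\mathbf{S}_\perp]$ coordinates, where only the $\mathbf{S}_\perp$ block carries the $1/\epsilon$ term, so the inverse tends to $\mathbf{S}\mathbf{M}_{11}^{-1}\mathbf{S}^\top$.

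Your primary route instead characterizes $\mathbf{S}(\mathbf{S}^\top\mathbf{Q}\mathbf{S})^{-1}\mathbf{S}^\top$ as the unique symmetric $\mathbf{X}$ with $\mathbf{X}\mathbf{S}_\perp=\mathbf{0}$ and $\mathbf{S}^\top\mathbf{Q}\mathbf{X}\mathbf{S}=\mathbf{I}$. You verify both conditions using only $\mathbf{C}\mathbf{G}^+\mathbf{G}=\mathbf{C}$ and $\mathbf{Y}\mathbf{Y}^+=\mathbf{V}\mathbf{V}^\top$. The expansion of $\mathbf{T}\mathbf{Q}$ and its reduction to $\mathbf{I}-\mathbf{R}^{-1}\mathbf{C}\mathbf{G}^+\mathbf{V}_\perp(\mathbf{C}\mathbf{V}_\perp)^\top$ are right.

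What your route buys: it is purely algebraic and limit-free, needing neither SMW nor a second spectral decomposition. It also avoids two steps the paper only sketches. One is that the limit taken with $\mathbf{V}_\perp$ coincides with the one taken with the kernel basis $\tilde{\mathbf{V}}_\perp$; the other is the limit of the block inverse. What the paper's route buys is interpretation. It shows the pseudoinverse formula as the limit of the usual information-form identity when the unconstrained directions $\col\mathbf{V}_\perp$ are given unbounded variance, and this is exactly what projects out $\col(\mathbf{C}\mathbf{V}_\perp)$.

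One small correction to how you frame Step 2. The parenthetical ``or more generally that $\mathbf{T}\mathbf{Q}-\mathbf{I}$ maps into $\col\mathbf{S}_\perp$'' is neither more general nor true in general. Since $\mathbf{T}=\mathbf{S}(\mathbf{S}^\top\mathbf{Q}\mathbf{S})^{-1}\mathbf{S}^\top$, it would force $\mathbf{S}^\top\mathbf{Q}\mathbf{S}_\perp=\mathbf{0}$, which fails already for $\mathbf{Y}=\mathbf{0}$, $\mathbf{C}=[1\;\,0]^\top$ and non-diagonal $\mathbf{R}$. What your computation actually establishes is the right-multiplication statement $(\mathbf{T}\mathbf{Q}-\mathbf{I})\mathbf{S}=\mathbf{0}$. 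Condition (b) follows from it by left-multiplying by $\mathbf{S}^\top$ and transposing, so simply drop the parenthetical.
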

\begin{proof}
	Since $\mathbf{R}\succ \mathbf{0}$, a vector $\mathbf{v}\in \R^m$ is an eigenvector with null eigenvalue of $\mathbf{Y} + \mathbf{C}^\top\mathbf{R}^{-1}\mathbf{C}$ if and only if $\mathbf{v} \in \col \mathbf{V}_\perp$ and $\mathbf{v}\in \ker \mathbf{C}$. Furthermore, that is equivalent to $\mathbf{v} = \mathbf{V}_\perp \mathbf{x}$ for some $\mathbf{x} \in \ker \mathbf{C}\mathbf{V}_\perp$. One concludes that the eigenspace of $\mathbf{Y} + \mathbf{C}^\top\mathbf{R}^{-1}\mathbf{C}$ associated with the null eigenvalue is $E_0 = \{\mathbf{v}\in \R^m : \mathbf{v} = \mathbf{V}_\perp \mathbf{x} \;\text{for some}\; \mathbf{x} \in \ker \mathbf{C}\mathbf{V}_\perp\}$. Therefore,  by the spectral theorem for real symmetric matrices, $\mathbf{Y} + \mathbf{C}^\top\mathbf{R}^{-1}\mathbf{C}$ admits a factorization
	\begin{equation}
		\!\!\mathbf{Y} \!+ \mathbf{C}^\top\mathbf{R}^{-1}\mathbf{C} = [\tilde{\mathbf{V}}\; \tilde{\mathbf{V}}_\perp]\!\!\begin{bmatrix}
			\tilde{\mathbf{D}} \!\!& \mathbf{0} \\ \mathbf{0}\!\!& \mathbf{0}_{(m-\tilde{r})\times (m-\tilde{r})}
		\end{bmatrix}\!\![\tilde{\mathbf{V}}\; \tilde{\mathbf{V}}_\perp]^\top\!\!,\!\!
	\end{equation} 
	 where  $\tilde{r} = \rank(\mathbf{Y} + \mathbf{C}^\top\mathbf{R}^{-1}\mathbf{C})$, $\tilde{\mathbf{V}}_\perp \in \R^{m\times (m-\tilde{r})}$ is such that $\col \tilde{\mathbf{V}}_\perp = E_0$, $\tilde{\mathbf{V}} \in \R^{m\times \tilde{r}}$ is such that the columns of $[\tilde{\mathbf{V}}\; \tilde{\mathbf{V}}_\perp]$ form an orthonormal basis for $\R^m$, and $\tilde{\mathbf{D}} \in \R^{\tilde{r} \times \tilde{r}}$ is a diagonal matrix. Furthermore, by the definition of $\tilde{\mathbf{V}}_\perp$, it follows that $\mathbf{C}\tilde{\mathbf{V}}_\perp = \mathbf{0}$. Then, one can write for all $\epsilon >0$
	\begin{equation*}
		\begin{split}
			\mathbf{C}(\mathbf{Y} \!\!+\! \mathbf{C}^\top\!\mathbf{R}^{-1}\mathbf{C})^+\mathbf{C}^\top \!\!&= [\mathbf{C}\tilde{\mathbf{V}}\; \mathbf{0}]\begin{bmatrix}
				\tilde{\mathbf{D}}^{-1} & \mathbf{0}\\ \mathbf{0} & \frac{1}{\epsilon}\mathbf{I}
			\end{bmatrix}[\mathbf{C}\tilde{\mathbf{V}}\; \mathbf{0}]^\top\\
		&=  \!\mathbf{\mathbf{C}}[\tilde{\mathbf{V}}\; \tilde{\mathbf{V}}_\perp]\!\!\begin{bmatrix}
			\tilde{\mathbf{D}}^{-1} & \mathbf{0}\\ \mathbf{0} &  \frac{1}{\epsilon}\mathbf{I}
		\end{bmatrix}\!\![\tilde{\mathbf{V}}\; \tilde{\mathbf{V}}_\perp]^\top\mathbf{C}^\top\\
	& = \!\mathbf{C}(\mathbf{Y}\! \!+ \!\mathbf{C}^\top\!\mathbf{R}^{-1}\mathbf{C} \!+\! \epsilon \tilde{\mathbf{V}}_\perp\! \tilde{\mathbf{V}}_\perp^\top )^{-1}\mathbf{C}^\top\!\!\!.
	\end{split}
	\end{equation*}
Therefore, since $\mathbf{C}(\mathbf{Y}+\mathbf{C}^\top \mathbf{R}^{-1}\mathbf{C})^+\mathbf{C}^\top = \mathbf{C}(\mathbf{Y} + \mathbf{C}^\top\mathbf{R}^{-1}\mathbf{C} +\epsilon \tilde{\mathbf{V}}_\perp \tilde{\mathbf{V}}_\perp^\top )^{-1}\mathbf{C}^\top$ holds for any $\epsilon >0$ and $\col\tilde{\mathbf{V}}_\perp = E_0 \subseteq \col \mathbf{\mathbf{V}_\perp}$, one can conclude that $\lim _{\epsilon \to 0}  \mathbf{C}(\mathbf{Y} + \mathbf{C}^\top\mathbf{R}^{-1}\mathbf{C} + \epsilon \mathbf{V}_\perp \mathbf{V}_\perp^\top )^{-1}\mathbf{C}$ exists and is equal to $\mathbf{C}(\mathbf{Y} + \mathbf{C}^\top\mathbf{R}^{-1}\mathbf{C})^+\mathbf{C}$. 
From the Sherman-Morrison-Woodbury formula \cite[Section~2.1.4]{GolubVanLoan2013} $ \mathbf{R}^{-1}-\mathbf{R}^{-1}\mathbf{C}(\mathbf{Y}+\mathbf{C}^\top \mathbf{R}^{-1}\mathbf{C} + \epsilon \mathbf{V}_\perp \mathbf{V}_\perp^\top)^{-1} \mathbf{C}^\top\mathbf{R}^{-1} = (\mathbf{R}+\mathbf{C}(\mathbf{Y}+\epsilon \mathbf{V}_\perp \mathbf{V}_\perp^\top)^{-1}\mathbf{C^\top})^{-1}$. Therefore,
\begin{equation}\label{eq:matrix_id_aux_1}
 	\begin{split}
 		\lim_{\epsilon \to 0} (\mathbf{R}+\mathbf{C}&(\mathbf{Y}+\epsilon \mathbf{V}_\perp \mathbf{V}_\perp^\top)^{-1}\mathbf{C^\top})^{-1} \\ = &	\mathbf{R}^{-1}-\mathbf{R}^{-1}\mathbf{C}(\mathbf{Y}+\mathbf{C}^\top \mathbf{R}^{-1}\mathbf{C})^+ \mathbf{C}^\top\mathbf{R}^{-1}.
 	\end{split}
 \end{equation}%
 Since $\mathbf{Y}$ admits the factorization \eqref{eq:factorization_Y} and, by the definition of $\mathbf{S}_\perp$, $\mathbf{S}^\top\mathbf{C}\mathbf{V}_\perp = \mathbf{0}$, it follows that
\begin{equation*}
	\begin{split}
		&(\mathbf{R}+\mathbf{C}(\mathbf{Y}+\epsilon \mathbf{V}_\perp \mathbf{V}_\perp^\top)^{-1}\mathbf{C}^\top)^{-1} \\
		&\!\!=\! 
			[\mathbf{S}\;\mathbf{S}_\perp]\!\!\left(\!\begin{bmatrix}\mathbf{S}^\top \\ \mathbf{S}_\perp^\top\end{bmatrix}\!(\mathbf{R}\!+\!\mathbf{C}(\mathbf{Y}\!+\!\epsilon \mathbf{V}_\perp \mathbf{V}_\perp^\top)^{-1}\mathbf{C}^\top)[\mathbf{S}\;\mathbf{S}_\perp]\!\!\right)^{\!\!-1}\!\!\begin{bmatrix}\mathbf{S}^\top \\ \mathbf{S}_\perp^\top\end{bmatrix} \\
			&\!\!= \![\mathbf{S}\;\mathbf{S}_\perp]\begin{bmatrix} \mathbf{M}_{11}
		 & 	\mathbf{M}_{12}\\
			\mathbf{M}_{12}^\top & \mathbf{M}_{22}
			\end{bmatrix}^{-1} [\mathbf{S}\;\mathbf{S}_\perp]^\top,
	\end{split} 	
\end{equation*}
where
\begin{equation*}
	\begin{split}
		\mathbf{M}_{11} &= 	\mathbf{S}^\top(\mathbf{R}+\mathbf{C}\mathbf{V}\mathbf{D}^{-1}(\mathbf{C}\mathbf{V})^\top)\mathbf{S}\\
		\mathbf{M}_{12}&=  \mathbf{S}^\top(\mathbf{R}+\mathbf{C}\mathbf{V}\mathbf{D}^{-1}(\mathbf{C}\mathbf{V})^\top)\mathbf{S}_{\perp}\\
		\mathbf{M}_{22} & = 	\mathbf{S}_{\perp}^\top(\mathbf{R}+\mathbf{C}\mathbf{V}\mathbf{D}^{-1}(\mathbf{C}\mathbf{V})^\top + \frac{1}{\epsilon}\mathbf{C}\mathbf{V}_{\perp}(\mathbf{C}\mathbf{V}_\perp)^\top )\mathbf{S}_{\perp}.
	\end{split}
\end{equation*}
As a result, the limit as $\epsilon \to 0$ of $(\mathbf{R}+\mathbf{C}(\mathbf{Y}+\epsilon \mathbf{V}_\perp \mathbf{V}_\perp^\top)^{-1}\mathbf{C}^\top)^{-1}$ exists and is equal to 
\begin{equation}\label{eq:matrix_id_aux_2}
	\begin{split}
		 \lim_{\epsilon \to 0} (\mathbf{R}+\mathbf{C}(\mathbf{Y}+\epsilon &\mathbf{V}_\perp \mathbf{V}_\perp^\top)^{-1}\mathbf{C^\top})^{-1} \\
		 & = 	[\mathbf{S}\;\mathbf{S}_\perp]\!\begin{bmatrix}
		 	\mathbf{M}_{11}^{-1} & \mathbf{0}\\
		 	\mathbf{0}&\mathbf{0}
		 \end{bmatrix}\![\mathbf{S}\;\mathbf{S}_\perp]^\top\\
	 &  = \mathbf{S}(\mathbf{S}^\top(\mathbf{R}+\mathbf{C}\mathbf{Y}\mathbf{C}^\top)\mathbf{S})^{-1}\mathbf{S}^\top.
	\end{split}
\end{equation}%
Comparing \eqref{eq:matrix_id_aux_1} and \eqref{eq:matrix_id_aux_2} concludes the proof.
\end{proof}
\subsection{Proof of Theorem~\ref{th:feas_iff}}\label{sec:proof_feas_iff}
A necessary condition for each of the two statements between which an equivalence is established in this result is that  $\mathbf{H}$ is full column rank. Therefore, for the remainder of the proof, $\mathbf{H}$ is assumed to be full column rank. The proof of the theorem relies on the following proposition.

\begin{proposition}\label{prop:help_feas_iff}
	For a given $\mathbf{Y}$, the LMI defined by \eqref{eq:OCI_Y_B}-\eqref{eq:OCI_Y_U} is feasible if and only if $\mathbf{H}^\top\mathbf{R}^{-1}\mathbf{H}-\mathbf{H}^\top\mathbf{R}^{-1}\mathbf{C}(\mathbf{Y}+\mathbf{C}^\top \mathbf{R}^{-1}\mathbf{C})^+ \mathbf{C}^\top\mathbf{R}^{-1}\mathbf{H}$ is full rank. Furthermore, if for a given $\mathbf{Y}$ the expression above is full rank, then it is also full rank if one replaces $\mathbf{Y}$ by any $\mathbf{Y}^\prime$ that satisfies $\col \mathbf{Y} \subseteq \col{\mathbf{Y}^\prime}$.
\end{proposition}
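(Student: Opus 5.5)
Let $\mathbf{G}(\mathbf{Y}) := \mathbf{H}^\top\mathbf{R}^{-1}\mathbf{H}-\mathbf{H}^\top\mathbf{R}^{-1}\mathbf{C}(\mathbf{Y}+\mathbf{C}^\top \mathbf{R}^{-1}\mathbf{C})^+ \mathbf{C}^\top\mathbf{R}^{-1}\mathbf{H}$. The plan is to handle the first statement by exhibiting feasible variables or extracting a contradiction through Proposition~\ref{prop:schur}. The second statement will follow from a kernel characterization of $\mathbf{G}(\mathbf{Y})$ via Proposition~\ref{prop:matrix_id}.

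\emph{Sufficiency.} Assume $\mathbf{G}(\mathbf{Y})$ is full rank. By Proposition~\ref{prop:matrix_id}, $\mathbf{G}(\mathbf{Y}) = \mathbf{H}^\top\mathbf{S}(\mathbf{S}^\top(\mathbf{R}+\mathbf{C}\mathbf{Y}^+\mathbf{C}^\top)\mathbf{S})^{-1}\mathbf{S}^\top\mathbf{H}\succeq\mathbf{0}$, so full rank gives $\mathbf{G}(\mathbf{Y})\succ\mathbf{0}$. Take $\mathbf{U} = \mathbf{H}^\top\mathbf{R}^{-1}\mathbf{C}(\mathbf{Y}+\mathbf{C}^\top\mathbf{R}^{-1}\mathbf{C})^+\mathbf{C}^\top\mathbf{R}^{-1}\mathbf{H}$ and $\mathbf{B} = \mathbf{G}(\mathbf{Y})^{-1}$. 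Then \eqref{eq:OCI_Y_U} holds by Proposition~\ref{prop:schur}\eqref{schur:Apsd}. Its range condition $\col(\mathbf{C}^\top\mathbf{R}^{-1}\mathbf{H})\subseteq\col(\mathbf{Y}+\mathbf{C}^\top\mathbf{R}^{-1}\mathbf{C})$ holds because $\col\mathbf{C}^\top\subseteq\col(\mathbf{Y}+\mathbf{C}^\top\mathbf{R}^{-1}\mathbf{C})$, exactly as argued in the proof of Lemma~\ref{lem:feas_lem}. Constraint \eqref{eq:OCI_Y_B} then holds by Proposition~\ref{prop:schur}\eqref{schur:BIl}.

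\emph{Necessity.} Suppose $(\mathbf{U},\mathbf{B})$ satisfy \eqref{eq:OCI_Y_B}--\eqref{eq:OCI_Y_U}. Proposition~\ref{prop:schur}\eqref{schur:BIr} gives $\mathbf{H}^\top\mathbf{R}^{-1}\mathbf{H}-\mathbf{U}\succ\mathbf{0}$. Proposition~\ref{prop:schur}\eqref{schur:Apsd}, applied to \eqref{eq:OCI_Y_U} with the blocks reordered so that $\mathbf{Y}+\mathbf{C}^\top\mathbf{R}^{-1}\mathbf{C}$ is the leading block, gives $\mathbf{U}\succeq \mathbf{H}^\top\mathbf{R}^{-1}\mathbf{C}(\mathbf{Y}+\mathbf{C}^\top\mathbf{R}^{-1}\mathbf{C})^+\mathbf{C}^\top\mathbf{R}^{-1}\mathbf{H}$. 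Combining the two, $\mathbf{G}(\mathbf{Y})\succeq \mathbf{H}^\top\mathbf{R}^{-1}\mathbf{H}-\mathbf{U}\succ\mathbf{0}$, which is \eqref{eq:ineq_invK}. Hence $\mathbf{G}(\mathbf{Y})$ is full rank.

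\emph{Monotonicity in $\col\mathbf{Y}$.} From the representation $\mathbf{G}(\mathbf{Y})=\mathbf{H}^\top\mathbf{S}\mathbf{N}^{-1}\mathbf{S}^\top\mathbf{H}$ with $\mathbf{N}\succ\mathbf{0}$, we get $\ker\mathbf{G}(\mathbf{Y})=\ker(\mathbf{S}^\top\mathbf{H})$. Full rank is therefore equivalent to $\col\mathbf{H}\cap\col(\mathbf{C}\mathbf{V}_\perp)=\{\mathbf{0}\}$, since $\ker\mathbf{S}^\top=\col\mathbf{S}_\perp=\col(\mathbf{C}\mathbf{V}_\perp)$ and $\mathbf{H}$ has full column rank. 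Here $\col\mathbf{V}_\perp=\ker\mathbf{Y}=(\col\mathbf{Y})^\perp$. If $\col\mathbf{Y}\subseteq\col\mathbf{Y}'$, then $\ker\mathbf{Y}'\subseteq\ker\mathbf{Y}$, so $\col(\mathbf{C}\mathbf{V}'_\perp)\subseteq\col(\mathbf{C}\mathbf{V}_\perp)$. The trivial-intersection condition is inherited by the smaller subspace, and this proves the second claim.

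\emph{Main obstacle.} The delicate step is the kernel identification $\ker(\mathbf{S}^\top\mathbf{H})$. It relies on Proposition~\ref{prop:matrix_id} and on $\mathbf{S}^\top(\mathbf{R}+\mathbf{C}\mathbf{Y}^+\mathbf{C}^\top)\mathbf{S}\succ\mathbf{0}$. Both follow from $\mathbf{R}\succ\mathbf{0}$ together with $\mathbf{C}\mathbf{Y}^+\mathbf{C}^\top\succeq\mathbf{0}$ and the orthonormality of $\mathbf{S}$. Some care is needed that $\mathbf{S}_\perp$ spans exactly $\col(\mathbf{C}\mathbf{V}_\perp)$, including the degenerate cases where $\mathbf{S}$ or $\mathbf{S}_\perp$ is empty.
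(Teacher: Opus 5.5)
Your proof is correct and follows essentially the same route as the paper. The equivalence goes through Proposition~\ref{prop:schur}(ii)--(iv) with the choice $\mathbf{U}=\mathbf{H}^\top\mathbf{R}^{-1}\mathbf{C}(\mathbf{Y}+\mathbf{C}^\top\mathbf{R}^{-1}\mathbf{C})^+\mathbf{C}^\top\mathbf{R}^{-1}\mathbf{H}$, $\mathbf{B}=\mathbf{G}(\mathbf{Y})^{-1}$, and the column-space monotonicity uses the representation from Proposition~\ref{prop:matrix_id}. Your two additions make rigorous steps the paper only sketches: you observe that $\mathbf{G}(\mathbf{Y})\succeq\mathbf{0}$, so full rank genuinely gives positive definiteness, and you identify $\ker\mathbf{G}(\mathbf{Y})=\ker(\mathbf{S}^\top\mathbf{H})$, which turns the paper's brief ``does not change its rank'' step into a clean subspace-inclusion argument.
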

\begin{proof}
	We start with the first statement. In one direction, if the LMI defined by \eqref{eq:OCI_Y_B}-\eqref{eq:OCI_Y_U} is feasible then there exist $\mathbf{U}$ and $\mathbf{B}$ such that, by Proposition~\ref{prop:schur}\eqref{schur:BIr}, $\mathbf{H}^\top\mathbf{R}^{-1}\mathbf{H}-\mathbf{U} \succ \mathbf{0}$ and, by Proposition~\ref{prop:schur}\eqref{schur:Apsd}, $\mathbf{U}\succeq 	\mathbf{H}^\top\mathbf{R}^{-1}\mathbf{C}(\mathbf{Y}+\mathbf{C}^\top \mathbf{R}^{-1}\mathbf{C})^+ \mathbf{C}^\top\mathbf{R}^{-1}\mathbf{H}$, thus $\mathbf{H}^\top\mathbf{R}^{-1}\mathbf{H}-\mathbf{H}^\top\mathbf{R}^{-1}\mathbf{C}(\mathbf{Y}+\mathbf{C}^\top \mathbf{R}^{-1}\mathbf{C})^+ \mathbf{C}^\top\mathbf{R}^{-1}\mathbf{H} \succ \mathbf{0}$.  In the other direction, if $\mathbf{H}^\top\mathbf{R}^{-1}\mathbf{H}-\mathbf{H}^\top\mathbf{R}^{-1}\mathbf{C}(\mathbf{Y}+\mathbf{C}^\top \mathbf{R}^{-1}\mathbf{C})^+ \mathbf{C}^\top\mathbf{R}^{-1}\mathbf{H} \succ \mathbf{0}$, one can choose $\mathbf{U} = \mathbf{H}^\top\mathbf{R}^{-1}\mathbf{C}(\mathbf{Y}+\mathbf{C}^\top \mathbf{R}^{-1}\mathbf{C})^+ \mathbf{C}^\top\mathbf{R}^{-1}\mathbf{H}$, thus  $\mathbf{H}^\top\mathbf{R}^{-1}\mathbf{H}-\mathbf{U} \succ \mathbf{0}$ and choosing $\mathbf{B} = (\mathbf{H}^\top\mathbf{R}^{-1}\mathbf{H}-\mathbf{U} )^{-1}$, \eqref{eq:OCI_Y_B} is satisfied by Proposition~\ref{prop:schur}(\ref{schur:BIl}). Moreover, since $\mathbf{Y}+\mathbf{C}^\top \mathbf{R}^{-1}\mathbf{C} \succeq \mathbf{0}$, $\col{\mathbf{C}^\top\mathbf{R}^{-1}\mathbf{H}}\subseteq \col{(\mathbf{Y}+\mathbf{C}^\top \mathbf{R}^{-1}\mathbf{C}})$, and $\mathbf{U}- \mathbf{H}^\top\mathbf{R}^{-1}\mathbf{C}(\mathbf{Y}+\mathbf{C}^\top \mathbf{R}^{-1}\mathbf{C})^+ \mathbf{C}^\top\mathbf{R}^{-1}\mathbf{H} = \mathbf{0} \succeq  \mathbf{0}$, by Proposition~\ref{prop:schur}\eqref{schur:Apsd} it follows that \eqref{eq:OCI_Y_U} is satisfied. To prove the second statement, recall, from Proposition~\ref{prop:matrix_id}, which is in the proof of Theorem~\ref{th:equiv_OCI_prob} in Appendix~\ref{sec:proof_lem_equiv_OCI_prob}, that
	\begin{equation}\label{eq:mp2S_feas}
		\begin{split}
		&\mathbf{H}^\top\mathbf{R}^{-1}\mathbf{H}-\mathbf{H}^\top\mathbf{R}^{-1}\mathbf{C}(\mathbf{Y}\!+\!\mathbf{C}^\top \mathbf{R}^{-1}\mathbf{C})^+ \mathbf{C}^\top\mathbf{R}^{-1}\mathbf{H}\!\!\\
		\!\!=\;& \mathbf{H}^\top\mathbf{S}(\mathbf{S}^\top(\mathbf{R} + \mathbf{C}\mathbf{Y}^{+}\mathbf{C}^\top)\mathbf{S})^{-1}\mathbf{S}^\top\mathbf{H},
	\end{split}
	\end{equation}
	where $\mathbf{S}$ is defined such that its columns form an orthonormal basis for $(\col (\mathbf{C}\mathbf{V}_\perp))^\perp$ and $\mathbf{V}_\perp$ is defined such that its columns form an orthonormal basis for $(\col{\mathbf{Y}})^\perp$. Therefore, given a $\mathbf{Y}$ for which \eqref{eq:mp2S_feas} is full rank, any $\mathbf{Y}^\prime$ such that $\col \mathbf{Y} \subseteq \col{\mathbf{Y}^\prime}$ implies $\col \mathbf{S} \subseteq \col \mathbf{S}^\prime $, so replacing $\mathbf{Y}$ with $\mathbf{Y}^\prime$ in \eqref{eq:mp2S_feas} does not change its rank.
\end{proof}

An equivalence between the feasibility of the OCI problem \eqref{eq:OCI_orig_prob} and \eqref{eq:OCI_Y} was already established in Lemma~\ref{lem:feas_lem}. Therefore, to establish the theorem, one only needs to show the equivalence between Condition~\ref{cond:feas}, i.e., $\mathbf{H}^\top\mathbf{R}^{-1}\mathbf{H}-\mathbf{H}^\top\mathbf{R}^{-1}\mathbf{C}(\mathbf{W}^\top\mathbf{W}+\mathbf{C}^\top \mathbf{R}^{-1}\mathbf{C})^+ \mathbf{C}^\top\mathbf{R}^{-1}\mathbf{H}$ being full rank, and the feasibility of \eqref{eq:OCI_Y} using Proposition~\ref{prop:help_feas_iff}.
Recall, from Proposition~\ref{prop:big_bound_on_P} in the proof of Lemma~\ref{lem:feas_lem}, that $\mathbf{P}^{-1} \succeq \mathbf{W}^\top\diag(\mathbf{X}_1, \mathbf{X}_2, \ldots,\mathbf{X}_M)^{-1}\mathbf{W}/M$ for all $\mathbf{P}\in \Pcal$.
On the one hand, if Condition~\ref{cond:feas} holds, then $\mathbf{Y} = \mathbf{W}^\top\diag(\mathbf{X}_1, \mathbf{X}_2, \ldots,\mathbf{X}_M)^{-1}\mathbf{W}/M$ has the same column space as $\mathbf{W}^\top$. Therefore, by Proposition~\ref{prop:help_feas_iff}, $\mathbf{H}^\top\mathbf{R}^{-1}\mathbf{H}-\mathbf{H}^\top\mathbf{R}^{-1}\mathbf{C}(\mathbf{Y}+\mathbf{C}^\top \mathbf{R}^{-1}\mathbf{C})^+ \mathbf{C}^\top\mathbf{R}^{-1}\mathbf{H}$ is full rank. Additionally, $\mathbf{Y} = \mathbf{W}^\top\diag(\mathbf{X}_1, \mathbf{X}_2, \ldots,\mathbf{X}_M)^{-1}\mathbf{W}/M \preceq \mathbf{P}^{-1}$ for all $\mathbf{P}\in \Pcal$ so, by Proposition~\ref{prop:help_feas_iff}, \eqref{eq:OCI_Y} is feasible. 
On the other hand, if \eqref{eq:OCI_Y} is feasible, then there exists $\mathbf{Y}$ such that $\mathbf{Y} \preceq \mathbf{P}^{-1}$ for all $\mathbf{P}\in \Pcal$ and, by Proposition~\ref{prop:help_feas_iff}, $\mathbf{H}^\top\mathbf{R}^{-1}\mathbf{H}-\mathbf{H}^\top\mathbf{R}^{-1}\mathbf{C}(\mathbf{Y}+\mathbf{C}^\top \mathbf{R}^{-1}\mathbf{C})^+ \mathbf{C}^\top\mathbf{R}^{-1}\mathbf{H}$ is full rank. Since the bounded components of $\mathbf{P}$ are characterized by the row space of $\mathbf{W}$, it follows that $\col{\mathbf{Y}}\subseteq \col{\mathbf{W}^\top\mathbf{W}}$, therefore, by Proposition~\ref{prop:help_feas_iff}, $\mathbf{H}^\top\mathbf{R}^{-1}\mathbf{H}-\mathbf{H}^\top\mathbf{R}^{-1}\mathbf{C}(\mathbf{W}^\top\mathbf{W}+\mathbf{C}^\top \mathbf{R}^{-1}\mathbf{C})^+ \mathbf{C}^\top\mathbf{R}^{-1}\mathbf{H}$ is full rank, i.e., Condition~\ref{cond:feas} holds.

\subsection{Proof of Corollary~\ref{cor:OCI_Kahan_optimal}}\label{sec:proof_cor_OCI_Kahan_optimal}

Notice that it is possible to decouple the optimization of $\boldsymbol{\omega}$ from the remaining decision variables in \eqref{eq:OCI_kahan}. Specifically, if one fixes parameter $\boldsymbol{\omega}$, one can approach \eqref{eq:OCI_kahan} resorting to Theorem~\ref{th:equiv_OCI_prob}. Then, choosing a parameter $\boldsymbol{\omega}$ that yields the minimum objective at the optimizer decouples the problem. Therefore, the first statement of the corollary follows immediately from Theorem~\ref{th:equiv_OCI_prob}. To prove the second statement, on the one hand, notice that if a triple $(\mathbf{U},\mathbf{B},\boldsymbol{\omega})$ is feasible for \eqref{eq:OCI_Kahan_optimal}, then the triple  $(\mathbf{U},\mathbf{B},\mathbf{Y})$ with  $\mathbf{Y} = \sum_{b=1}^M\boldsymbol{\omega}_b\mathbf{Y}_b$ is immediately feasible for \eqref{eq:OCI_Y}, and, by Theorems~\ref{th:equiv_OCI_prob} and~\ref{th:feas_iff}, it follows that Condition~\ref{cond:feas} holds. On the other hand, let Condition~\ref{cond:feas} hold, i.e., $\mathbf{H}^\top\mathbf{R}^{-1}\mathbf{H}-\mathbf{H}^\top\mathbf{R}^{-1}\mathbf{C}(\mathbf{W}^\top\mathbf{W}+\mathbf{C}^\top \mathbf{R}^{-1}\mathbf{C})^+ \mathbf{C}^\top\mathbf{R}^{-1}\mathbf{H}$ is full rank. Then for $\boldsymbol{\omega}_b = 1/M$ for all $b = 1,2,\ldots, M$, $\mathbf{Y} = \sum_{b=1}^M\boldsymbol{\omega}_b\mathbf{Y}_b$ has the same column space as $\mathbf{W}^\top\mathbf{W}$. Therefore, by Proposition~\ref{prop:help_feas_iff}, $\mathbf{H}^\top\mathbf{R}^{-1}\mathbf{H}-\mathbf{H}^\top\mathbf{R}^{-1}\mathbf{C}(\mathbf{Y}+\mathbf{C}^\top \mathbf{R}^{-1}\mathbf{C})^+ \mathbf{C}^\top\mathbf{R}^{-1}\mathbf{H}$ is full rank and there is a pair $(\mathbf{U},\mathbf{B})$ that satisfies the LMIs \eqref{eq:OCI_Y_B}-\eqref{eq:OCI_Y_U} for this choice of $\mathbf{Y}$. It follows immediately that $(\mathbf{U},\mathbf{B},\boldsymbol{\omega})$  is feasible for \eqref{eq:OCI_Kahan_optimal}.


\section*{References}
\bibliographystyle{IEEEtran}
\bibliography{../../../../Papers/_bib/references-c.bib,../../../../Publications/bibliography/parsed-minimal/bibliography.bib}

\begin{IEEEbiography}[{\includegraphics[width=1in,height=1.25in,clip,keepaspectratio]{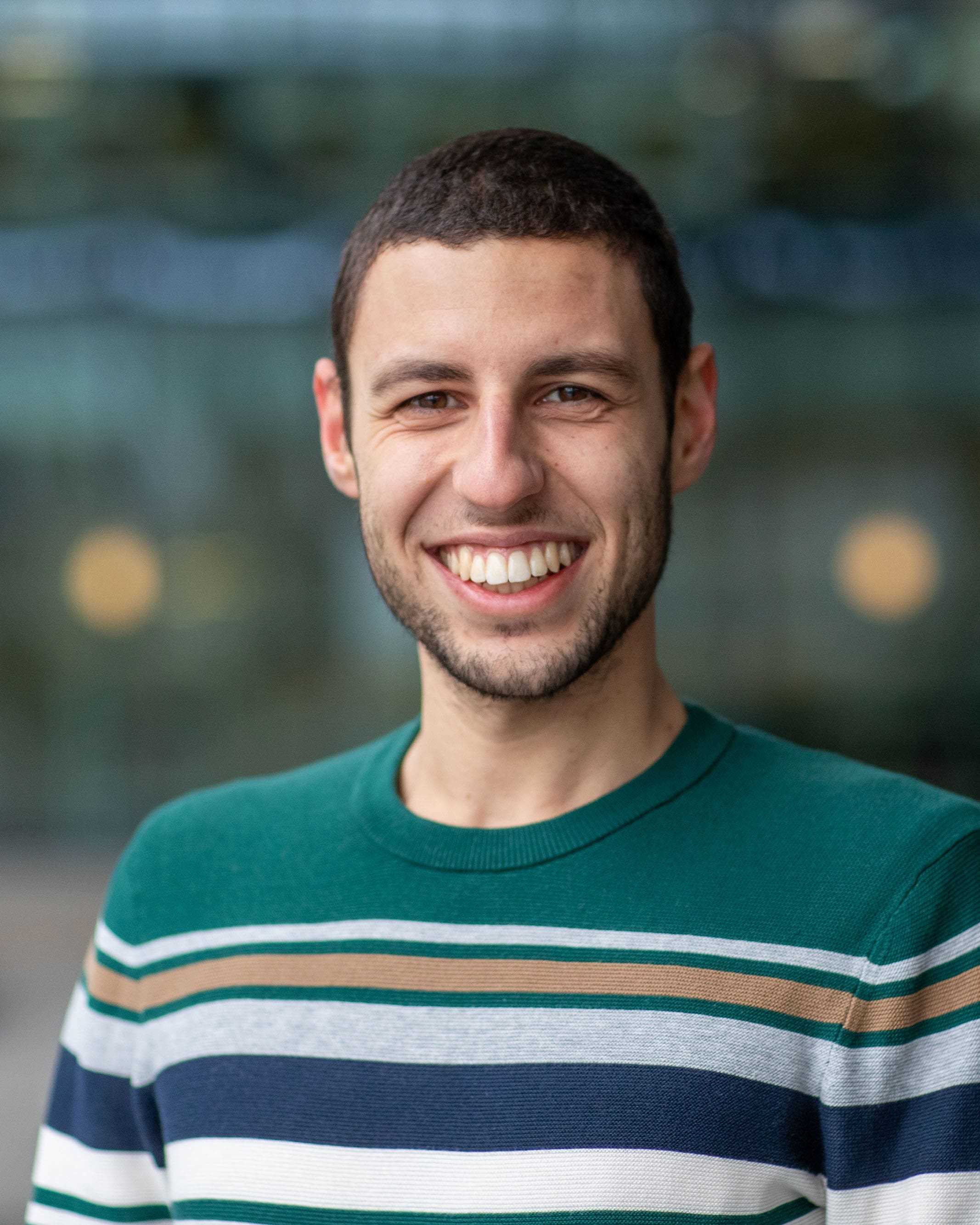}}]{Leonardo Pedroso} (Graduate Student Member, IEEE) received the M.Sc.\ degree in aerospace engineering from Instituto Superior T\'ecnico (IST), University of Lisbon (ULisboa), Portugal, in 2022. From 2019 to 2022, he held a research scholarship with the Institute for Systems and Robotics (ISR), IST, ULisboa. Since 2023, he is working toward the Ph.D.\ degree in mechanical engineering with the Control Systems Technology section, Eindhoven University of Technology, The Netherlands. Since 2024, he is working toward a second Ph.D.\ degree in aerospace engineering with the ISR, IST, ULisboa, Portugal.  He was the recipient of the 2024 Best M.Sc.\ Thesis Award by the Portuguese Automatic Control Association. His current research interests include mean-field games and distributed control and estimation of ultra large-scale systems.\end{IEEEbiography}

\begin{IEEEbiography}[{\includegraphics[width=1in,height=1.25in,clip,keepaspectratio]{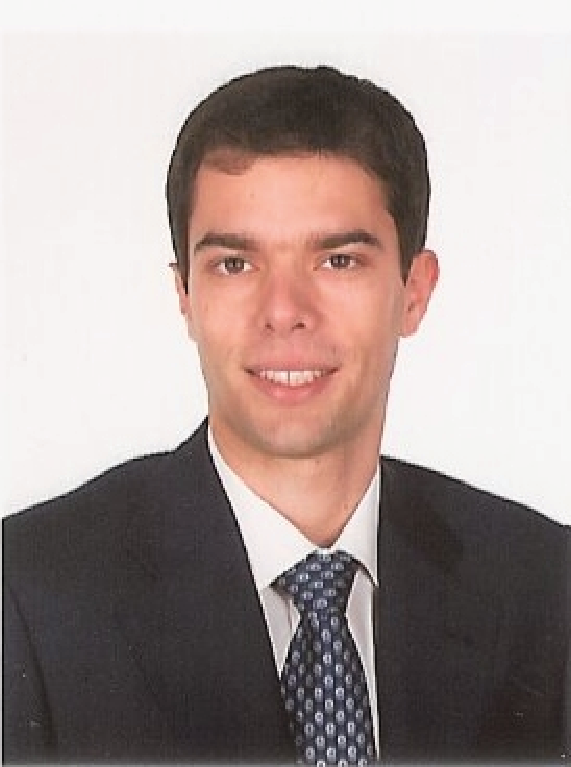}}]{Pedro Batista} (Senior Member, IEEE) received the Licenciatura and Ph.D. degrees in Electrical and Computer Engineering from Instituto Superior Técnico (IST), Lisbon, Portugal, in 2005 and 2010, respectively. From 2004 to 2006, he was a Monitor with the Department of Mathematics, IST. Since 2012, he has been with the Department of Electrical and Computer Engineering, IST, where he is currently Associate Professor. His research interests include navigation and control of single and multiple autonomous vehicles. Dr. Batista was the recipient of the Diploma de Mérito twice during his graduation and his Ph.D. dissertation was awarded the Best Robotics Ph.D. Thesis Award by the Portuguese Society of Robotics. He was also awarded a ULisboa/CGD Scientific Award in 2022 by the Universidade de Lisboa.\end{IEEEbiography}

\begin{IEEEbiography}[{\includegraphics[width=1in,height=1.25in,clip,keepaspectratio]{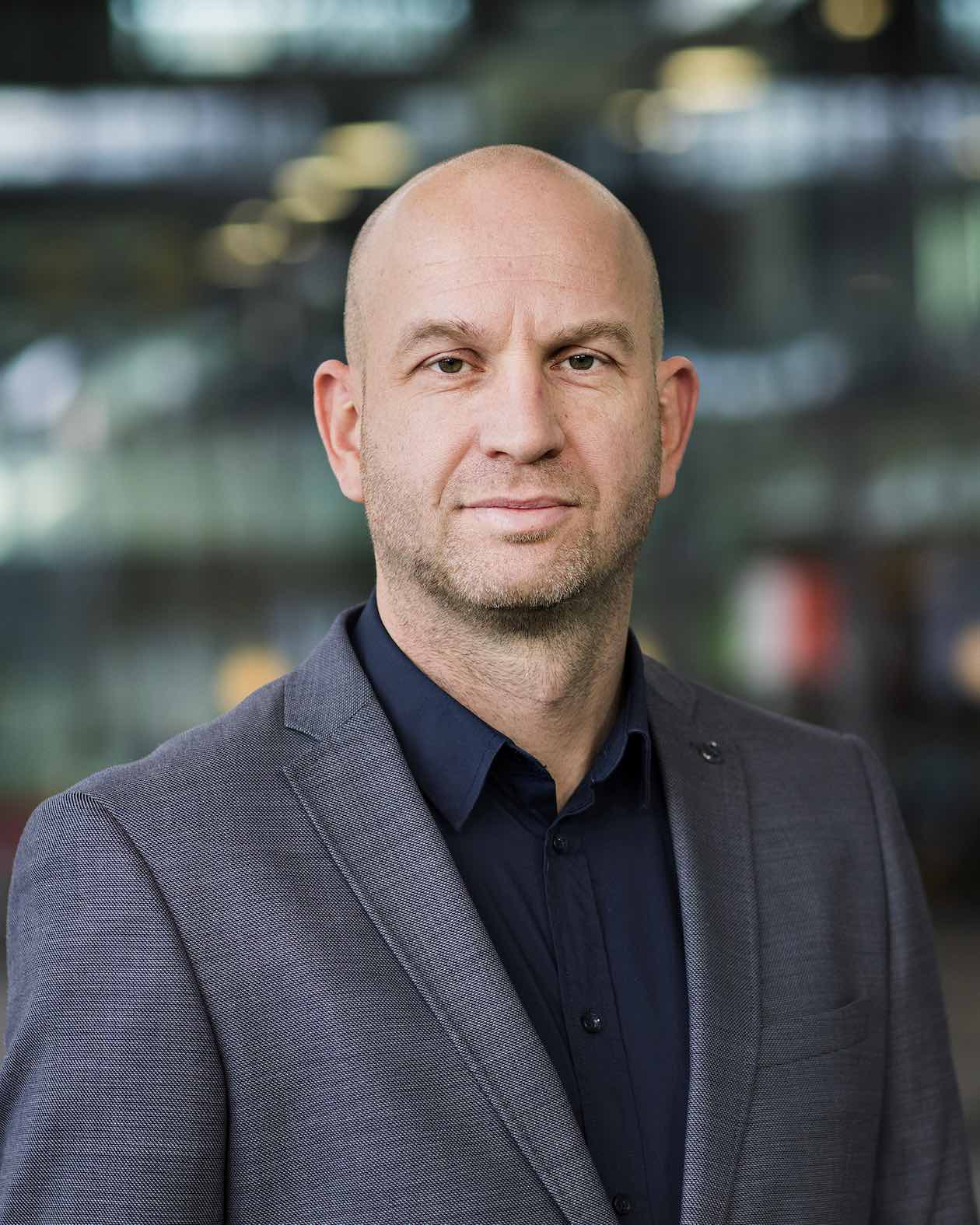}}]{W.P.M.H. Heemels} (Fellow, IEEE)  received M.Sc. (mathematics) and Ph.D. (EE, control theory) degrees (summa cum laude) from the Eindhoven University of Technology (TU/e) in 1995 and 1999, respectively. From 2000 to 2004, he was with the Electrical Engineering Department, TU/e, as an assistant professor, and from 2004 to 2006 with the Embedded Systems Institute (ESI) as a Research Fellow. Since 2006, he has been with the Department of Mechanical Engineering, TU/e, where he is currently a Full Professor and Vice-Dean. He held visiting professor positions at ETH, Switzerland (2001), UCSB, USA (2008) and University of Lorraine, France (2020). He is a Fellow of the IEEE and IFAC, and was the chair of the IFAC Technical Committee on Networked Systems (2017-2023). He served/s on the editorial boards of Automatica,  Nonlinear Analysis: Hybrid Systems (NAHS), Annual Reviews in Control, and IEEE Transactions on Automatic Control, and is the Editor-in-Chief of NAHS as of 2023. He was a recipient of a personal VICI grant awarded by NWO (Dutch Research Council) and recently obtained an ERC Advanced Grant. He was the recipient of the 2019 IEEE L-CSS Outstanding Paper Award and the Automatica Paper Prize 2020-2022. He was elected for the IEEE-CSS Board of Governors (2021-2023).   His current research includes hybrid and cyber-physical systems, networked and event-triggered control systems and model predictive control and their applications. \end{IEEEbiography}

\end{document}